\let\emptyset\varnothing
\let\epsilon\varepsilon
\def\abs#1{\ensuremath{\lvert #1 \rvert}} 
\def\myparagraph#1{\noindent\textbf{#1}~}
\newcommand{\nat}{\mathbb N}
\newcommand{\tuple}[1]{\langle #1 \rangle}
\newcommand{\Cyl}{\mathrm{Cyl}}
\newcommand*{\pr}{\mathbb{P}}
\newcommand*{\expect}{\mathbb{E}}
\newcommand\restr[2]{\ensuremath{\left.#1\right\rvert_{#2}}}
\newcommand\Supp{\textrm{\sf Supp}}
\newcommand\almostsure{\textrm{\sf AS}}
\newcommand\limitsure{\textrm{\sf LS}}
\newcommand\ASParity{\textrm{\sf AS\_Parity}}
\newcommand\Safe{\textrm{\sf Safe}}
\newcommand\Reach{\textrm{\sf Reach}}
\newcommand\Parity{\textrm{\sf Parity}}
\newcommand\D{\mathcal{D}}
\renewcommand\H{{\sf Hist}}
\newcommand\R{{\sf Hist^{\omega}}}
\newcommand\ecs{\textrm{\sf EC}}
\newcommand\Inf{\textrm{\sf Inf}}
\newcommand\mecs{\textrm{\sf MEC}}
\newcommand{\last}{\operatorname{\mathsf{last}}}
\newcommand\Val{\ensuremath{{\textrm{\sf Val}}}}
\def\loseabsorb{q_{\sf lose}}
\def\winabsorb{q_{\sf win}}
\def\Good{\textsf{Good}}
\newcommand\straa{\sigma}
\def\Act{A}
\newcommand*{\variance}{\mathbb{V}}
\newcommand*{\cov}{\textrm{\sf Cov}}
\def\Choice{\textrm{\sf Choice}}
\def\wgt{\textrm{\sf wgt}}
\def\occ{\textrm{\sf occ}}
\newcommand\freshaction[1]{F_{#1}}
\newcommand\purge[1]{\textrm{\sf purge}(#1)}
\newcommand\purgeinv[1]{\textrm{\sf purge}^{-1}(#1)}
\def\actionstay{\textrm{\sf stay}}
\title{The Value Problem for Multiple-Environment MDPs with Parity Objective}
\author{Krishnendu Chatterjee}{IST Austria}{}{https://orcid.org/0000-0002-4561-241X}{ERC CoG 863818 (ForM-SMArt) and Austrian Science Fund (FWF) 10.55776/COE12;}%
\author{Laurent Doyen}{CNRS \& LMF, ENS Paris-Saclay, France}{}{https://orcid.org/0000-0003-3714-6145}{}%
\author{Jean-Fran\c cois Raskin}{Universit\'e Libre de Bruxelles, Belgium}{}{https://orcid.org/0000-0002-3673-1097}{PDR Weave project FORM-LEARN-POMDP funded by FNRS and DFG, and the support of the Fondation ULB;}%
\author{Ocan Sankur}{Universit\'e de Rennes, CNRS, Inria, France \& Mitsubishi Electric R\&D Centre Europe, France}{}{https://orcid.org/0000-0001-8146-4429}{ANR BisoUS (ANR-22-CE48-0012) and ANR EpiRL (ANR-22-CE23-0029).}%
\authorrunning{K. Chatterjee \textit{et al.}}
\keywords{Markov decision processes, imperfect information, randomized strategies, limit-sure winning}
\begin{document}
\maketitle

\begin{abstract}
We consider multiple-environment Markov decision processes (MEMDP), which consist 
of a finite set of MDPs over the same state space, representing different scenarios
of transition structure and probability. 
The value of a strategy is the probability to satisfy the objective, here a parity objective,
in the worst-case scenario, and the value of an MEMDP is the supremum of the values
achievable by a strategy.

We show that deciding  whether the value is~$1$ is a PSPACE-complete problem, 
and even in~P when the number of environments is fixed,
along with new insights to the almost-sure winning problem, which is to decide if there exists 
a strategy with value~$1$. Pure strategies are sufficient for theses problems,
whereas randomization is necessary in general when the value is smaller than~$1$. 
We present an algorithm to approximate the value, running in double exponential space.
Our results are in contrast to the related model of partially-observable MDPs
where all these problems are known to be undecidable.
\end{abstract}

\section{Introduction}\label{sec:intro}

We consider Markov decision processes (MDP), a well-established state-transition model 
for decision making in a stochastic environment. 
The decisions involve choosing an action from a finite set, which
together with the current state determine a probability distribution
over the successor state. The question of constructing a strategy that maximizes
the probability to satisfy a logical specification is a classical 
synthesis problem with a wide range of applications~\cite{Puterman,HBMDP,BK08,BvD17}.

The stochastic transitions in MDPs capture the uncertainty in the effect of
an action. Another form of uncertainty arises when the states are (partially) 
hidden to the decision-maker, as in the classical model of partially-observable
MDPs (POMDP)~\cite{Ast65,PT87}. Recently, an alternative model of MDPs with partial 
information has attracted attention, the multiple-environment MDPs (MEMDP)~\cite{RS14}, 
which consists of a finite set of MDPs over the same state space.
Each MDP represents a possible environment, but the decision-maker
does not know in which environment they are operating. 
The synthesis problem is then to construct a single strategy 
that can be executed in all environments to ensure the objective be satisfied
independently of the environment.
This model is natural in applications where the structure of the transitions and their probability 
are uncertain such as in robust planning or population models with individual variability~\cite{BS19,CCK0R20,BSSJ23,vdVJJ23,SVJ24}.

In contrast to what previous work suggest, the two models of POMDP and MEMDP
are (syntactically) incomparable: the choice of the environment in MEMDP is 
adversarial, which cannot be expressed in a POMDP, and the partial observability 
of POMDP can occur throughout the execution, whereas the uncertainty in MEMDP 
is only initial. In particular, MEMDP are \emph{not} a subclass of POMDP
since pure strategies are sufficient in POMDPs~\cite{Mar98,CDGH15} while 
randomization is necessary in general in MEMDPs~\cite[Lemma~3]{RS14}.

The synthesis problem has been considered for traditional $\omega$-regular
objectives, defined as parity~\cite{RS14} or Rabin~\cite{SVJ24} condition,
in three variants: the almost-sure problem is to decide whether there exists a strategy 
that is winning with probability~$1$ in all environments, the limit-sure
problem is to decide whether, for every $\epsilon > 0$, there exists a strategy 
that is winning with probability at least~$1-\epsilon$ in all environments,
and the gap problem, which is an approximate version of the quantitative problem to decide, given a threshold $0 < \lambda \leq 1$,
whether there exists a strategy 
that is winning with probability at least~$\lambda$ in all environments. 
The limit-sure problem is also called the value-$1$ problem, 
where the value of an MEMDP is defined as the supremum of the values
achievable by a strategy. The value is $1$ if and only if the answer to 
the limit-sure problem is $\textsf{Yes}$.

\begin{figure}[!t]
\hrule
\begin{center}
    \input{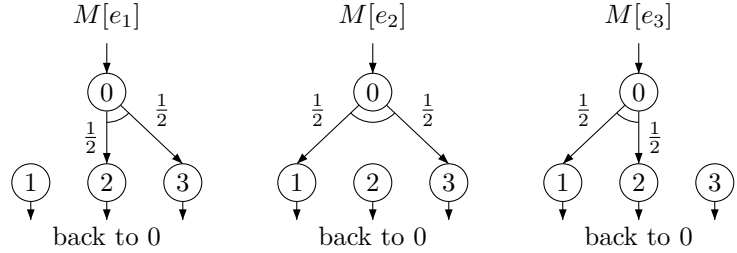}
\end{center} 
\hrule
 \caption{Multiple-environment MDP for the missing card (over 3-card deck). 
 Each $M[e_i]$ represents the behavior of the MEMDP under environment $e_i$ where card $i$ has been removed.
 The environment can be identified almost-surely (with probability~$1$). \label{fig:as3}}
\end{figure}

\begin{figure}[!t]
\hrule
\begin{center}
    \input{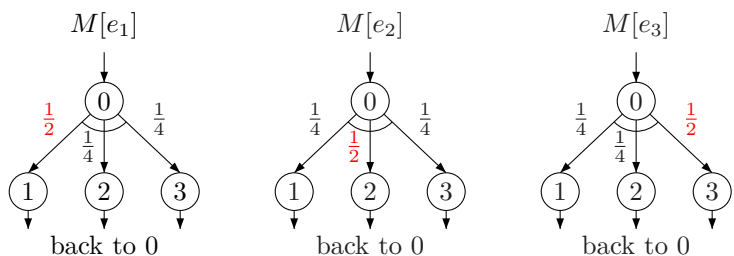}
\end{center} 
\hrule
 \caption{Multiple-environment MDP for the duplicate card (over 3-card deck). 
 Each $M[e_i]$ represents the behavior of the MEMDP under environment $e_i$ where card $i$ has been duplicated.
 The environment can be identified limit-surely (with probability arbitrarily close to~$1$).\label{fig:ls3}}
\end{figure}

A classical example to illustrate the difference between almost-sure and limit-sure 
winning is to consider an environment consisting of $51$ cards, obtained by removing 
one card from a standard $52$-card deck (see \figurename~\ref{fig:as3}). The decision-maker has two possible
actions: the action \emph{sample} reveals the top card of the deck and then 
shuffles the cards (including the top card, which remains in the deck);
the action \emph{guess($x$)}, where $1 \leq x \leq 52$ is a card, 
stops the game with a win if $x$ is the missing card, and a lose otherwise. 
If no guess is ever made, the game is also losing.
An almost-sure winning strategy is to sample until each of the $51$ cards has been 
revealed at least once, then to make a correct guess. It is easy to see that
the strategy wins with probability $1$, even if there exist scenarios (though with probability~$0$)
where some of the $51$ cards are never revealed and no correct guess is made.
Hence the MEMDP is almost-sure winning, and we say that it is not sure winning
because a losing scenario exists in every strategy.
Consider now an environment consisting of $53$ cards, obtained by adding
one duplicate card $c$ to the standard deck, and the same action set and rules of the game,
except that a correct guess is now the duplicate card $x = c$ (see \figurename~\ref{fig:ls3}).
The strategy that samples for a long time and then makes a guess based on the 
most frequent card wins with probability close to $1$ -- and closer to $1$
as the sampling time is longer -- but not equal to $1$, since no matter how long is the sampling phase
there is always a nonzero probability that the duplicate card does not have the highest frequency
at the time of the guess. In this case, the MEMDP is limit-sure winning, but not almost-sure winning.
Intuitively, the solution of almost-sure winning relies on the analysis of
\emph{revealing} transitions, which give a sure information allowing to exclude 
some environment (seeing card $c$ is a guarantee that we are not in the environment
where $c$ is missing); the solution of limit-sure winning involves \emph{learning} 
by sampling, which also allows to exclude some environment, but possibly with a 
nonzero probability to be mistaken. 

For MEMDPs with two environments, it is known 
that the almost-sure and limit-sure problem for parity objectives are solvable in 
polynomial time~\cite[Theorem~33, Theorem~40]{RS14}, while the gap problem is decidable in 2-fold exponential 
space~\cite[Theorem~30]{RS14} and is NP-hard, even for acyclic MEMDPs with two environments~\cite[Theorem~26]{RS14}. 
With an arbitrary number of environments,
the almost-sure problem becomes PSPACE-complete~\cite[Theorem~41]{SVJ24}, even for reachability objectives~\cite[Lemma~11]{vdVJJ23}.  
 For comparison, in the close model of POMDP, the decidability frontier lies between limit-sure winning and almost-sure winning: with reachability objectives, the almost-sure problem is decidable (and EXPTIME-complete~\cite{BGB12}), whereas the limit-sure problem is undecidable~\cite{GO10}. The gap problem is also undecidable~\cite{MHC03}. 

In this paper, we consider the limit-sure problem and the gap problem 
for parity objectives in MEMDPs with an arbitrary number of environments.
Our main result is to show that $(a)$ the limit-sure problem is PSPACE-complete and can be solved in polynomial time for a fixed number of environments, 
and $(b)$ the gap problem can be solved in double exponential space.
Correspondingly,
our algorithms significantly extend the solutions that are known for two environments, relying on a non-trivial recursive (inductive) analysis.

The PSPACE upper bound is obtained by a characterization
of limit-sure winning for a subclass of MEMDPs, in terms of almost-sure 
winning conditions (Lemma~\ref{lemma:ls-parity-charac}). A pre-processing phase transforms general MEMDPs into
the subclass. We present a PSPACE algorithm to compute the pre-processing and
verify the characterization. 
Since our algorithm relies on almost-sure winning, we also give a new characterization 
of almost-sure winning for parity objectives in MEMDPs (Lemma~\ref{lem:char-as-parity}), 
which gives a conceptually simple alternative algorithm to the known solution~\cite{SVJ24}.
The PSPACE lower bound straightforwardly follows from the same reduction as for 
almost-sure winning~\cite[Theorem~7]{SVJ24}.
A corollary of our characterizations is a refined strategy complexity: pure (non-randomized) strategies 
are sufficient for both limit-sure and almost-sure winning, which was known 
only for acyclic MEMDPs and almost-sure reachability objectives~\cite[Lemma~12]{vdVJJ23},
and exponential memory is sufficient.
In the last part of the paper, we present an algorithm running in double exponential space
for solving the gap problem, by computing an approximation of the value of the MEMDP.
To win with probability at least~$\lambda$ in all environments, 
randomized strategies are more powerful~\cite[Lemma~3]{RS14}, and thus need 
to be considered for solving the gap problem.

In conclusion, 
the model of MEMDP is a valuable alternative to POMDPs, from a theoretical
perspective since the limit-sure problem and gap problem are undecidable 
for POMDPs  whereas our results establish decidability for MEMDPs, and from a practical perspective since
many applications of POMDPs can be expressed by MEMDPs, as was observed previously~\cite{BSSJ23,vdVJJ23}.

\section{Definitions}

A \emph{probability distribution} on a finite set $Q$ is a function 
$d: Q \to [0, 1]$ such that $\sum_{q \in Q} d(q) = 1$. 
The support of $d$ is $\Supp(d) = \{q \in Q \mid d(q) > 0\}$. 
A Dirac distribution assigns probability~$1$ to some $q \in Q$.
We denote by $\D(Q)$ the set of all probability distributions on~$Q$.

\subsection{Markov Decision Processes}
A \emph{Markov decision process (MDP)} over a finite set $A$ of actions
is a tuple $M = \tuple{Q, (A_q)_{q \in Q}, \delta}$ consisting
of a finite set $Q$ of \emph{states}, 
a nonempty set $A_q \subseteq A$ of actions for each state $q \in Q$, 
and a partial probabilistic transition function $\delta : Q \times A \to \D(Q)$.
We say that $(q,a,q')$ is a transition if $\delta(q,a)(q') > 0$.
A state $q \in Q$ is a \emph{sink} if $\delta(q,a)(q) = 1$ for all $a \in A_q$. 

A \emph{run} of~$M$ from an initial state $q_0 \in Q$ is an infinite sequence 
$\pi = q_0 a_0 q_1 a_1 \ldots $ of interleaved states and actions such that 
$a_i \in A_{q_i}$ and $\delta(q_i,a_i)(q_{i+1}) > 0$ for all~$i \geq 0$.
Finite prefixes $\rho = q_0 a_0 \ldots q_n$ of runs ending in a state 
are called \emph{histories} and we denote by $\last(\rho) = q_n$ the last 
state of $\rho$.
We denote by $\R(M)$ (resp., $\H(M)$) the set of all runs (resp., histories) of~$M$,
and by $\Inf(\pi)$ the set of states that occur infinitely often along the run $\pi$.

A \emph{sub-MDP} of~$M$ is an MDP $M' = \tuple{Q',(A'_q)_{q \in Q'},\delta}$ such that 
$Q' \subseteq Q$ and $\Supp(\delta(q,a)) \subseteq Q'$ for all states $q \in Q'$ and
actions~$a \in A'_q$ (recall the requirement that $A'_q \neq \emptyset$). 
Consider a set $Q' \subseteq Q$ such that for all $q \in Q'$, there exists $a \in A_q$ with 
$\Supp(\delta(q,a)) \subseteq Q'$. We define the \emph{sub-MDP of~$M$ induced by $Q'$}, 
denoted by $\restr{M}{Q'}$, as the sub-MDP 
$M' = \tuple{Q',(A'_q)_{q \in Q'},\delta}$ where
$A'_q = \{a \in A_q\mid \Supp(\delta(q,a)) \subseteq Q'\}$ for all $q \in Q'$.

\smallskip
\myparagraph{End-components}
An \emph{end-component} of $M = \tuple{Q, (A_q)_{q \in Q}, \delta}$ is a 
pair $(Q',(A'_q)_{q \in Q'})$ such that $(Q',(A'_q)_{q \in Q'}, \delta')$
is a sub-MDP of $M$, where $\delta'$ denotes the restriction of $\delta$ to $\{(q,a) \mid q \in Q', a \in A'_q\}$,
and where the graph $\tuple{Q',E'}$
with $E'=\{(q,q') \in Q'\times Q' \mid \exists a \in A'_q: \delta(q,a)(q') > 0\}$
is strongly connected~\cite{DeAlfaro-phd97,BK08}.
We often identify an end-component as the 
set $Q' \cup \{(q,a) \mid q \in Q', a \in A_q\}$ of states and state-action pairs, 
and we say that it is \emph{supported} by the set $Q'$ of states. 
The (componentwise) union of two end-components with nonempty intersection is
an end-component, thus one can define \emph{maximal} end-components.
We denote by $\mecs(M)$ the set of maximal end-components of~$M$, 
which is computable in polynomial time~\cite{DeAlfaro-phd97},
and by $\ecs(M)$ the set of all end-components of~$M$.

\smallskip
\myparagraph{Histories and Strategies}
A~\emph{strategy} is a function~$\sigma: \H(M) \to \D(A)$ such that
$\Supp(\sigma(\rho)) \subseteq A_q$ for all histories~$\rho \in \H(M)$ ending in~$\last(\rho) = q$.
A strategy is \emph{pure} if all histories are mapped to Dirac distributions.
A strategy~$\sigma$ is \emph{memoryless} if $\sigma(\rho) = \sigma(\rho')$
for all histories $\rho, \rho'$ such that $\last(\rho) = \last(\rho')$.
We sometimes view memoryless strategies as functions $\sigma: Q \to \D(A)$.
A strategy~$\sigma$ uses \emph{finite memory} (of size $k$)
if there exists a right congruence $\approx$ over $\H(M)$ (i.e., such that
if $\rho \approx \rho'$, then $\rho \cdot a \cdot q  \approx \rho' \cdot a \cdot q$
for all $\rho, \rho' \in \H(M)$ and $(a,q) \in A \times Q$)
of finite index $k$ such that $\sigma(\rho) = \sigma(\rho')$
for all histories $\rho \approx \rho'$ with $\last(\rho) = \last(\rho')$.

\smallskip\myparagraph{Objectives}
An objective $\varphi$ is a Borel set of runs.
We denote by $\pr_{q}^\sigma(M, \varphi)$ the standard probability measure 
on the sigma-algebra over the set of (infinite) runs of $M$ with initial state~$q$, 
generated by the cylinder sets spanned by the histories~\cite{BK08}.
Given a history $\rho = q_0 a_0 q_1 \ldots q_k$, the cylinder set
$\Cyl(\rho) = \rho (AQ)^{\omega} $ has probability
\(
\pr_{q}^{\sigma}(M, \Cyl(\rho)) = \prod_{i=0}^{k-1} \sigma(q_0 a_0 q_1 \ldots q_i)(a_i) \cdot  \delta(q_i,a_i)(q_{i+1})
\) if $q_0 = q$, and probability~$0$ otherwise.
We say that a run $\rho$ is compatible with strategy $\sigma$ if $\pr_{q}^{\sigma}(M, \Cyl(\rho)) > 0$.

We consider the following standard objectives for an MDP $M$:
\begin{itemize}
\item safety objective: given a set $T \subseteq Q$ of states, let $\Safe(T)= \{
q_0 a_0 q_1 a_1 \ldots \in \R(M) \mid \forall i\geq 0: q_i \in T \}$;

\item reachability objective: given a set $T \subseteq Q$ of states, let $\Reach(T)= \{
q_0 a_0 q_1 a_1 \ldots \in \R(M) \mid \exists i\geq 0: q_i \in T \}$;

\item parity objective: given a priority function $p: Q \to \nat$,
let $\Parity(p)= \{\pi \in \R(M) \mid \min\{ p(q) \mid q \in \Inf(\pi) \} \text{ is even}\}$.

\end{itemize}
It is standard to cast safety and reachability objectives as special 
cases of parity objectives, using sink states.
Given an objective $\varphi$, we denote by $\lnot \varphi = \R(M) \setminus \varphi$ 
the complement of $\varphi$. We say that a run $\pi \in \R(M)$ \emph{satisfies}
$\varphi$ if $\pi \in \varphi$, and that it \emph{violates} $\varphi$ otherwise.

It is known that under arbitrary strategies,  with probability~$1$ the set $\Inf(\pi)$ of states occurring 
infinitely often along a run $\pi$ is the support of an end-component~\cite{CY-acm95,DeAlfaro-phd97}.

\begin{lemma}[\cite{CY-acm95,DeAlfaro-phd97}]\label{lem:basic-ec}
	Given an MDP $M$, for all states $q \in Q$ and all strategies $\sigma$,
	we have $\pr_{q}^{\sigma}(M, \{\pi \mid \Inf(\pi) \text{ is the support of an end-component}\}) = 1$.
\end{lemma}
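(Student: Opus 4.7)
The plan is to show that with probability $1$, the set $S := \Inf(\pi)$ of states visited infinitely often, together with the set $B := \{(q,a) \mid (q,a) \text{ occurs infinitely often along } \pi\}$ of infinitely-used state-action pairs, forms an end-component supported by $S$. Since $Q$ and $A$ are finite, there are only finitely many candidate pairs $(S,B)$, so a union bound over the relevant null events will suffice.

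The technical heart of the argument is the claim: for every state-action pair $(q,a) \in Q \times A$,
\[
\pr_q^\sigma\bigl(M,\ \{\pi \mid (q,a) \in B(\pi) \text{ and } \Supp(\delta(q,a)) \not\subseteq \Inf(\pi)\}\bigr) = 0.
\]
I would prove this by a conditional Borel--Cantelli argument applied at the stopping times $\tau_n^{q,a}$ marking the $n$-th visit at which action $a$ is chosen in state $q$. Indeed, conditionally on reaching any history whose last state is $q$ and whose next action is $a$, the successor state equals a given $q' \in \Supp(\delta(q,a))$ with probability exactly $\delta(q,a)(q') > 0$, independently of both the past and the strategy $\sigma$. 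Hence on the event that $(q,a)$ is used infinitely often, the sum of these lower-bounded conditional probabilities diverges, so Lévy's extension of the second Borel--Cantelli lemma yields that each $q' \in \Supp(\delta(q,a))$ is visited infinitely often almost surely.

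Having established the claim, consider any run $\pi$ avoiding the (countably many) null events above and set $S = \Inf(\pi)$, $B = B(\pi)$. For every $q \in S$, infinitely many visits to $q$ combined with the finiteness of $A_q$ force some $a \in A_q$ to be chosen infinitely often at $q$, so $A'_q := \{a \mid (q,a) \in B\} \neq \emptyset$. For every $(q,a) \in B$, the claim ensures $\Supp(\delta(q,a)) \subseteq S$, so $(S,(A'_q)_{q \in S})$ is a sub-MDP of $M$. Strong connectivity of the induced graph on $S$ follows directly from the definition of $\Inf(\pi)$: any two states $q,q' \in S$ are visited infinitely often, so $\pi$ contains segments from $q$ to $q'$ using only states in $S$ and pairs in $B$, witnessing paths in the induced graph.

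The main obstacle is the measure-theoretic justification of the Borel--Cantelli step for a general history-dependent strategy: the successive visits to $(q,a)$ are not independent trials, since the history between them is shaped by $\sigma$. The crucial point to invoke carefully is that, \emph{given} that the current state-action is $(q,a)$, the successor distribution is $\delta(q,a)$ regardless of history and strategy. Once this Markov-style property is invoked at each stopping time $\tau_n^{q,a}$, the remainder of the argument is standard.
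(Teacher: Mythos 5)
Your proposal is correct: the decomposition into the null events ``$(q,a)$ occurs infinitely often but some $q'\in\Supp(\delta(q,a))$ does not,'' handled by a conditional Borel--Cantelli argument at the successive occurrence times, followed by the pigeonhole and tail-of-the-run arguments for nonemptiness of $A'_q$ and strong connectivity, is exactly the standard proof of this result. The paper itself gives no proof --- it states the lemma as a known fact from the cited references \cite{CY-acm95,DeAlfaro-phd97} --- and your argument matches the one found there.
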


An end-component $D \in \ecs(M)$ is \emph{positive} under strategy $\sigma$ 
from $q$ if $\pr_{q}^{\sigma}(M, \{\pi \mid\Inf(\pi)  = D\}) > 0$.
By Lemma~\ref{lem:basic-ec}, we have $\sum_{D \in \ecs(M)} \pr_{q}^{\sigma}(M, \{\pi \mid\Inf(\pi)  = D\}) = 1$.

\smallskip
\myparagraph{Value and qualitative satisfaction}
A strategy~$\sigma$ is winning for objective $\varphi$ from $q$ with probability
(at least) $\alpha$ if  $\pr_{q}^{\sigma}(M, \varphi) \geq \alpha$.
We denote by $\Val_{q}^*(M,\varphi) = \sup_{\sigma} \pr_{q}^{\sigma}(M, \varphi)$
the \emph{value} of objective $\varphi$ from state $q$.
A strategy~$\sigma$ is \emph{optimal} if $\pr_{q}^{\sigma}(M, \varphi) = \Val_{q}^*(M,\varphi)$.

We consider the following classical qualitative modes of winning. 
Given an objective $\varphi$, a state $q$ is:
\begin{itemize}
 
\item \emph{almost-sure winning} if there exists a strategy $\sigma$ such that 
is winning with probability~$1$, that is $\pr_{q}^{\sigma}(M, \varphi) = 1$.

\item \emph{limit-sure winning} if $\Val_{q}^*(M,\varphi) = 1$, or equivalently for all $\epsilon > 0$
there exists a strategy~$\sigma$ such that $\pr_{q}^{\sigma}(M, \varphi) \geq 1-\epsilon$.
\end{itemize}

We denote by $\almostsure(M,\varphi)$ and $\limitsure(M,\varphi)$
the set of almost-sure and limit-sure winning states, respectively.
In MDPs, it is known that $\almostsure(M,\varphi) = \limitsure(M,\varphi)$ and pure memoryless optimal strategies exist for parity objectives $\varphi$~\cite{Puterman,CY-acm95}.

We recall that the value of a parity objective $\varphi = \Parity(p)$ 
from every state of an end-component $D$ is the same, and is either $0$ or $1$, 
which does not depend on the precise value of the (non-zero) transition probabilities, 
but only on the supports $\Supp(\delta(q,a))$ of the transition function 
at the state-action pairs $(q,a)$ in $D$~\cite{DeAlfaro-phd97}.
When the value $1$, there exists a pure memoryless strategy $\sigma$
such that $\pr_{q}^{\sigma}(M, \varphi) = 1$
for all states $q\in D$.
If such a strategy exists, then $D$ is said to be \emph{$\varphi$-winning}, and otherwise \emph{$\varphi$-losing}.

\subsection{Multiple-Environment MDP}
\label{section:memdp}
A \emph{multiple-environment MDP (MEMDP)} over a finite set $E$ of environments
is a tuple $M = \tuple{Q, (A_q)_{q \in Q}, (\delta_e)_{e \in E}}$,
where $M[e] = \tuple{Q,(A_q)_{q \in Q},\delta_e}$ is an MDP that models the
behaviour of the system in the environment~$e \in E$. 
The state space is identical in all $M[e]$ ($e \in E$), only the 
transition probabilities may differ.
We sometimes refer to the environments of $M$ as the MDPs $\{ M[e] \mid e \in E\}$
rather than the set $E$ itself.
For $E' \subset E$, let $M[E']$ be the MEMDP $M$ over set $E'$ of environments.
We denote by $M[\lnot e]$ the MEMDP $M$ over environments $E \setminus \{e\}$,
and by $\cup_{e \in E} M[e]$ the MDP $\tuple{Q, (A_q)_{q \in Q}, \delta_{\cup}}$
such that $\delta_{\cup}(q,a)$ is the uniform distribution over $\bigcup_{e \in E} 
\Supp(\delta_{e}(q,a))$ for all $q \in Q$ and $a \in A$.

A transition $t = (q,a,q')$ is \emph{revealing} in $M$ 
if $K_{t} = \{ e \in E \mid q' \in \Supp(\delta_e(q,a))\}$
is a strict subset of $E$ ($K_{t} \subsetneq E$).
We say that $K_{t}$, which is the set of environments where the transition $t = (q,a,q')$
is possible, is the \emph{knowledge} after observing transition $t$.
An MEMDP is in \emph{revealed form} if for all revealing transitions $t = (q,a,q')$,
the state $q'$ is a sink in all environments, that is 
$\Supp(\delta_e(q',a)) = \{q'\}$ for all environments $e \in E$ and all actions $a \in A_{q'}$.
By extension, we call knowledge after a history $\rho$ the set of
environments in which all transitions of $\rho$ are possible.

\smallskip
\myparagraph{Decision Problems}
We are interested in synthesizing a \emph{single} strategy~$\sigma$ with guarantees in
\emph{all} environments, without knowing in which 
environment~$\sigma$ is executing.
We consider reachability, safety, and parity objectives.

A state $q$ is \emph{almost-sure winning} in $M$ for objective $\varphi$
if there exists a strategy $\sigma$ such that in all environments $e \in E$,
we have $\pr_{q}^{\sigma}(M[e], \varphi) = 1$, and we call such a strategy 
$\sigma$ almost-sure winning. 
A state $q$ is \emph{limit-sure winning} in $M$ for objective $\varphi$
if for all $\epsilon > 0$, there exists  a strategy $\sigma$ such that in all 
environments $e \in E$ we have $\pr_{q}^{\sigma}(M[e], \varphi) \geq  1-\epsilon$,
and we say that such a strategy $\sigma$ is $(1-\epsilon)$-winning.

We denote by $\almostsure(M, \varphi)$ (resp., $\limitsure(M, \varphi)$) the set
of all almost-sure (resp., limit-sure) winning states in $M$ for objective $\varphi$.
We consider the \emph{membership problem for almost-sure (resp., limit-sure) winning},
which asks whether a given state $q$ is almost-sure (resp., limit-sure) winning
in $M$ for objective $\varphi$. We refer to these membership problems as 
\emph{qualitative} problems.

We are also interested in the \emph{quantitative} problems.
Given MEMDP $M$, a parity objective $\varphi$, and probability threshold $\alpha\geq 0$,
we are interested in the existence of a strategy $\sigma$ satisfying
$\pr^\sigma_q(M[e], \varphi)\geq \alpha$ for all $e\in E$.
We present an approximation algorithm for the quantitative problem, solving the \emph{gap problem} consisting,
given MEMDP~$M$, state~$q$, parity objective $\varphi$, and
	thresholds~$0 < \alpha < 1$ and $\epsilon > 0$, in answering
	\begin{itemize}
		\item \textsf{Yes} if there exists a strategy $\sigma$ such that for all $e \in E$, we have
			$\pr_{q}^{\sigma}(M[e], \varphi)\geq \alpha$,
		\item \textsf{No} if for all strategies $\sigma$, there exists $e \in E$ with $\pr_{q}^{\sigma}(M[e],\varphi)<
			\alpha - \epsilon$,
		\item and arbitrarily otherwise.
	\end{itemize}

The gap problem is an instance of promise problems
which  guarantee a correct answer in two disjoint sets of
inputs, namely positive and negative instances -- which do not necessarily cover all
inputs, while giving no guarantees in the rest
of the input~\cite{ESY-ic84,goldreich2005promise}.

\smallskip
\myparagraph{Results}
We solve the membership problem for limit-sure winning with parity objectives $\varphi$ (i.e.,
deciding whether a given state $q$ is limit-sure winning, that is $q \in \limitsure(M, \varphi)$),
providing a PSPACE algorithm with a matching complexity lower bound, and showing that the problem is solvable in polynomial time when 
the number of environments is fixed.
Our solution relies on the solution of almost-sure winning, 
which is known to be PSPACE-complete for reachability~\cite{vdVJJ23} and Rabin objectives~\cite{SVJ24}. 
We revisit the solution of almost-sure winning and give a simple 
characterization for safety objectives (which is also PSPACE-complete),
that can easily be extended to parity objectives.
A corollary of our characterization is that pure (non-randomized) strategies are sufficient for both limit-sure
and almost-sure winning, which was known only for acyclic MEMDPs and reachability
objectives~\cite[Lemma~12]{vdVJJ23}.

For the gap problem, we present an double exponential-space procedure 
to approximate the value $\alpha$ that can be achieved in all environments,
up to an arbitrary precision $\epsilon$.

\section{Almost-Sure Winning}\label{section:as}
It is known that the membership problem for almost-sure winning in MEMDPs is PSPACE-complete 
with reachability objectives~\cite{vdVJJ23} as well as with Rabin objectives~\cite{SVJ24}, an
expressively equivalent of the parity objectives.
We revisit the membership problem for almost-sure winning with parity and safety objectives,
as it will be instrumental to the solution of limit-sure winning.
We present a conceptually simple characterization of the winning region
for almost-sure winning, from which we derive a PSPACE algorithm, thus 
matching the known complexity for almost-sure Rabin objectives.
A corollary of our characterization is that pure (non-randomized) strategies 
are sufficient for both limit-sure and almost-sure winning, which was known 
only for acyclic MEMDPs and reachability objectives~\cite[Lemma~12]{vdVJJ23}. 

\begin{theorem}[\cite{vdVJJ23},\cite{SVJ24}]\label{thm:pspacehard}
    The membership problem for almost-sure winning in MEMDPs with a reachability, safety, or Rabin  
    objective is PSPACE-complete.
\end{theorem}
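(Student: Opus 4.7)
The plan is to derive the theorem from the cited literature, handling the three objective classes uniformly for the upper bound and supplying a minor adaptation of the reachability reduction for the safety lower bound.

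For the \textbf{PSPACE upper bound}, I would note that Rabin objectives syntactically subsume parity objectives, and that reachability and safety can both be encoded as parity objectives using absorbing sinks, as recalled in Section~2. Consequently, the PSPACE algorithm of~\cite[Theorem~41]{SVJ24} for almost-sure Rabin winning covers all three objective classes uniformly. An independent self-contained PSPACE procedure will moreover follow from the characterization we establish in Lemma~\ref{lem:char-as-parity} later in this section.

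For the \textbf{PSPACE lower bound}, I would invoke~\cite[Lemma~11]{vdVJJ23}, which reduces a PSPACE-complete problem (essentially QBF-like) to almost-sure reachability in MEMDPs. Since reachability is a special case of both parity and Rabin, this transfers hardness to those two classes. For safety, the plan is to dualise the target gadget of the same reduction: the MEMDP produced by~\cite[Lemma~11]{vdVJJ23} encodes acceptance of a quantified Boolean formula as an almost-sure reachability obligation of a designated sink $\winabsorb$, where every run eventually reaches either $\winabsorb$ or a designated failure region; I would absorb the failure region into a single bad sink $\loseabsorb$ and keep $\winabsorb$ together with its basin as an absorbing safe component, so that almost-sure safety for the complement of $\{\loseabsorb\}$ in the new MEMDP coincides exactly with almost-sure reachability of $\winabsorb$ in the original one.

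The main obstacle will be to verify that this dualisation preserves the environmental knowledge structure of the reduction, i.e., that the revealing transitions and the branching that distinguishes environments are untouched. I expect this to be direct because the modification is purely local to the absorbing part of the construction, leaving the main body of the MEMDP — and hence the knowledge updates along any history — unchanged; the equivalence of the two qualitative conditions then follows from the fact that the original reduction terminates almost surely in one of the two absorbing regions regardless of the strategy.
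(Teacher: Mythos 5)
Your proposal is correct and matches the paper's treatment: the theorem is imported from the cited works (\cite[Lemma~11]{vdVJJ23} for reachability hardness, \cite[Theorem~41]{SVJ24} for the Rabin case, with reachability and safety reducible to parity and hence to Rabin for the upper bound), and for the safety lower bound the paper itself only sketches exactly the adaptation you describe --- reusing the acyclic QBF reduction for reachability, where almost-sure reachability of the winning sink coincides with almost-sure avoidance of the losing sink because every run terminates in one of the two absorbing regions. Your extra care about preserving the knowledge structure of the reduction is sound but not elaborated in the paper, which treats the dualisation as immediate.
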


To solve the membership problem for a safety or parity objective $\varphi$, 
we first convert $M$ into an MEMDP $M'$ 
in revealed form with state space $Q \uplus \{\winabsorb, \loseabsorb \}$ and 
each revealing transition $t=(q,a,q')$ in $M$ is redirected in $M'$ to $\winabsorb$
if $q' \in \almostsure(M[K_t], \varphi)$ is almost-sure winning when the 
set of environments is the knowledge $K_t$ after observing transition $t$,
and to $\loseabsorb$ otherwise. In order to decide if $q' \in \almostsure(M[K_t], \varphi)$,
we need to solve the membership problem for an MEMDP with strictly fewer
environments than in $M$ as $K_t \subsetneq E$, 
which will lead to a recursive algorithm. 
The base case of the solution is MEMDPs with one environment,
which is equivalent to plain MDPs.

It is easy to see that $\almostsure(M, \varphi) \cup \{\winabsorb\} = 
\almostsure(M', \varphi \cup \Reach(\winabsorb))$
for all prefix-independent objectives $\varphi$,
and we can transform the objective $\varphi \cup \Reach(\winabsorb)$
into an objective of the same type as $\varphi$ (for example, if $\varphi$
is a parity objective then assigning the smallest even priority to 
$\winabsorb$ turns the objective $\varphi \cup \Reach(\winabsorb)$
into a pure parity objective).

Hence, the main difficulty is to solve the membership problem for MEMDP 
in revealed form.

\subsection{Safety}\label{section:as-safety}

Although safety objectives are subsumed by parity objectives which we solve in the next
section, we give here a simpler algorithm specifically for safety, and also prove PSPACE-hardness
in this case.

The safety objective has the property that almost-sure winning is equivalent
to sure winning, where a strategy is sure winning if all runs compatible with 
the strategy satisfy the objective. Intuitively, if some runs does not
satisfy the safety objective $\Safe(T)$, then it contains a state outside $T$ 
after a finite prefix, thus with positive probability (the probability
of the finite prefix). 
In the sure-winning mode, we can consider
the probabilistic choices to be adversarial, which entails that only 
the support of the probability distributions in the transition function
is relevant.

It follows that, as long as the knowledge remains the set $E$ of all environments
a winning strategy for a safety objective can play all actions that 
are safe (i.e., that ensure the successor state remains in the winning region) 
in all environments. We obtain the following property:
almost-sure winning for a safety objective in a MEMDP $M$ in revealed form
is equivalent to almost-sure winning in the MDP $\cup_{e \in E} M[e]$.

An algorithm for solving almost-sure safety is as follows: 
$(1)$ for each revealing transition $t=(q,a,q')$ in $M$, 
decide if $q' \in \almostsure(M[K_t], \Safe(T))$ (using a recursive call),
and redirect the transition $t$ to $\winabsorb$ or $\loseabsorb$ accordingly,
transforming~$M$ into revealed form;
$(2)$ assuming $M$ is in revealed form, compute the almost-sure winning 
states $W = \almostsure(M_{\cup}, \Safe(T))$ where $M_{\cup} = \cup_{e \in E} M[e]$
is an MDP. Return $W \setminus \{\winabsorb\}$. The depth of recursive calls
is bounded by the number of environments, and the almost-sure safety in MDPs 
can be solved in polynomial time,
namely, in time $O(\abs{Q}^2\abs{A})$.
It follows that almost-sure safety in MEMDPs can be solved in PSPACE,
and in time $O(\abs{Q}^2\cdot\abs{A}\cdot 2^{\abs{E}})$.
A PSPACE lower bound can be established by a similar reduction from QBF 
as for reachability, the constructed MEMDP being acyclic~\cite{vdVJJ23}.

Note that for a fixed number of environments, the membership problem for almost-sure 
safety in MEMDPs is solvable in polynomial time by our algorithm
since the depth of the recursion is then constant.
This is also the case in Theorem~\ref{thm:pspacehard}
as shown in \cite{vdVJJ23}.

\subsection{Parity}\label{section:as-parity}
By definition, the almost-sure winning region $W = \almostsure(M, \Parity(p))$ 
for a parity objective in an MEMDP $M$ is such that there exists a strategy $\sigma$
that is almost-sure winning for the parity objective from every state $q \in W$
in every MDP $M[e]$ (where $e$ is an environment of $M$).  
In contrast, we show the following characterization (note the order of the quantifiers).

\begin{restatable}{lemma}{charasparity} \label{lem:char-as-parity}
	Given an MEMDP $M$ in revealed form with state space $Q$, if $W \subseteq Q$ is such that 
	in every environment $e$, 
	from every state $q\in W$, there exists a strategy $\sigma_e$
	that is almost-sure winning for the parity objective $\Parity(p)$ in $\restr{M}{W}[e]$ from~$q$,
	then $W \subseteq \almostsure(M, \Parity(p))$.
	Moreover, for all $q \in W$, there exists a pure ($\abs{Q}\cdot\abs{E}$)-memory strategy ensuring
	$\Parity(p)$ from~$q$ in $M$.
\end{restatable}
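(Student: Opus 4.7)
The plan is to combine the per-environment strategies $\sigma_e$ into a single pure finite-memory strategy $\sigma$ of memory $\abs{Q}\cdot\abs{E}$ that is almost-sure winning in every $M[e]$. First, I would invoke the classical result that parity objectives in MDPs admit pure memoryless optimal strategies~\cite{Puterman,CY-acm95}, applied to each $\restr{M}{W}[e]$, to assume without loss of generality that each $\sigma_e:W\to A$ is pure memoryless; note that its actions lie in $\restr{M}{W}$ so that $\Supp(\delta_{e'}(q,\sigma_e(q)))\subseteq W$ for every $e'\in E$.

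Next, I would define $\sigma$ as a round-robin cycling over the $\sigma_e$'s, with memory $(e,k)\in E\times\{0,\ldots,\abs{Q}-1\}$ of size $\abs{E}\cdot\abs{Q}$. In memory state $(e,k)$ the strategy plays $\sigma_e$ at the current state; it advances to $(e,k+1)$ while $k<\abs{Q}-1$ and switches to $(e',0)$ for the next environment $e'$ in a fixed cyclic order otherwise. Correctness is then proved by analysing the finite product Markov chain on $Q\times E\times\{0,\ldots,\abs{Q}-1\}$ obtained from $M[e^*]$ (with $e^*$ the true environment) and $\sigma$: by Lemma~\ref{lem:basic-ec}, almost-surely $\Inf(\pi)$ is the state projection $\bar{C}$ of some BSCC $C$ of this chain.

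The argument proceeds in three steps. First, $\bar{C}\subseteq W$: starting in $W$, every action played by $\sigma$ is in $\restr{M}{W}$, so its support in $e^*$ lies in $W$ and the trajectory never leaves $W$. Second, for each $q\in\bar{C}$ visited at some memory $(e^*,k)$ in $C$, the BSCC closure yields $\Supp(\delta_{e^*}(q,\sigma_{e^*}(q)))\subseteq\bar{C}$, so that $\bar{C}$ is an end-component of $\restr{M[e^*]}{W}$ closed under $\sigma_{e^*}$. Third, since $\sigma_{e^*}$ is almost-surely winning for $\Parity(p)$ in $\restr{M[e^*]}{W}$ from every $q\in W\supseteq\bar{C}$, every end-component of $\restr{M[e^*]}{W}$ contained in $W$ and closed under $\sigma_{e^*}$ must have even minimum priority. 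Applied to $\bar{C}$, this yields $\min p(\bar{C})$ even, and hence almost-sure winning of $\sigma$ in $M[e^*]$.

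The hard part will be the second step, namely verifying that every state of $\bar{C}$ appears at some memory $(e^*,k)$ in the BSCC $C$, so that the $\sigma_{e^*}$-closure applies to all of $\bar{C}$. I expect this to follow from a strong-connectivity argument in the product chain, using the deterministic cyclic memory: since the memory component cycles through every value in $E\times\{0,\ldots,\abs{Q}-1\}$ deterministically, within a BSCC every state of $\bar{C}$ is forced to be visited at every memory value infinitely often. Should the naive step-counting memory turn out to be insufficient, I would refine it to a state-triggered variant---switching the current environment only when the trajectory returns to a designated representative of $W$---which still fits within $\abs{Q}\cdot\abs{E}$ memory while forcing $\bar{C}$ to coincide with the set of states visited during $\sigma_{e^*}$-phases.
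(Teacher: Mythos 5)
Your overall architecture --- reduce each $\sigma_e$ to a pure memoryless strategy and rotate through them with memory $E\times\{0,\dots,\abs{Q}-1\}$ --- is the same as the paper's, but you omit the one ingredient that makes the paper's construction work: the paper's strategy does \emph{not} rotate forever. At the end of each $\abs{W}$-step phase it tests whether the current state is $\winabsorb$ or lies in a positive end-component of the current $\sigma_e$, and if so it \emph{commits} to $\sigma_e$ forever; only otherwise does it move on to the next environment. Without this lock-in the strategy can fail. Concretely, take $E=\{e_1,e_2\}$ with $M[e_1]=M[e_2]$ (so the MEMDP is trivially in revealed form and one can take $W=Q$), states $q_0,q_1,q_2$ with priorities $p(q_0)=1$, $p(q_1)=p(q_2)=2$, and Dirac transitions $q_0\xrightarrow{a}q_1$, $q_0\xrightarrow{b}q_2$, $q_1\xrightarrow{a}q_1$, $q_1\xrightarrow{b}q_0$, $q_2\xrightarrow{a}q_0$, $q_2\xrightarrow{b}q_2$. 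The strategy $\sigma_{e_1}$ that always plays $a$ and the strategy $\sigma_{e_2}$ that always plays $b$ are each almost-sure (indeed sure) winning, so the hypothesis of the lemma holds. Yet your round-robin strategy from $q_0$ produces the deterministic run $q_0q_1q_1q_1\,q_0q_2q_2\,q_0q_1q_1\,q_0q_2q_2\cdots$, which visits $q_0$ infinitely often and therefore loses with probability $1$.

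The counterexample pinpoints where your written argument breaks: your third step claims that every end-component contained in $W$ and closed under $\sigma_{e^*}$ has even minimum priority because $\sigma_{e^*}$ is almost-sure winning. That is false. Almost-sure winning of $\sigma_{e^*}$ only forces the \emph{recurrent classes} of the Markov chain induced by $\sigma_{e^*}$ to have even minimum priority; a $\sigma_{e^*}$-closed set may additionally contain states that are transient under $\sigma_{e^*}$ and carry a smaller odd priority, and the rotation through the other $\sigma_{e'}$'s can keep re-injecting the play into those states (above, $\{q_0,q_1,q_2\}$ is closed under $\sigma_{e_1}$ but has minimum priority $1$). Your fallback ``state-triggered'' switching does not help, since the problem is not \emph{when} to switch but that one must eventually \emph{stop} switching. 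The repair is the paper's: add the commitment test, prove that with probability $1$ the strategy eventually commits (each $e^*$-phase has probability bounded away from $0$ of ending inside a positive end-component of $\sigma_{e^*}$, so rotating forever has probability $0$), and then use the revealed form to argue that the end-component $D$ committed to --- being an end-component of some $M[e]$ and hence containing no revealing transitions --- either is an end-component of $M[e^*]$ with even minimum priority, or is left in $M[e^*]$ only via a revealing transition into $\winabsorb$. (Your step~2 is also shakier than you suggest: the memory evolves deterministically in lockstep with time, so a state of $\bar{C}$ need not occur in the BSCC paired with every memory value; but this is secondary to the missing lock-in.)
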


\begin{proof}
	For each environment $M[e]$, consider a memoryless strategy $\sigma_e$
	almost-surely winning for the objective $\Parity(p)$ in $\restr{M}{W}[e]$ from every state of $W$.
	Recall that almost-sure winning strategies can be assumed to be memoryless
	in MDPs with single environments; and that one can build a single memoryless strategy that is almost-surely winning
	from all winning states.
	Let $\ecs(\sigma_e) = \{ D \in \ecs(M[e]) \mid \exists q \in W: \pr_{q}^{\sigma_e}(M[e], \Inf = D) > 0 \}$ 
	be the set of positive end-components under strategy $\sigma_e$.
	Note that the least priority in an end-component $D \in \ecs(\sigma_e)$ is even
	since the parity objective is satisfied with probability~$1$.

	Let $E = \{1,\dots,k\}$ be the set of environments of $M$. 
	We construct a pure almost-sure winning strategy $\sigma$ for the MEMDP $M$ as follows,
	where initially $e=1$:

	$(1)$ play according to $\sigma_e$ for $\abs{W}$ steps;

	$(2)$ if the current state is $\winabsorb$ or belongs to a positive end-component
	$D \in  \ecs(\sigma_e)$, keep playing according to $\sigma_e$ forever.
	Otherwise, increment $e$ (modulo $k$) and go to $(1)$.
	\smallskip

The strategy $\sigma$ uses memory of size at most  $\abs{Q}\cdot\abs{E}$ since $W \subseteq Q$. 

	Fix environment $f \in E$. We show that strategy 
	$\sigma$ is almost-sure winning in $M[f]$. Because all strategies $\sigma_e$ are defined in $\restr{M}{W}$,
	the region $W$
	is never left while playing $\sigma$, and during phase $(1)$ of the strategy
	there is a lower-bounded probability to reach an end-component $D \in  \ecs(\sigma_e)$ 
	when $e=f$.

	We show that eventually phase $(2)$ is executed forever with probability~$1$,
	that is, some end-component $D \in  \ecs(\sigma_e)$ for some $e$ is reached with probability 1.
	Towards contradiction,
	assume that phase $(1)$ of the strategy $\sigma$ is executed infinitely often with positive probability $p$.
	Then phase $(1)$ for~$e=f$ and $\sigma_f$ is also executed infinitely often and
	it follows that, conditioned on phase $(1)$ being executed infinitely often, 
	a positive end-component $D \in  \ecs(\sigma_f)$ is reached with 
	probability~$1$; 
	hence phase $(2)$ is executed forever from that point on. 
	Thus with probability~$1-p + p = 1$ phase~$(1)$ is executed
	only finitely often, contradicting our assumption.  

	As phase $(2)$ of the strategy $\sigma$ is eventually executed forever with probability~$1$, let $e$
	be the corresponding environment (i.e., such that $\sigma$ plays according to $\sigma_e$)
	and let $D \neq \{\winabsorb\}$ be the reached end-component of $M[e]$ (the other case where $\winabsorb$
	is reached is trivial). 
	If some transition of $D$ is not present in $f$,
	then it must be a revealing transition in $e$, thus leading in $M[e]$ to $\winabsorb$ outside $D$,
	which is impossible since $D$ is an end-component in $M[e]$.
	Hence all transitions of $D$ are present in all environments.

	We show that $\sigma$ is almost-sure winning in $f$. 
	The result is immediate if $D$ is an end-component of $M[f]$ (in particular if $f = e$). 
	If $D$ is not an end-component of $M[f]$, then in $M[f]$ the strategy would leave $D$ and 
	reach $\winabsorb$, thus $\sigma$ is almost-sure winning as well in that case.
\end{proof}

\begin{algorithm}[t]
\caption{${\ASParity}(M,p)$ \label{alg:solve-as-ws}}
{
  \AlgData{$M = \tuple{Q, (A_q)_{q \in Q}, (\delta_e)_{e \in E}}$ an MEMDP, $p: Q \to \nat$ a priority function.}
  \AlgResult{The winning region $\almostsure(M,\Parity(p))$ for almost-sure parity.}

  \Begin{
        \hfill {\tt /* pre-processing */} \;
       \nl $M' \gets M$  \;
       \nl add two sink states $\winabsorb, \loseabsorb$ to $M'$ \;
       \nl define $p(\winabsorb)=0$ and $p(\loseabsorb)=1$ \;
       \nl \ForEach{{\rm revealing transition} $t = (q,a,q')$ {\rm in} $M$}
       {
           \hfill {\tt /* } $K_t \subsetneq E$ {\tt */}

	   \nl \If{$q' \in {\ASParity}(M[K_t],p)$\label{line:memdp-as-parity-revealing-recall}}
               {\nl replace $t$ by $(q,a,\winabsorb)$ in $M'$ \;}
               \Else
               {\nl replace $t$ by $(q,a,\loseabsorb)$ in $M'$ \;}
       }
       \nl $M \gets M'$  \;
       \hfill {\tt /* $M$ is in revealed form */} \;
       \nl $P \gets \emptyset$; $P' \gets \emptyset$\;
       \nl \Repeat{$P'$ is unchanged}
       {
	 \nl $P \gets \cap_{e \in E} \, \almostsure(M[e],\Parity(p))$  \hfill{{\tt /* } $M[e]$ and $\cup_{e \in E} M[e]$} \label{alg:solve-as-ws-P} \;
	 \nl $P' \gets \almostsure(\cup_{e \in E} M[e],\Safe(P))$  \hfill{are MDPs {\tt */}} \label{alg:solve-as-ws-Q} \;
	 \nl $M \gets \restr{M}{P'}$ \;
       }
       \nl \KwRet{$P' \setminus \{\winabsorb\}$} \;
  }
}
\end{algorithm}

The characterization in the first part of Lemma~\ref{lem:char-as-parity} 
holds simply because parity objectives are prefix-independent (runs that differ
by a finite prefix are either both winning or both losing), and thus the 
characterization holds for all prefix-independent objectives.

The converse of Lemma~\ref{lem:char-as-parity} is immediate, which entails
that the almost-sure winning region $W = \almostsure(M, \Parity(p))$ is the largest
set of states satisfying the condition in Lemma~\ref{lem:char-as-parity}.
We exploit this characterization in Algorithm~\ref{alg:solve-as-ws} to compute
the winning region for almost-sure parity. After transforming the MEMDP into
revealed form (through recursive calls to the algorithm), we compute 
the winning region for almost-sure parity in each environment (line~\ref{alg:solve-as-ws-P}), 
and then the set $P'$ of states from which we can remain in the intersection $P$ of all these winning regions
(line~\ref{alg:solve-as-ws-Q}). We iterate this process on $\restr{M}{P'}$
until a fixpoint $P=P'$ is reached. 

It is easy to see that the fixpoint satisfies the characterization 
of Lemma~\ref{lem:char-as-parity}, and thus 
$P' \subseteq \almostsure(M,\Parity(p)) \cup \{\winabsorb\}$.
Also by the proof of Lemma~\ref{lem:char-as-parity}, we can construct a pure almost-sure winning ($\abs{Q}\cdot\abs{E}$)-memory strategy from all states in $P'$, and define (recursively, for each subset of the environments) a pure almost-sure winning strategy from the states that were replaced by $\winabsorb$ in the revealed form, with a total memory size at most $\abs{Q}\cdot\abs{E} \cdot 2^{\abs{E}}$,
corresponding to the memory bound
from Lemma~\ref{lem:char-as-parity}
for each subset $K \subseteq E$ of environments (representing the belief, i.e., the set of environments where the current history is possible).

To show the converse inclusion, we show the invariant that
every state $q \in Q \setminus P'$ is not almost-sure winning in $M$:
for all strategies $\sigma$ from $q$, in some environment $M[e]$ the set $P$ is left 
with positive probability (along some history $\rho$). Given a state $q' \in Q \setminus P$ reached in $M[e]$,
there is an environment $f \in E$ where the parity objective is violated
with positive probability under $\sigma$ from $q'$.  The crux is to show that 
the state $q'$ is reached with positive probability
in $M[f]$ as well.
Towards contradiction, assume that the history $\rho$ from $q$ to $q'$ (in $M[e]$)
is not possible in $M[f]$. Then $\rho$ contains a revealing transition in $M[e]$,
and $q' = \winabsorb \in P$, which is a contradiction since $q' \in Q \setminus P$. 
Hence, in $M[f]$ with strategy $\sigma$ the parity objective is violated
with positive probability.

Algorithm~\ref{alg:solve-as-ws} can be implemented in PSPACE by a similar argument
as for almost-sure safety: the depth of recursive calls is bounded by the number
of environments, both almost-sure safety and almost-sure parity can be solved 
in polynomial time in MDPs, and the repeat-loop runs at most $\abs{Q}$ times.
The algorithm runs in polynomial time if the number of environments is fixed.
The PSPACE-hardness follows from Theorem~\ref{thm:pspacehard}.

\begin{restatable}{theorem}{asparity}
\label{th:as-parity}
	The membership problem for almost-sure parity in MEMDPs is PSPACE-complete.
	Pure exponential-memory strategies are sufficient for almost-sure winning in MEMDPs with parity (thus also reachability and safety)
	objectives. When the number of environments is fixed, the problem is solvable in polynomial time.
\end{restatable}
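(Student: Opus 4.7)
The plan is to prove the three claims by induction on $|E|$, with Algorithm~\ref{alg:solve-as-ws} as the algorithmic vehicle. The base case $|E|=1$ reduces to almost-sure parity in plain MDPs, which is decidable in polynomial time and for which memoryless pure strategies suffice. For the inductive step, note that the recursive calls at line~\ref{line:memdp-as-parity-revealing-recall} operate on MEMDPs over $K_t \subsetneq E$, so the inductive hypothesis makes the reduction of $M$ to revealed form sound, and it remains only to argue correctness of the fixpoint computation that follows.

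The crux is proving correctness of the repeat-loop on revealed-form MEMDPs. One direction is immediate: when the loop stabilizes with $P = P'$, every state of $P'$ can, in each environment $e$, stay inside $P' \subseteq P \subseteq \almostsure(M[e], \Parity(p))$, so Lemma~\ref{lem:char-as-parity} applies to $W = P'$ and yields $P' \setminus \{\winabsorb\} \subseteq \almostsure(M, \Parity(p))$. For the converse, I would formalize the invariant that no state $q \in Q \setminus P'$ is almost-sure winning: given any strategy $\sigma$ from such a $q$, with positive probability in some environment $e$ the play leaves $P$ along a history $\rho$ ending at some $q' \notin P$, and by the very definition of $P$ there is an environment $f$ in which $\sigma$ violates parity from $q'$ with positive probability. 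The delicate step is to show that $\rho$ is also compatible with $M[f]$: otherwise $\rho$ would contain a revealing transition of $M[e]$, which in revealed form would redirect to $\winabsorb \in P$, contradicting $q' \notin P$.

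For complexity, I would observe that the recursion depth is bounded by $|E|$ (each call strictly shrinks the environment set), each level performs only polynomial-time MDP computations (almost-sure parity on each $M[e]$ and almost-sure safety on $\cup_{e \in E} M[e]$), and the repeat-loop iterates at most $|Q|$ times since $P'$ shrinks monotonically. Storing the current recursive state takes polynomial space, giving the PSPACE upper bound; PSPACE-hardness is inherited from Theorem~\ref{thm:pspacehard}. When $|E|$ is fixed, the recursion depth is constant, yielding polynomial running time.

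For the strategy memory bound, Lemma~\ref{lem:char-as-parity} provides a pure strategy with $|Q| \cdot |E|$ memory on the current revealed-form MEMDP. Upon traversing a revealing transition represented by $\winabsorb$, the strategy switches to the pure strategy built recursively for $M[K_t]$. Since the belief (the set of environments still compatible with the history) takes at most $2^{|E|}$ distinct values, the overall memory is bounded by $|Q| \cdot |E| \cdot 2^{|E|}$, giving the exponential-memory claim. The main obstacle I anticipate is the compatibility argument in the converse inclusion: one must exploit the revealed form precisely to transfer the escape history from environment $e$ to environment $f$, and this is where the pre-processing phase pays off.
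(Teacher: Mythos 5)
Your proposal follows the paper's own proof essentially step for step: the same recursion into revealed form, the same use of Lemma~\ref{lem:char-as-parity} for one inclusion of the fixpoint, the same converse argument via the invariant that escape histories transfer between environments because a non-transferable history would contain a revealing transition redirected to $\winabsorb \in P$, and the same complexity and $\abs{Q}\cdot\abs{E}\cdot 2^{\abs{E}}$ memory accounting. It is correct and matches the paper's argument.
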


The time complexity of Algorithm~\ref{alg:solve-as-ws} is established as follows. 
Each recursive call, corresponds to a subset of the initial environment set $E$ that we can compute once and tabulate.
In each call, the second loop runs at most $\abs{Q}$ times,
and the set of almost-sure winning states for parity conditions (that is, the set $\almostsure(M[e],\Parity(p))$) can be computed in time $O(\abs{Q}\cdot \abs{\delta})$~\cite{BK08}.
Since $\abs{\delta}$ is in $O(\abs{Q}^2\cdot\abs{A})$,
each recursive call takes $O(\abs{Q}^4\cdot\abs{E}\cdot\abs{A})$ time,
and overall, this is $O(\abs{Q}^4\cdot\abs{E}\cdot\abs{A}\cdot 2^{\abs{E}})$.

Note that pure exponential-memory strategies for almost-sure parity in MEMDPs are provided
by Lemma~\ref{lem:char-as-parity}.
The algorithm for almost-sure parity can be used to solve almost-sure safety
with optimal PSPACE complexity, although the specific algorithm for safety
is slightly simpler (the repeat-loop can be replaced by just line~\ref{alg:solve-as-ws-Q}
where $P = T$ is the set of states defining the safety objective $\Safe(T)$).
 
The PSPACE procedure can be implemented in exponential time by solving
all subproblems and storing their solutions. Moreover,
for large numbers of environments, the exponent in the complexity can be made to depend only on the size of $M$. In fact, intuitively, two environments with identical supports yield the same result so one can derive a dynamic programming solution where at most one environment per support is solved.

Define the \emph{support} of a probabilistic transition relation $\delta : Q\times A \rightarrow \D(Q)$
as the family of supports of its transitions, that is, $\Supp(\delta) = (\Supp(\delta(q,a)))_{(q,a) \in Q\times A}$.
Define the support of a family of transition relations
as $\Supp((\delta_e)_{e \in E}) = \{ \Supp(\delta_e) \mid e  \in E\}$.

Two environments $\delta_e$ and $\delta_f$ are said to be \emph{equivalent} if they have the same support.
One can check whether two environments are equivalent in polynomial time,
by going through all triples $(q,a,q')$ and verifying that $\delta_e(q,a,q')= 0$
iff $\delta_f(q,a,q')=0$.

Almost sure parity in MEMDPs does not depend on the precise probability values in the given environments in $M$ but only on their supports.

In addition to Theorem~\ref{th:as-parity}, we can obtain a complexity bound whose exponent
is independent of the number of environments (Theorem~\ref{th:as-parity-exptime}),
using the following result: if in two
environments, the support of the transition relation is the same, 
we can discard one of the environment (all strategies that are almost-sure winning
in one are also almost-sure winning in the other one, as shown in Lemma~\ref{lemma:as-parity-equivalent-envs}) 
and thus consider at most one environment for each support.
Here, we denote by $\Supp((\delta_e)_{e \in E}) = (\Supp(\delta_e))_{e \in E}$ where $\Supp(\delta_e)$ denotes the set of transitions with positive probability under $\delta_e$.

\begin{restatable}{lemma}{asparityequivalentenvs}
	\label{lemma:as-parity-equivalent-envs}
	Consider two MEMDPs $M_i = \tuple{Q, (A_q)_{q \in Q}, (\delta_e)_{e \in E_i}}$ for $i=1,2$,	
	with the same state and action sets, and with the same supports of their
	transition relation, $\Supp((\delta_e)_{e \in E_1}) = \Supp((\delta_e)_{e \in E_2})$. 
	Given a parity condition $\Parity(p)$, for all states~$q$ and all finite-memory strategies $\straa$, 
	the following equivalence holds:
	$\pr_{q}^{\straa}[M_1[e], \Parity(p)] = 1$ for all $e\in E_1$ if and only if 
	$\pr_{q}^{\straa}[M_2[e], \Parity(p)] = 1$ for all $e\in E_2$.
	In particular, $\almostsure(M_1, \Parity(p)) = \almostsure(M_2, \Parity(p))$. 
\end{restatable}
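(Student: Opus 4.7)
The plan is to reduce the claim to the classical fact that, in a finite Markov chain, almost-sure satisfaction of a parity objective (or any $\omega$-regular prefix-independent objective) depends only on the supports of the transition distributions, not on their precise values.

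First, I would fix a finite-memory strategy $\straa$ with a finite memory set $\mems$. For each environment $e$, composing $\straa$ with $M_i[e]$ yields a finite Markov chain on the product state space $Q \times \mems$, with an induced priority function inherited from $p$. Crucially, the support of this Markov chain's transition relation is fully determined by $\Supp(\delta_e)$ together with $\straa$: it does not depend on the precise probabilities of $\delta_e$. Hence two environments $\delta_e, \delta_{e'}$ with $\Supp(\delta_e)=\Supp(\delta_{e'})$ produce Markov chains with identical underlying graphs when composed with the same $\straa$.

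Second, I would invoke the standard Markov chain characterization: $\pr_{q}^{\straa}[M[e],\Parity(p)] = 1$ if and only if, in the induced chain, every bottom strongly connected component reachable from $(q,m_0)$ has even minimum priority~\cite{BK08}. This is a purely graph-theoretic condition on the support. It follows that whenever $\Supp(\delta_e)=\Supp(\delta_{e'})$, we have $\pr_{q}^{\straa}[M_1[e],\Parity(p)] = 1$ if and only if $\pr_{q}^{\straa}[M_2[e'],\Parity(p)] = 1$ for every starting state $q$ and every finite-memory $\straa$.

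Combining this with the hypothesis $\Supp((\delta_e)_{e \in E_1}) = \Supp((\delta_e)_{e \in E_2})$, every $e \in E_1$ admits a support-matching $e' \in E_2$ and vice versa, so the universal quantification ``for all $e \in E_i$'' transfers between $E_1$ and $E_2$. This establishes the stated equivalence for every finite-memory $\straa$. For the ``in particular'' claim, I would then invoke Theorem~\ref{th:as-parity}: whenever a state is almost-sure winning in $M_1$, this is witnessed by a pure finite-memory (indeed exponential-memory) strategy $\straa$, to which the equivalence applies verbatim; symmetry gives $\almostsure(M_1,\Parity(p)) = \almostsure(M_2,\Parity(p))$.

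The main obstacle is simply the bookkeeping around the finite-memory assumption: one must observe that the strategy's randomization is a fixed ingredient shared by both MEMDPs, so the only source of probabilistic variation is $\delta_e$ itself, whose support is assumed identical across $E_1$ and $E_2$. Once this is made precise, the reduction to the support-only criterion for almost-sure parity in Markov chains is routine.
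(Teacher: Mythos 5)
Your proposal is correct and follows essentially the same route as the paper's proof: form the product of the MDP with the finite-memory strategy's Moore machine, observe that the resulting Markov chain's underlying graph (hence its reachable BSCCs and their minimum priorities) depends only on $\Supp(\delta_e)$, apply the BSCC characterization of almost-sure parity, and derive the ``in particular'' claim from the finite-memory sufficiency in Theorem~\ref{th:as-parity}. If anything, your explicit restriction to BSCCs \emph{reachable} from the initial product state is slightly more careful than the paper's phrasing.
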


\begin{proof}
	Given state $q$ and finite-memory strategy $\straa$, assume that $\pr_{q}^{\straa}[M_1[e_1], \Parity(p)] = 1$ for all $e_1\in E_1$.
	Consider any $e_2 \in E_2$, and let $e_1\in E_1$ be such that $\Supp(\delta_{e_1}) = \Supp(\delta_{e_2})$; 
	such a $e_2$ exists by the hypothesis $\Supp((\delta_e)_{e \in E_1}) = \Supp((\delta_e)_{e \in E_2})$.
	Consider the Markov chain obtained as the product of the MDP $M_1[e_1]$ with the Moore machine describing the finite-memory strategy $\straa$.
	Because $\pr_{q}^{\straa}[M_1[e_1], \Parity(p)] = 1$, all bottom strongly connected components (BSCC) 
	in this product are winning for $\Parity(p)$ (i.e., the smallest priority of their states is even).
	But the product of $M_2[e_2]$ and the Moore machine for $\straa$ have the same set of BSCCs since the supports are identical.
	It follows that $\pr_{q}^{\straa}[M_2[e_2], \Parity(p)]=1$. By symmetry, this proves the first statement.

	It follows that $\almostsure(M_1, \Parity(p)) = \almostsure(M_2, \Parity(p))$ since
	finite-memory strategies suffice for almost-sure parity in MEMDPs by Theorem~\ref{th:as-parity}.
	
\end{proof}

\begin{restatable}{theorem}{asparityexptime}
	\label{th:as-parity-exptime}
	The membership problem for almost-sure parity for an MEMDP
	$M = \tuple{Q, (A_q)_{q \in Q}, (\delta_e)_{e \in E}}$
	can be solved in time 
    $O((\abs{E}^2+\abs{Q}^4\cdot\abs{E}\cdot\abs{A})\cdot2^{\min(\abs{E}, 2^{\abs{Q}^2\cdot\abs{A}})})$.
\end{restatable}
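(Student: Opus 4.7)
The plan is to combine Lemma~\ref{lemma:as-parity-equivalent-envs} with the algorithm of Theorem~\ref{th:as-parity} via a preprocessing step that collapses equivalent environments. I would first bound the number of equivalence classes: since $\Supp(\delta_e)$ is determined by a subset of $Q \times A \times Q$ (indicating, for each $(q,a)$, which successors have positive probability), the number of distinct supports is at most $2^{\abs{Q}^2\cdot\abs{A}}$. Hence, after removing duplicates, at most $k := \min(\abs{E}, 2^{\abs{Q}^2\cdot\abs{A}})$ environments remain.

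Next I would describe the preprocessing. Maintain a list of representatives $E^\star \subseteq E$, initially empty; iterate through each $e \in E$ and test whether $\Supp(\delta_e) = \Supp(\delta_f)$ for some $f \in E^\star$. Each equivalence test can be performed by walking through the triples $(q,a,q')$, which as noted in the text runs in polynomial time, but for the purpose of this bound we charge $O(\abs{E}^2)$ comparisons overall (the polynomial per-comparison factor is absorbed in the subsequent cost). By Lemma~\ref{lemma:as-parity-equivalent-envs}, discarding any environment equivalent to one already in $E^\star$ does not alter $\almostsure(M, \Parity(p))$, so we obtain an MEMDP $M^\star$ over environment set $E^\star$ with $\abs{E^\star} \leq k$, and $\almostsure(M^\star, \Parity(p)) = \almostsure(M, \Parity(p))$.

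Finally, I would run Algorithm~\ref{alg:solve-as-ws} on $M^\star$. From the time-complexity analysis following Theorem~\ref{th:as-parity}, this takes time $O(\abs{Q}^4\cdot\abs{E^\star}\cdot\abs{A}\cdot 2^{\abs{E^\star}}) \leq O(\abs{Q}^4\cdot k\cdot\abs{A}\cdot 2^{k})$. Adding the $O(\abs{E}^2)$ preprocessing, the total is $O(\abs{E}^2 + \abs{Q}^4\cdot \abs{E}\cdot\abs{A}\cdot 2^{k})$, which is within the claimed bound $O((\abs{E}^2+\abs{Q}^4\cdot\abs{E}\cdot\abs{A})\cdot2^{k})$. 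Correctness for the original instance follows from Lemma~\ref{lemma:as-parity-equivalent-envs}.

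The only subtle point, and the place where one must be careful rather than face a real obstacle, is that the recursive calls inside Algorithm~\ref{alg:solve-as-ws} are made on strict subsets $K_t \subsetneq E^\star$; these inherit pairwise non-equivalence from $E^\star$, so no additional preprocessing is needed at deeper levels, and the $2^{\abs{E^\star}}$ factor from the recursion remains valid. This observation, together with Lemma~\ref{lemma:as-parity-equivalent-envs} and the existing complexity analysis of Algorithm~\ref{alg:solve-as-ws}, yields the theorem.
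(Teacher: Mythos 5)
Your proposal is correct and follows essentially the same route as the paper: deduplicate environments by support using Lemma~\ref{lemma:as-parity-equivalent-envs} (at cost $O(\abs{E}^2)$ comparisons), then run Algorithm~\ref{alg:solve-as-ws} on the reduced instance with at most $\min(\abs{E}, 2^{\abs{Q}^2\cdot\abs{A}})$ environments. The only cosmetic difference is that the paper case-splits on whether $\abs{E} \leq 2^{\abs{Q}^2\abs{A}}$ and skips the preprocessing in the first case, whereas you always deduplicate and let the $\min$ absorb both cases, which is equivalent.
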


\begin{proof}
	Consider an MEMDP $M = \tuple{Q, (A_q)_{q \in Q}, (\delta_e)_{e \in E}}$ and a parity objective $\varphi$. If $\abs{E} \leq 2^{\abs{Q}^2\abs{A}}$, then we apply the PSPACE procedure
	from Theorem~\ref{th:as-parity}. The number of recursive calls is then bounded by $2^{\abs{E}}$,
    and each call itself takes polynomial time, so the result follows.

	Otherwise, we scan the set of environments given as input, and store 
	a subset $E'$ of these: we include an environment~$e$ in $E'$ if and only 
	if none of the previously stored environments is equivalent to~$e$.
    This takes $O(\abs{E}^2)$ time.
	This yields a subset with at most $2^{\abs{Q}^2\abs{A}}$ environments,
	with at most one representative for each possible support. 	
	We then apply the recursive algorithm on the MEMDP $M[E']$,
	which yields the same result as if it was applied to $M=M[E]$ by Lemma~\ref{lemma:as-parity-equivalent-envs}.
\end{proof}

\section{Limit-Sure Winning}\label{section:limitsure}
We refer to the examples of the duplicate card and the missing card in Section~\ref{sec:intro} to illustrate the
difference between limit-sure and almost-sure winning. We present in Section~\ref{sec:example-as-ls} two other scenarios where limit-sure winning and almost-sure winning do not coincide, which will be useful to illustrate the key ideas in the algorithmic solution.

\subsection{Examples}\label{sec:example-as-ls}

\begin{figure}[!t]
\hrule
\begin{center}
    \input{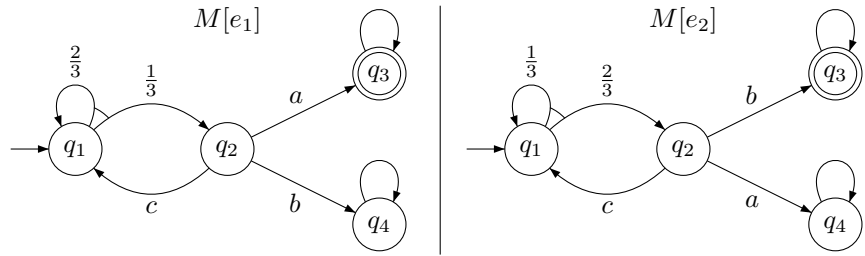}
\end{center} 
\hrule
 \caption{An end-component $\{q_1,q_2\}$ with different transition probabilities in environments $e_1$ and $e_2$.\label{fig:limit-sure1}}
\end{figure}

 In the example of \figurename~\ref{fig:limit-sure1},
the set $D = \{q_1,q_2\}$ is an end-component in both environments $e_1$ and $e_2$ (the actions are shown in the figures only
when relevant, that is in $q_2$). However, the transition probabilities from $q_1$
are different in the two environments $e_1$ and~$e_2$, and intuitively we can learn (with high probability) in 
which environment we are by playing $c$ for a long enough (but finite) time
and collecting the frequency of the visits to $q_1$ and $q_2$. 
Then, in order to reach the target $q_3$, if there are more $q_1$'s than $q_2$'s in the history we play $a$ in $q_2$,
otherwise $b$. The intuition is that 
the histories with more $q_1$'s than $q_2$'s have a high probability (more than $1-\epsilon$)
in $M[e_1]$ and a small probability (less than $\epsilon$) in $M[e_2]$,
where $\epsilon$ can be made arbitrarily small (however not $0$) 
by playing $c$ for sufficiently long. Hence $q_1$ is limit-sure winning,
but not almost-sure winning.

In the second scenario (\figurename~\ref{fig:limit-sure2}), the transition
probabilities do not matter. The objective is to visit some state in $\{q_3,q_4,q_5\}$
infinitely often (those states have priority $0$, the other states have priority $1$).
The state $q_1$ is limit-sure winning, but not almost-sure winning. 
To win with probability $1-\epsilon$, a strategy can play $a$ (in $q_2$) 
for a sufficiently long time, then switch to playing $b$ (unless $q_5$ was 
reached before that).
The crux is that playing $a$ does not harm, as it does not leave the limit-sure 
winning region, but ensures in at least one environment (namely, $e_1$)
that the objective is satisfied with probability~$1$ (by reaching $q_5$).
This allows to ``discard'' the environment $e_1$ if $q_5$ was not reached,
and to switch to a strategy that is winning with probability at least $1-\epsilon$ in $e_2$,
namely by playing $b$. 
With an arbitrary number of environments, the difficulty
is to determine in which order the environments can be ``discarded''.

Note that the transition $(q_2,a,q_1)$ is not revealing, since it is present in both environments.
However, after crossing this transition a large number of times, we can still learn that the environment 
is $e_2$ (and be mistaken with arbitrarily small probability).
In contrast, the transition $(q_2,a,q_5)$ is revealing and the environment is $e_1$ with certainty
upon crossing that transition.

To solve the membership problem for limit-sure parity, 
we first convert $M$ into a revealed-form MEMDP $M'$, similar to the case of almost-sure
winning, with the obvious difference that revealing transitions $t=(q,a,q')$ of $M[e]$
are redirected in $M'[e]$ to $\winabsorb$ if $q' \in \limitsure(M[K_t], \varphi)$
is limit-sure winning when the set of environments is the knowledge $K_t$ after observing 
transition $t$. Thus, we aim for a recursive algorithm,  
where the base case is limit-sure winning in MEMDPs with one environment,
which are equivalent to plain MDPs, for which limit-sure and almost-sure parity coincide.
Note that the examples of \figurename~\ref{fig:limit-sure1} and 
\figurename~\ref{fig:limit-sure2} are in revealed form.

\begin{figure}[!t]
\hrule
\begin{center}
    \input{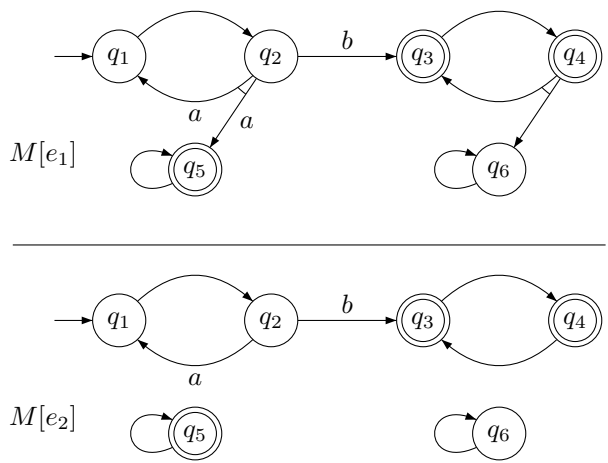}
\end{center} 
\hrule
 \caption{The set $\{q_1,q_2\}$ is an end-component in $e_2$, not in $e_1$.\label{fig:limit-sure2}}
\end{figure}

\subsection{Common End-Components and Learning}
\label{sec:cec}
A \emph{common end-component (CEC)} of an MEMDP $M = \tuple{Q, (A_q)_{q \in Q}, (\delta_e)_{e \in E}}$
is a pair $(Q',A')$ that is an end-component in $M[e]$ for all environments $e \in E$. 
A CEC~$D$ is \emph{trivial} if it contains a single state.
$D$ is said \emph{winning} for a parity condition $\Parity(p)$,
if for all $e\in E$, there is a strategy in $M[e]$ which, when started inside $D$,
ensures $\Parity(p)$ with probability~1. Notice that since $D$ is a common end-component,
such a strategy ensures $\Parity(p)$ with probability 1 in $M[e]$ iff it does in $M[e']$.

We note that the common end-components of an MEMDP are the end-components
of the MDP $\cup_{e \in E} M[e]$ assuming $M$ is in revealed form,
and thus can be computed using standard algorithm for end-components~\cite{DeAlfaro-phd97}.
\begin{restatable}{lemma}{cec}
	\label{lemma:cec}
	Consider an MEMDP~$M$ in revealed form.
	The common end-components of~$M$ are exactly the end-components of~$\cup_{e \in E} M[e]$.
\end{restatable}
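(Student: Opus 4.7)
\noindent\textbf{Proof plan for Lemma~\ref{lemma:cec}.} The plan is to prove the two inclusions separately by unpacking the definitions. Throughout, write $M_\cup = \cup_{e \in E} M[e]$, and recall that $\Supp(\delta_\cup(q,a)) = \bigcup_{e \in E} \Supp(\delta_e(q,a))$, so $\Supp(\delta_e(q,a)) \subseteq \Supp(\delta_\cup(q,a))$ for every environment $e$.

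\smallskip
\emph{Forward direction (CEC implies end-component of $M_\cup$).} Let $(Q',A')$ be a CEC of $M$. For every $q \in Q'$ and $a \in A'_q$, I have $\Supp(\delta_e(q,a)) \subseteq Q'$ in each $e \in E$, so $\Supp(\delta_\cup(q,a)) \subseteq Q'$ and the sub-MDP requirement is met. For strong connectivity, note that the edge set of the induced graph in $M_\cup$ contains the edge set of the induced graph in any fixed $M[e]$ (because taking unions of supports only adds edges); since the latter is already strongly connected by assumption, so is the former. This direction needs no hypothesis on revealed form.

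\smallskip
\emph{Backward direction (end-component of $M_\cup$ is a CEC).} Let $(Q',A')$ be an end-component of $M_\cup$ and fix $e \in E$. The sub-MDP condition $\Supp(\delta_e(q,a)) \subseteq Q'$ is immediate from $\Supp(\delta_e(q,a)) \subseteq \Supp(\delta_\cup(q,a)) \subseteq Q'$, so I only need to check strong connectivity of the induced graph in $M[e]$. The trivial case $\abs{Q'}=1$ is handled by observing that if $\{q\}$ with action $a$ is an end-component of $M_\cup$, then $\Supp(\delta_\cup(q,a))=\{q\}$, forcing $\Supp(\delta_e(q,a))=\{q\}$ as well.

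\smallskip
\emph{Key step (and main obstacle).} The nontrivial case is $\abs{Q'} \geq 2$, and this is precisely where revealed form is used. I will show that no transition $(q,a,q')$ with $q \in Q'$, $a \in A'_q$, $q' \in \Supp(\delta_\cup(q,a))$ can be revealing. Indeed, if such a transition were revealing, then by the revealed-form assumption $q'$ would be a sink in every environment, and hence in $M_\cup$ as well; but then the only state reachable from $q'$ inside $(Q',A')$ is $q'$ itself, which contradicts strong connectivity together with $\abs{Q'} \geq 2$. Consequently, for each such $(q,a)$, every $q' \in \Supp(\delta_\cup(q,a))$ satisfies $q' \in \Supp(\delta_e(q,a))$, giving $\Supp(\delta_e(q,a)) = \Supp(\delta_\cup(q,a))$. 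The induced graphs of $(Q',A')$ in $M[e]$ and in $M_\cup$ therefore have the same edge set, so strong connectivity transfers from $M_\cup$ to $M[e]$. Since $e$ was arbitrary, $(Q',A')$ is an end-component of $M[e]$ for all $e \in E$, i.e., a CEC of $M$.
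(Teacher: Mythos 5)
Your proof is correct and follows essentially the same route as the paper's: the forward inclusion by monotonicity of supports, and the backward inclusion by observing that a revealing transition inside a non-trivial end-component of $\cup_{e \in E} M[e]$ would lead to a sink and contradict strong connectivity, so all supports coincide. Your handling of the singleton case (deriving $\Supp(\delta_e(q,a))=\{q\}$ directly from $\Supp(\delta_\cup(q,a))=\{q\}$) is in fact slightly more careful than the paper's, which phrases that case only in terms of ``a single sink state.''
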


\begin{proof}
	Consider a common end-component~$D$ of~$M$. Because in each $M[e]$, all state-action pairs in $D$
	stay inside~$D$, and~$D$ is strongly connected, this is also the case in $\cup_{e \in E} M[e]$;
	thus~$D$ is an end-component of the latter.

	Conversely, consider an end-component~$D$ of $\cup_{e \in E} M[e]$.	
	If~$D$ consists of a single sink state, then it is indeed
	a common end-component. Otherwise $D$ contains more than one state.
	We show that all state-action pairs~$(q,a)$ of~$D$ must have the same support in all environments,
	and it follows that $D$ is an end-component in every environment, thus a common end-component.
	By contradiction, if a transition $(q,a,q')$ with $(q,a) \in D$ exists in $M[e]$ but not in~$M[f]$,
	then it is revealing and $q'$ is a sink state. Hence $D$ is not strongly connected in $\cup_{e \in E} M[e]$
	because~$D$ does not consist of a single sink state. 
\end{proof}

A CEC may have different transition probabilities in different environments.
We call a CEC \emph{distinguishing} if it contains a transition $(q,a,q')$ (called a distinguishing transition) such that
$\delta_e(q,a)(q') \neq   \delta_{f}(q,a)(q')$
for some environments $e,f \in E$. 
Given a distinguishing transition $(q,a,q')$ and environment~$e$, define
$K_1 = \{f \in E \mid \delta_f(q,a)(q') = \delta_e(q,a)(q')\}$ and $K_2 = E \setminus K_1$.
We say that $(K_1,K_2)$ is a \emph{distinguishing partition} of~$D$ that is \emph{induced} by the distinguishing 
transition $(q,a,q')$ and environment~$e$.

Distinguishing transitions can be used to learn the partition $(K_1,K_2)$, that is to guess (correctly with high probability) whether the current environment is in $K_1$ or $K_2$, as in the example of \figurename~\ref{fig:limit-sure1}, where the set $D = \{q_1,q_2\}$ is a distinguishing end-component with distinguishing transition $(q_1,\cdot,q_2)$ and partition $(\{e_1\},\{e_2\})$. A distinguishing CEC may have several distinguishing transitions and induced partitions.

We formalize how a strategy can distinguish between $K_1$ and $K_2$ with high probability inside a distinguishing CEC.
First let us recall Hoeffding's inequality.

\begin{theorem}[Hoeffding's Inequality~\cite{Hoeffding-1963}]
	\label{thm:Hoeffding}
	Let $X_1,X_2,\ldots,X_n$ be a sequence of independent and identical Bernoulli variables with $\pr[X_i] = p$,
	and write $S_n = X_1+\ldots +X_n$. For all~$t> 0$,
	\(
		\pr[S_n - \expect[S_n] \geq t] \leq e^{-2t^2/n},
	\)
	and
	\(
		\pr[\expect[S_n] -S_n\geq t] \leq e^{-2t^2/n}.
	\)
\end{theorem}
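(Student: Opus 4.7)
The standard route is the Chernoff bound combined with Hoeffding's lemma, and since the statement here concerns Bernoulli variables (hence bounded in $[0,1]$), the classical proof goes through cleanly. First I would apply Markov's inequality to the moment generating function: for any parameter $s > 0$,
\[
\pr[S_n - \expect[S_n] \geq t] \;=\; \pr\bigl[e^{s(S_n - \expect[S_n])} \geq e^{st}\bigr] \;\leq\; e^{-st}\,\expect\bigl[e^{s(S_n - \expect[S_n])}\bigr].
\]
Since the $X_i$ are independent, the MGF factorizes as $\expect[e^{s(S_n - \expect[S_n])}] = \prod_{i=1}^{n} \expect[e^{s(X_i - p)}]$, which reduces the problem to controlling the MGF of a single centered Bernoulli variable.

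The crux is Hoeffding's lemma: for any zero-mean random variable $Y$ supported on $[a,b]$, one has $\expect[e^{sY}] \leq e^{s^2(b-a)^2/8}$. Applied to $Y_i = X_i - p$, which lies in $[-p, 1-p]$ with $b-a = 1$, this yields $\expect[e^{s(X_i-p)}] \leq e^{s^2/8}$. This lemma is the main technical step; the standard argument uses convexity of $x \mapsto e^{sx}$ to dominate $e^{sy}$ on $[a,b]$ by the linear interpolant $\frac{b-y}{b-a}e^{sa} + \frac{y-a}{b-a}e^{sb}$, takes expectation using $\expect[Y] = 0$, and then shows (via a second-derivative bound on $\log$ of the resulting expression in $s$, combined with a Taylor expansion around $s = 0$) that the bound reduces to $e^{s^2(b-a)^2/8}$.

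Combining the two yields $\pr[S_n - \expect[S_n] \geq t] \leq e^{-st + n s^2/8}$. Optimizing the right-hand side in $s$ by differentiating $-st + ns^2/8$ and setting the derivative to zero gives $s = 4t/n$, and substituting back produces the exponent $-2t^2/n$, which is exactly the claimed bound. For the lower-tail inequality, I would apply the identical argument to the variables $-(X_i - p) = p - X_i$, which are zero-mean and supported on the interval $[p-1, p]$ of the same width $1$; the symmetric bound $\expect[e^{s(p-X_i)}] \leq e^{s^2/8}$ yields the second inequality by the same optimization. Both tail bounds can thus be presented together with a single calculation. Since the statement is cited to Hoeffding's original 1963 paper, no formal proof is required in the paper itself, but this is the proof I would give if one were needed.
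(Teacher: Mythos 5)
The paper states this as a cited classical result (Hoeffding 1963) and gives no proof of its own, so there is nothing to compare against; your Chernoff-bound argument via Hoeffding's lemma is the standard and correct derivation, and the optimization $s = 4t/n$ does yield exactly the exponent $-2t^2/n$ claimed, with the lower tail following by the symmetric argument on $p - X_i$. No gaps.
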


Given a distinguishing CEC with distinguishing partition  $(K_1,K_2)$ induced by a transition $(q,a,q')$, a strategy can sample the distribution
$\delta_e(q,a)$ by repeating the following two phases: first, use a pure memoryless strategy to almost-surely visit $q$, then play action $a$; by repeating this long enough (the precise bound depends on a given $\epsilon$ and is derived from Theorem~\ref{thm:Hoeffding}) while storing the frequency of visits to $q'$ in the second phase, we can learn and guess in which block $K_i$ belongs the environment, with sufficiently small probability of mistake to ensure winning with probability $1-\epsilon$.

\begin{restatable}{lemma}{strategyfordistmecs}
  \label{lemma:strategy-for-dist-mecs}
  Given an MEMDP~$M$ containing a distinguishing common end-component~$D$ with partition $(K_1,K_2)$
  induced by a distinguishing transition, and parity objective~$\varphi$,
	for all states $q_0$ in~$D$, all pairs of strategies $\sigma_1,\sigma_2$, and all $\epsilon>0$, there exists 
	a strategy~$\sigma$ such that:
  \begin{align*}
		\pr_{q_0}^{\sigma}(M[e], \varphi) \geq (1-\epsilon)\pr_{q_0}^{\sigma_1}(M[e], \varphi) \text{ for all } e \in K_1,\\
		\pr_{q_0}^{\sigma}(M[e], \varphi) \geq (1-\epsilon)\pr_{q_0}^{\sigma_2}(M[e], \varphi) \text{ for all } e \in K_2.
	\end{align*}
	
    Moreover, the strategy $\sigma$ is pure if both $\sigma_i$ are pure; and if each strategy $\sigma_i$ uses a memory of size $m_i$,
	then $\sigma$ uses finite memory of size $m_1+m_2+\lceil 8\frac{\log(1/\epsilon)}{\eta^2}^2\rceil$ where \linebreak $\eta=\min\left(\{\abs{\delta_e(q,a)(q') - \delta_{f}(q,a)(q')} \mid e,f \in E, q,q'\in Q, a \in \Act\}\setminus\{0\}\right)$.
\end{restatable}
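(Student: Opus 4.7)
The plan is to construct $\sigma$ as a two-phase strategy. In the \emph{learning phase}, $\sigma$ exploits the distinguishing transition $(q,a,q')$ to collect empirical statistics that let it guess, with high probability, which block of the partition $(K_1,K_2)$ contains the unknown environment. In the \emph{execution phase}, $\sigma$ resets to $q_0$ inside $D$ and then plays $\sigma_1$ or $\sigma_2$ according to the guess; prefix-independence of parity ensures that the tail probability of $\varphi$ is preserved.

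For the learning phase, fix a pure memoryless strategy $\tau$ supported on $D$ that reaches $q$ with probability $1$ in every environment; such $\tau$ exists because $D$ is a common end-component. The phase consists of $N$ rounds, each round playing $\tau$ until $q$ is visited and then playing $a$ once and recording whether the successor is $q'$. Conditioned on the environment being $e$, these $N$ observations are i.i.d.\ Bernoulli variables with mean $p_e := \delta_e(q,a)(q')$. By Theorem~\ref{thm:Hoeffding} applied to the sum $S_N$ of the observations,
\[
\Pr\!\left[\,\left|S_N/N - p_e\right| \geq \eta/2\,\right] \;\leq\; 2\,\mathrm{e}^{-N\eta^2/2}.
\]
Choosing $N = \lceil 2\log(4/\epsilon)/\eta^2\rceil$ makes this bound at most $\epsilon/2$. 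Let $p_1$ be the common value $\delta_f(q,a)(q')$ for $f\in K_1$. The decision rule is: output $i=1$ iff $|S_N/N - p_1| < \eta/2$. Since for every $f\in K_2$ we have $|p_f - p_1|\geq \eta$ by definition of $\eta$, the guess is correct with probability at least $1-\epsilon/2$, uniformly over $e\in K_1\cup K_2$.

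After the decision, $\sigma$ plays a pure memoryless strategy on $D$ that reaches $q_0$ (which succeeds with probability $1$ inside the CEC), and then plays $\sigma_i$ from $q_0$ as if starting a fresh history. Because $\Parity(p)$ is prefix-independent, conditioning on the event $\{\text{decision}=i\}$ the probability of $\varphi$ along the remaining run equals $\pr_{q_0}^{\sigma_i}(M[e],\varphi)$. Writing $i_e$ for the block containing $e$ and averaging over the decision,
\[
\pr_{q_0}^{\sigma}(M[e],\varphi) \;\geq\; \Pr[\text{decision}=i_e]\cdot \pr_{q_0}^{\sigma_{i_e}}(M[e],\varphi) \;\geq\; (1-\epsilon)\,\pr_{q_0}^{\sigma_{i_e}}(M[e],\varphi),
\]
which is the required inequality. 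Purity and memory follow by inspection: the learning phase takes deterministic choices (the memoryless $\tau$ and then the fixed action $a$); the decision is a deterministic function of the history; and the tail inherits purity from $\sigma_i$. The memory during the learning phase is dominated by a round counter and a success counter, each of size at most $N+1$, giving $O(N^2)=O(\log^2(1/\epsilon)/\eta^4)$ states, on top of $m_1+m_2$ carried for the tail.

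The main obstacle is the probabilistic bookkeeping: the learning phase only terminates with probability~$1$ rather than within a deterministic number of steps, and the ``reset to $q_0$'' segment has the same feature. Both are handled because parity is a tail property, so the measure-zero event of non-termination does not affect $\pr_{q_0}^{\sigma}(M[e],\varphi)$, and because the Markov property of each $M[e]$ lets us argue that, conditioned on reaching $q_0$ with decision~$i$, the distribution of the continuation is exactly that of $\sigma_i$ started fresh at $q_0$.
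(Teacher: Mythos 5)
Your proof is correct and follows essentially the same route as the paper's: sample the distinguishing transition $N=O(\log(1/\epsilon)/\eta^2)$ times via a strategy that keeps returning to $q$ inside $D$, apply Hoeffding's inequality to the empirical frequency with decision threshold $\eta/2$ around the $K_1$-value, then return to $q_0$ and switch to $\sigma_1$ or $\sigma_2$, invoking prefix-independence of parity. The only (shared) imprecision is the claim that a pure \emph{memoryless} strategy reaches $q$ almost surely in all environments at once; when supports differ across environments this may require a little memory or randomization, but this does not affect the stated memory bound or the conclusion.
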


\begin{proof}
	Consider $M=\tuple{Q,(A_q)_{q \in Q}, (\delta_e)_{e \in E}}$ and $D = (Q',(A'_q)_{q \in Q'})$ as in the statement of the lemma,
	and let $q_0 \in D$. 
        
	Consider a distinguishing transition $(q,a,q')$ and environment~$e_0$ that induces the distinguishing partition $(K_1,K_2)$.
	Consider~$\epsilon>0$, and define~$N = \lceil\frac{2\log(1/\epsilon)}{\eta^2}\rceil$,

	The strategy~$\sigma$ runs in two phases. In the first phase, the
	goal is to estimate the distribution of $(q,a,q')$. For this, it executes
	a pure memoryless strategy which has a nonzero probability of reaching $q$
	while staying in $D$ (such a strategy can be defined based on the supports of state-action pairs of $D$)
	and keeps two counters: $c_{q,a}$ that counts the number of times the state-action pair $(q,a)$ is selected;
	and $c_{q,a,q'}$ the number of times the transition $(q,a,q')$ is observed.
	The second round of the strategy starts when $c_{q,a} = N$. Note that this happens with probability 1.
	Then, we go back to~$q_0$ (with probability 1), and we switch to 
	\begin{itemize}
		\item $\sigma_1$ if $\left\lvert\frac{c_{q,a,q'}}{c_{q,a}} - \delta_{e_0}(q,a)(q')\right\rvert < \eta/2$,
		\item $\sigma_2$ otherwise.
	\end{itemize}
	We now analyze this strategy and show that because $N$ is sufficiently large, 
	the estimation error is bounded, so that we obtain the desired result.
	
	In each environment~$e$, at each visit at~$q$ and choice of~$a$, we have a
	Bernoulli trial with mean $\delta_e(q,a)(q')$, and~$c_{q,a,q'}$ is the
	number of successful trials.
	By Hoeffding's inequality (Theorem~\ref{thm:Hoeffding}), we have
	\[
		\pr_{q_0}^{\sigma}\left(M[e], \left\lvert c_{q,a,q'}/c_{q,a} - \delta_e(q,a)(q')\right\rvert \geq 
		\eta/2 \bigm\vert c_{q,a} = N \right)
		\leq e^{-2N(\frac{\eta}{2})^2} \leq \epsilon.
	\]
	Thus, in $M[e]$ with $e \in K_i$, the probability of not switching to $\sigma_i$ is at most $\epsilon$.
	It follows that $\pr_{q_0}^{\sigma}(M[e],\varphi) \geq (1-\epsilon)\pr_{q_0}^{\sigma_i}(M[e], \varphi)$.

	The memory requirement comes from the fact that $\sigma$ must store two counters up to $N$ values, and it has two modes
	(before and after reaching $c_{q,a} = N$).
	\end{proof}

It follows that the membership problem for limit-sure winning can be decomposed into subproblems where the set of environments is one of the blocks $K_i$ in the partition.

\begin{restatable}{lemma}{strategyformecas}
  \label{lemma:strategy-for-mec-as}
  Given an MEMDP~$M$ containing a distinguishing common end-component $D$ with a partition $(K_1,K_2)$
  induced by a distinguishing transition, and a parity objective $\varphi$ the following equivalence holds:
  $D \subseteq \limitsure(M,\varphi)$ if and only if $D \subseteq \limitsure(M[K_1],\varphi)$ and 
  $D \subseteq \limitsure(M[K_2],\varphi)$. 
\end{restatable}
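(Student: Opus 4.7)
The plan is to prove the two implications separately, with the nontrivial work concentrated in the backward direction via a straightforward application of Lemma~\ref{lemma:strategy-for-dist-mecs}.

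For the forward direction ($\Rightarrow$), the argument is immediate: any strategy that is $(1-\epsilon)$-winning in every environment of $E = K_1 \cup K_2$ is in particular $(1-\epsilon)$-winning in every environment of $K_1$ (resp.\ $K_2$), since $M[K_i]$ has exactly the same state-action structure as $M$ and only fewer environments. Hence $D \subseteq \limitsure(M[K_i], \varphi)$ for $i=1,2$.

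For the backward direction ($\Leftarrow$), fix a state $q_0 \in D$ and a target error $\epsilon > 0$. Set $\delta = \epsilon/2$. By the hypothesis $D \subseteq \limitsure(M[K_1],\varphi)$, there exists a strategy $\sigma_1$ with $\pr_{q_0}^{\sigma_1}(M[e],\varphi) \geq 1-\delta$ for all $e \in K_1$; symmetrically, there exists $\sigma_2$ with $\pr_{q_0}^{\sigma_2}(M[f],\varphi) \geq 1-\delta$ for all $f \in K_2$. Since $D$ is a distinguishing CEC with induced partition $(K_1,K_2)$, Lemma~\ref{lemma:strategy-for-dist-mecs} applied with parameter $\delta$ (using a sampling phase inside $D$ based on the distinguishing transition) yields a single strategy $\sigma$ satisfying
\[
\pr_{q_0}^{\sigma}(M[e],\varphi) \geq (1-\delta)\,\pr_{q_0}^{\sigma_i}(M[e],\varphi) \geq (1-\delta)^2 \geq 1 - 2\delta = 1-\epsilon
\]
for every $e \in K_i$, $i \in \{1,2\}$. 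Because $E = K_1 \cup K_2$, the same $\sigma$ is $(1-\epsilon)$-winning in every environment of $M$, so $q_0 \in \limitsure(M,\varphi)$. Since $q_0 \in D$ was arbitrary, $D \subseteq \limitsure(M,\varphi)$.

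The only delicate point in the argument is the composition of error bounds, which is handled cleanly by setting $\delta = \epsilon/2$ and using $(1-\delta)^2 \geq 1-\epsilon$; the genuinely nontrivial work (the sampling and Hoeffding-based learning inside $D$) has already been packaged into Lemma~\ref{lemma:strategy-for-dist-mecs}, so no further technical obstacle remains.
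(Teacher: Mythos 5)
Your proof is correct and follows exactly the route the paper intends: the paper's own proof is literally the one-line remark ``Immediate consequence of Lemma~\ref{lemma:strategy-for-dist-mecs}'', and your write-up just fills in the routine details (the trivial forward direction and the $\delta=\epsilon/2$ error composition in the backward direction).
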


\begin{proof}
	Immediate consequence of Lemma~\ref{lemma:strategy-for-dist-mecs}.
\end{proof}

\subsection{Characterization and Algorithm}

Here, we assume that MEMDPs are in revealed form
with sink states $\winabsorb$ and $\loseabsorb$.

We show that the winning region $W = \limitsure(M,\varphi)$ 
for limit-sure parity is a closed set: from every state $q \in W$,
there exists an action $a$ ensuring in all environments
that all successors of~$q$ are in $W$.
We call such actions \emph{limit-sure safe} for $q$.
We show in Lemma~\ref{lemma:only11-inf} that a limit-sure safe action always exists in 
limit-sure winning states. 
Note that playing actions that are \emph{not} limit-sure safe may be useful 
for limit-sure winning, as in the example of \figurename~\ref{fig:limit-sure1} 
where action $a$ is limit-sure safe, but action $b$ is not (from $q_2$).

By definition of limit-sure winning, if a state $q$ is not limit-sure winning,
there exists $\epsilon_q > 0$ such that for all strategies $\sigma$, there
exists an environment $e \in E$ such that $\pr_{q}^{\sigma}(M[e], \varphi) < 1-\epsilon_q$.
We denote by $\epsilon_0 = \min \{\epsilon_q \mid q \in Q \setminus \limitsure(M,\varphi)\}$
a uniform bound.

\begin{restatable}{lemma}{onlyinf}
\label{lemma:only11-inf}
	Given an MEMDP~$M$ (in revealed form) over environments $E$, a parity objective~$\varphi$, and a state~$q$,
	if~$q \in \limitsure(M,\varphi)$ is limit-sure winning, then there exists an action~$a$ such that
	for all environments $e\in E$, all successors of $q$ are limit-sure winning, 
	\textit{i.e} $\Supp(\delta_e(q,a))\subseteq \limitsure(M,\varphi)$.
\end{restatable}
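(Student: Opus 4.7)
The plan is to argue by contradiction: assume $q \in \limitsure(M,\varphi)$ admits no limit-sure safe action and derive a uniform positive lower bound on the losing probability in some environment. Fix $\epsilon_0 = \min\{\epsilon_{q'} \mid q' \in Q \setminus \limitsure(M,\varphi)\} > 0$ from the uniform bound introduced just before the lemma, and let $\beta$ be the smallest positive transition probability occurring in $M$. For each $a \in A_q$, failure of limit-sure safety yields some environment $e_a \in E$ and a successor $q'_a$ with $\delta_{e_a}(q,a)(q'_a) \geq \beta$ and $q'_a \notin \limitsure(M, \varphi)$.

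Given any strategy $\sigma$ with $\pr^\sigma_q(M[e], \varphi) \geq 1 - \epsilon$ for all $e \in E$, I would attach to each action $a$ with $\sigma(q)(a) > 0$ a ``witness'' environment $e^*_a$. If $(q,a,q'_a)$ is not revealing, let $\tau_a$ denote the residual strategy of $\sigma$ after the history $q \cdot a \cdot q'_a$; since $q'_a \notin \limitsure$, some environment $e^*_a$ satisfies $\pr^{\tau_a}_{q'_a}(M[e^*_a], \varphi) \leq 1 - \epsilon_0$, and $(q,a,q'_a)$ is present in $M[e^*_a]$ with probability at least $\beta$ because it is not revealing. If $(q,a,q'_a)$ is revealing, the revealed-form hypothesis forces $q'_a$ to be a sink; since $q'_a \notin \limitsure$, the parity condition is violated at $q'_a$ with probability $1$ in every environment, and I take $e^*_a = e_a$. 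In both cases, the contribution of $a$ to the losing probability of $\sigma$ in $M[e^*_a]$ is at least $\sigma(q)(a) \cdot \beta \cdot \epsilon_0$.

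Since $\abs{E}$ is finite, pigeonhole provides some $e^* \in E$ chosen as $e^*_a$ for a set of actions whose total $\sigma(q)$-probability is at least $1/\abs{E}$. Summing the above contributions over these mutually exclusive first-action events, the losing probability from $q$ in $M[e^*]$ under $\sigma$ is at least $\beta \cdot \epsilon_0 / \abs{E}$, a constant independent of $\sigma$ and $\epsilon$. Choosing $\epsilon < \beta \cdot \epsilon_0 / \abs{E}$ then contradicts $\pr^\sigma_q(M[e^*], \varphi) \geq 1 - \epsilon$, forcing some $a \in A_q$ to be limit-sure safe.

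The main obstacle is aligning the ``bad'' successor $q'_a$, chosen to witness failure of safety in environment $e_a$, with an environment in which the residual tail strategy actually exhibits a loss of at least $\epsilon_0$. The revealed-form hypothesis is what enables this alignment: non-revealing transitions occur in every environment, so one can freely switch to the environment selected by the non-limit-sureness of $q'_a$; revealing transitions, by revealed form, lead to sinks, and losing sinks contribute losses in every environment regardless of the tail strategy.
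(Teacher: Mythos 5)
Your proof is correct and follows essentially the same argument as the paper's: assume no limit-sure safe action, then combine the uniform bound $\epsilon_0$, the minimum transition probability, and the revealed-form property (a non-revealing bad transition is present in every environment, while a revealing one leads to a losing sink) to exhibit a losing probability bounded away from zero, contradicting limit-sure winning. The only difference is cosmetic: you pigeonhole over environments to obtain the constant $\beta\epsilon_0/\abs{E}$, whereas the paper takes the single action of maximal probability at $q$ to obtain $\nu\epsilon_0/\abs{A}$; both constants are independent of $\sigma$ and yield the required contradiction.
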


\begin{proof}
	Consider $q \in \limitsure(M,\varphi)$ and let $0 < \epsilon < \frac{\nu \epsilon_0}{\abs{A}}$, 
	where $A$ is the set of actions in $M$, and $\nu$ is a lower bound on the smallest nonzero transition probability (in all environments),
	and $\epsilon_0$ is the uniform bound defined above.
	Let $\sigma$ be a strategy ensuring $\varphi$ from~$q$ with probability at least $1-\epsilon$ in all environments.
		
	Towards contradiction, assume that there is no limit-sure safe action from state $q$.
	Let $a$ be the action chosen by $\sigma$ with the highest probability at the history~$q$, that is $a = \arg\max_a \sigma(q)(a)$, and
	thus $\sigma(q)(a) \geq \frac{1}{\abs{A}}$.
	By our assumption, there exists an environment $e \in E$ and a state 
	$t \not \in \limitsure(M,\varphi)$ (in particular $t \neq \winabsorb$) such that 
	$\delta_e(q,a)(t) > 0$, hence $\delta_e(q,a)(t) \geq \nu$. It is immediate that $t \neq \loseabsorb$
	as otherwise the strategy $\sigma$ would ensure $\varphi$ with probability at most $1-\nu \leq 1-\epsilon$ from $q$.
	So $t \not\in \{\winabsorb,\loseabsorb\}$ and therefore $\delta_e(q,a)(t) \geq \nu$ in all environments~$e$.
	By definition of the uniform bound $\epsilon_0$, there exists an environment $e$ such that 
	$\pr_{t}^\sigma(M[e],\varphi) \leq 1-\epsilon_0$, 
	hence from $q$ we have $\pr_{q}^\sigma(M[e],\lnot \varphi) \geq \frac{\nu\epsilon_0}{\abs{A}} > \epsilon$,
	in contradiction to $\sigma$ ensuring $\varphi$ with probability at least $1-\epsilon$ from~$q$.
	We conclude that there exists a limit-sure safe action from $q$.
\end{proof}

Given an MEMDP $M$, consider the limit-sure winning region $W = \limitsure(M,\varphi)$ 
for $\varphi = \Parity(p)$.
For the purpose of the analysis, consider the (memoryless) randomized strategy $\sigma_{\limitsure}$ that plays uniformly at random 
all limit-sure safe actions in every state $q \in W$, 
which is well-defined by Lemma~\ref{lemma:only11-inf}. 

Consider an arbitrary environment $e$, and an end-component $D$ in $M[e]$ that is positive under $\sigma_{\limitsure}$
(recall Lemma~\ref{lem:basic-ec} and the definition afterward). There are three possibilities:

\vspace{-0.1cm}
\begin{enumerate}
\item $D$ is not a common end-component (as in the example of \figurename~\ref{fig:limit-sure2}, for $D = \{q_1,q_2\}$ in $M[e_2]$),
that is, $D$ is not an end-component in some environment $e'$ (in the example $e' = e_1$), then we can 
learn (and be mistaken with arbitrarily small probability) that we are not in $e'$,
reducing the problem to an MEMDP with fewer environments (namely, $M[\lnot e']$);

\item $D$ is a common end-component and is distinguishing (as in the example of \figurename~\ref{fig:limit-sure1}, 
for $D = \{q_1,q_2\}$), then we can also learn a distinguishing partition $(K_1,K_2)$
and reduce the problem to MEMDPs with fewer environments (namely, $M[K_1]$ and $M[K_2]$);

\item $D$ is a common end-component and is non-distinguishing, then 
we show in Lemma~\ref{lem:ndCEC-is-winning} below that $D$ is almost-sure winning 
($D \subseteq \almostsure(M, \varphi)$), obviously in all environments.
\end{enumerate}
\vspace{-0.2cm}

\begin{restatable}{lemma}{ndCECiswinning}
\label{lem:ndCEC-is-winning}
	Given an MEMDP~$M$ over environments $E$ (in revealed form), a parity objective~$\varphi$, and a state~$q$,
	if~$q \in \limitsure(M,\varphi)$, then all non-distinguishing common end-components $D$
	that are positive under strategy $\sigma_{\limitsure}$ from $q$ in $M[e]$ (for some $e \in E$)
	are almost-sure winning for $\varphi$ (that is $D \subseteq \almostsure(M,\varphi)$).
\end{restatable}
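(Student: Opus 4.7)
I will proceed by contradiction: assume $D$ is not almost-sure winning, and derive a contradiction with $q\in\limitsure(M,\varphi)$. Since $D$ is a non-distinguishing CEC, the sub-MDP on $Q_D$ induced by the EC-actions of $D$ is the same across all environments, and by the standard end-component parity dichotomy its value is either $0$ or $1$. Value~$1$ yields a pure memoryless strategy winning parity with probability~$1$ simultaneously in every environment, so $D\subseteq\almostsure(M,\varphi)$ and we are done. Thus the assumption reduces to: $D$ is \emph{parity-losing} (value~$0$), meaning every sub-EC of $D$ has odd minimum priority.

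I will rely on two structural consequences of the hypotheses. First, $Q_D\subseteq W:=\limitsure(M,\varphi)$, because $\sigma_{\limitsure}$ is defined only on $W$, and $Q_D$ is reachable and absorbing under $\sigma_{\limitsure}$ from $q$ in $M[e]$. Second, every limit-sure safe action at a state of $D$ is an EC-action of the CEC $D$ (and hence has environment-independent transition probabilities by non-distinguishing). Indeed, since $D$ is positive under $\sigma_{\limitsure}$ in $M[e]$, the BSCC of $M[e]|_{\sigma_{\limitsure}}$ has state set $Q_D$; as $\sigma_{\limitsure}$ plays all limit-sure safe actions, every such action has support contained in $Q_D$ in $M[e]$. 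The hypothesis that $D$ (and not merely its state set) is a CEC forces this action set to also define an end-component in every other environment, hence every limit-sure safe action at $Q_D$ has support contained in $Q_D$ in every environment.

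Fix $q'\in Q_D$ and, for each $\epsilon>0$, let $\sigma_\epsilon$ be a strategy from $q'$ achieving $\pr^{\sigma_\epsilon}_{q'}(M[e'],\varphi)\geq 1-\epsilon$ for every $e'\in E$. By Lemma~\ref{lem:basic-ec} and the parity-losing assumption, any run under $\sigma_\epsilon$ that stays in $Q_D$ forever has $\Inf$ almost surely equal to a sub-EC of $D$ with odd minimum priority, so $\varphi$ fails on such runs. Therefore $\sigma_\epsilon$ must exit $Q_D$ in $M[e]$ with probability at least~$1-\epsilon$; and by the second observation above, any exit in $M[e]$ requires playing a non-limit-sure-safe action at some step.

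Each non-safe action $a$ at a state of $D$ has a bad environment $e^*\in E$ with $\Supp(\delta_{e^*}(\cdot,a))\not\subseteq W$; a play of $a$ in $M[e^*]$ lands outside $W$ with probability at least $\nu$ (the minimum nonzero transition probability), from which the probability of losing is at least $\epsilon_0$ (the uniform bound defined before Lemma~\ref{lemma:only11-inf}). So the expected number of plays in $M[e^*]$ of non-safe actions bad in $e^*$ is at most $\epsilon/(\nu\epsilon_0)$. Before any non-safe play $\sigma_\epsilon$ uses only limit-sure safe (hence EC) actions, which are environment-independent, so the distribution of histories, and in particular the probability that a specified action is the first non-safe play, coincide across all environments. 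Summing the per-environment bounds yields a total probability of playing a non-safe action in $M[e]$ of at most $|E|\cdot\epsilon/(\nu\epsilon_0)$. Combined with the lower bound $1-\epsilon$, this gives $\epsilon\geq\nu\epsilon_0/(|E|+\nu\epsilon_0)>0$, which cannot hold for arbitrarily small $\epsilon$, contradicting $q'\in W$. The main obstacle is the second structural observation, which crucially uses that $D$ (as a pair) is a CEC to rule out non-EC limit-sure safe actions at $Q_D$ that could escape to winning sinks in other environments via revealing transitions and thereby break the environment-independence of the pre-non-safe history distribution.
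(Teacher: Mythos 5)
Your overall route is essentially the paper's: argue by contradiction that $D$ has parity value $0$, observe that the limit-sure safe actions at states of $Q_D$ are exactly the actions of the positive CEC (hence stay in $Q_D$ in every environment and have environment-independent distributions), associate to each non-safe action a bad environment, and derive a quantitative contradiction from the uniform bounds $\nu$ and $\epsilon_0$ together with the fact that the distribution of histories up to the first non-safe play is identical across environments. Your bookkeeping (summing the per-environment bounds $\epsilon/(\nu\epsilon_0)$ over $E$) is an equivalent variant of the paper's pigeonhole step, which instead picks one environment receiving conditional probability at least $1/\abs{E}$.

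There is, however, one step that fails as written: the claim that any run under $\sigma_\epsilon$ that stays in $Q_D$ forever has $\Inf$ almost surely equal to a sub-EC of $D$, and hence violates $\varphi$. A run can remain in $Q_D$ forever while playing non-safe actions infinitely often, as long as the sampled successors happen to land in $Q_D$; by Lemma~\ref{lem:basic-ec} such a run settles in an end-component supported in $Q_D$ but using actions outside the action set of $D$, and nothing forces that end-component to have odd minimum priority even when every sub-EC of $D$ does (for instance, a non-safe action that in $M[e]$ self-loops on a state of $Q_D$ carrying the only even priority, while in another environment it can leave $W$). So the conclusion that $\sigma_\epsilon$ exits $Q_D$ with probability at least $1-\epsilon$ is not justified. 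The repair is precisely the paper's dichotomy: condition on the event $\Omega_{safe}$ that only limit-sure safe actions are played, rather than on staying in $Q_D$. Runs in $\Omega_{safe}$ are confined to the actions of $D$, so they lose almost surely and $\pr(\lnot\Omega_{safe}) \geq 1-\epsilon$; the rest of your argument only ever uses ``a non-safe action is played'' (not ``$Q_D$ is exited''), so it goes through unchanged. A minor further remark: you only need that the probability of ever playing a bad-for-$e^*$ action in $M[e^*]$ is at most $\epsilon/(\nu\epsilon_0)$, which follows directly from considering the first such play; the stronger statement about the \emph{expected number} of such plays is neither needed nor immediate.
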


\begin{proof}
	Consider a positive non-distinguishing common end-component $D$ as in 
	the statement of the lemma. Using Lemma~\ref{lemma:only11-inf}, note that $D \subseteq \limitsure(M,\varphi)$
	since $\sigma_{\limitsure}$ plays only limit-sure safe actions and $D$ is a common end-component.

	Assume towards contradiction that $D$ is not almost-sure winning for the parity objective $\varphi$.
	It follows that in $M$, all strategies that play only limit-sure safe 
	actions ensure the parity objective $\varphi$ with probability~$0$
	from all states in $D$ (in all environments since $D$ is a common
	end-component).

	Denote by $\Omega_{safe}$ the set of all runs that contain only limit-sure safe actions.
	For all strategies $\sigma$ (in $M$), and $q \in D$ we have 
	$\pr_{q}^{\sigma}(M[e], \varphi \mid \Omega_{safe}) = 0$ (for all $e\in E$)
	and therefore:

	\begin{align*}
		\pr_{q}^{\sigma}(M[e], \varphi)  = \ & \pr_{q}^{\sigma}(M[e], \varphi \mid \Omega_{safe})
		\cdot \pr_{q}^{\sigma}(M[e], \Omega_{safe}) \\
		& + \pr_{q}^{\sigma}(M[e], \varphi \mid \lnot \Omega_{safe})
		\cdot \pr_{q}^{\sigma}(M[e], \lnot \Omega_{safe}) \\
		= \ & \pr_{q}^{\sigma}(M[e], \varphi \mid \lnot \Omega_{safe})
		\cdot \pr_{q}^{\sigma}(M[e], \lnot \Omega_{safe}) \\
		\leq  \ & 1 - \pr_{q}^{\sigma}(M[e], \lnot \varphi \mid \lnot \Omega_{safe})
	\end{align*}

	Given $\epsilon < \frac{\epsilon_0 \cdot \nu }{\abs{E}}$ 
	where $\nu$ is the smallest positive probability in $M$,
	we show that there exists an environment $e \in E$
	such that $\pr_{q}^{\sigma}(M[e], \varphi) < 1-\epsilon$, which entails that 
	$q$ is not limit-sure winning for $\varphi$, establishing a contradiction since
	$q \in D \subseteq \limitsure(M,\varphi)$. It will follow that $D$ is almost-sure winning for $\varphi$ and conclude the proof.

	By definition of limit-sure safe actions,
	to every pair $(q,a)$ such that $a \in A_q$ is not limit-sure safe in $q$, 
	we can associate an environment $e$ such that: 
	$$ \Supp(\delta_{e}(q,a)) \cap (Q \setminus \limitsure(M, \varphi)) \neq \emptyset,$$
	and thus from some state $q' \in \Supp(\delta_{e}(q,a))$, we have
	$\pr_{q'}^{\sigma}(M[e], \varphi) \leq 1-\epsilon_0$
	where $\epsilon_0$ is the uniform bound for non-limit-sure winning states.
	Assuming that a non-limit-sure safe action is played by $\sigma$, since there are finitely 
	many environments, by the pigeonhole principle there is an environment $e$ 
	such that with probability at least $\frac{1}{\abs{E}}$ an action
	that is not limit-sure safe and associated with $e$ is played,
	which leads with probability at least $\nu$ to a state outside $\limitsure(M, \varphi)$. 
	It follows that 
	$\pr_{q}^{\sigma}(M[e], \lnot \varphi \mid \lnot \Omega_{safe}) 
	\geq \epsilon_0 \cdot \frac{\nu}{\abs{E}} > \epsilon$
	and thus $\pr_{q}^{\sigma}(M[e], \varphi) < 1-\epsilon$,
	which concludes the proof.
\end{proof}

Our approach to compute the limit-sure winning states is to first identify the
distinguishing CECs that are limit-sure winning. We can compute
the maximal CECs using Lemma~\ref{lemma:cec}, and note that
a maximal CEC containing a distinguishing CEC is itself distinguishing, so it is sufficient
to consider maximal CECs. By Lemma~\ref{lemma:strategy-for-mec-as}, we can decide
if a given distinguishing CEC is limit-sure winning using 
a recursive procedure on MEMDPs with fewer environments.
We show in Lemma~\ref{lemma:remove-lsd} below that we can replace the limit-sure 
CECs by a sink state $\winabsorb$.

\begin{restatable}{lemma}{removelsd}
\label{lemma:remove-lsd}
	Given an MEMDP~$M$ with parity objective $\varphi$
	and a set $T \subseteq \limitsure(M, \varphi)$ of limit-sure winning states, 
	we have
	\(\limitsure(M, \varphi) = \limitsure(M, \varphi \cup \Reach(T)).\)
\end{restatable}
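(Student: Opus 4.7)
The plan is to prove the two inclusions separately. The direction $\limitsure(M, \varphi) \subseteq \limitsure(M, \varphi \cup \Reach(T))$ is immediate because $\varphi \subseteq \varphi \cup \Reach(T)$, so every $(1-\epsilon)$-winning strategy for $\varphi$ is already $(1-\epsilon)$-winning for the weaker objective. The work is in the reverse inclusion, and my approach is a strategy composition argument that exploits the prefix-independence of parity objectives.

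For the non-trivial direction, I would fix $q \in \limitsure(M, \varphi \cup \Reach(T))$ and $\epsilon > 0$. Since $T \subseteq \limitsure(M,\varphi)$ and $T \subseteq Q$ is finite, I can simultaneously choose, for each $t \in T$, a strategy $\tau_t$ that wins with probability at least $1-\epsilon/2$ for $\varphi$ in every environment from $t$. I would also choose a strategy $\sigma$ that is $(1-\epsilon/2)$-winning for $\varphi \cup \Reach(T)$ from $q$. Then I would define the composed strategy $\sigma'$ to play $\sigma$ until the first visit to $T$; upon such a visit at some state $t$, $\sigma'$ switches to $\tau_t$ restarted from $t$ (forgetting the prior history).

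The key estimate is as follows. Writing $\Omega_R = \Reach(T)$, and decomposing $\varphi = (\varphi \cap \lnot \Omega_R) \uplus (\varphi \cap \Omega_R)$, I observe that on runs that never visit $T$, $\sigma'$ coincides with $\sigma$, whereas on runs that first visit $T$ at some state $t$, prefix-independence of the parity objective implies that the conditional probability of $\varphi$ equals $\pr_{t}^{\tau_t}(M[e],\varphi) \geq 1-\epsilon/2$. Summing over all finite histories that first reach $T$ and combining with the zero-reach contribution, I get, for every environment $e$,
\[
\pr_q^{\sigma'}(M[e],\varphi) \;\geq\; \pr_q^{\sigma}(M[e],\varphi \cap \lnot\Omega_R) + (1-\epsilon/2)\,\pr_q^{\sigma}(M[e],\Omega_R) \;\geq\; (1-\epsilon/2)\,\pr_q^{\sigma}(M[e],\varphi \cup \Omega_R) \;\geq\; (1-\epsilon/2)^2 \;\geq\; 1-\epsilon.
\]
Since $\epsilon>0$ was arbitrary, this shows $q \in \limitsure(M,\varphi)$.

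The main obstacle, though mild, is making the composition rigorous: I must show that the switch to $\tau_t$ at the first visit of $T$ is a well-defined strategy in $M$, and that the probability decomposition above can be justified formally by partitioning the set of runs according to the history up to (and including) the first visit to $T$, using the cylinder-set description of $\pr^{\sigma'}$. Prefix-independence of $\Parity(p)$ is exactly what makes the contribution from each switch state $t$ reduce to $\pr_t^{\tau_t}(M[e],\varphi)$ regardless of the prefix used to reach $t$, which is the crucial ingredient making the argument go through.
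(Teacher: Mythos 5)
Your proposal is correct and follows essentially the same route as the paper's proof: the same switching strategy (play a $(1-\epsilon/2)$-winning strategy for $\varphi\cup\Reach(T)$ until the first visit to $T$, then switch to a $(1-\epsilon/2)$-winning strategy for $\varphi$ from the entry state), the same decomposition over $\Reach(T)$ versus $\lnot\Reach(T)$ using prefix-independence, and the same final bound $(1-\epsilon/2)^2\geq 1-\epsilon$. No gaps to report.
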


\begin{proof}
	The inclusion $\limitsure(M, \varphi) \subseteq \limitsure(M, \varphi \cup \Reach(T))$
	is immediate since $\varphi \subseteq \varphi \cup \Reach(T)$.

	To show the converse inclusion, consider $q \in \limitsure(M, \varphi \cup \Reach(T))$
	and show that $q\in \limitsure(M, \varphi)$. 
	Given $\epsilon > 0$, let $\epsilon_1 = \frac{\epsilon}{2}$ and
	let $\sigma$ be a strategy such that 
	$\pr_{q}^{\sigma}(M, \varphi \cup \Reach(T)) \geq 1-\epsilon_1$.
	We construct a strategy $\tau$ that satisfies the objective $\varphi$ 
	with probability at least $1-\epsilon$ as follows:
	for all histories $\rho$, if $\rho$ does not visit $T$, then let $\tau(\rho) = \sigma(\rho)$;
	otherwise, consider the suffix $\rho'$ of $\rho$ after the first visit to a state $t \in T$, 
	and let~$\sigma_t$ be strategy that ensures $\varphi$ is satisfied with probability 
	at least $1-\epsilon_1$ from $t$ (such a strategy exists since $T \subseteq \limitsure(M, \varphi)$).
	Define $\tau(\rho) = \sigma_t(\rho')$.
	We easily show below that $\pr_{q}^{\tau}(M, \varphi) \geq 1-\epsilon$, 
	establishing that $q\in \limitsure(M, \varphi)$:
	\begin{align*}
	\pr_{q}^{\tau}(M, \varphi) & = \pr_{q}^{\tau}(M, \varphi \cap \Reach(T)) + \pr_{q}^{\tau}(M, \varphi \cap \lnot \Reach(T)) \\
	& = \pr_{q}^{\tau}(M, \varphi \mid \Reach(T)) \cdot \pr_{q}^{\tau}(M, \Reach(T)) 
	+ \pr_{q}^{\tau}(M, \varphi \cap \lnot \Reach(T)) \\
	& = \pr_{q}^{\tau}(M, \varphi \mid \Reach(T)) \cdot \pr_{q}^{\sigma}(M, \Reach(T)) 
	+ \pr_{q}^{\sigma}(M, \varphi \cap \lnot \Reach(T)) \\
	& \qquad\qquad\qquad\quad\  \text{(since $\tau$ agrees with $\sigma$ as long as $T$ is not reached)} \\
	& \geq (1-\epsilon_1)  \cdot \pr_{q}^{\sigma}(M, \Reach(T)) 
	+ \pr_{q}^{\sigma}(M, \varphi \cap \lnot \Reach(T)) \\
	& \geq (1-\epsilon_1)  \cdot \pr_{q}^{\sigma}(M, \Reach(T)) 
	+ (1-\epsilon_1)  \cdot \pr_{q}^{\sigma}(M, \varphi \cap \lnot \Reach(T)) \\
	& \geq (1-\epsilon_1)  \cdot \pr_{q}^{\sigma}(M, \varphi \cup \Reach(T)) \geq (1-\epsilon_1)^2   \geq 1-\epsilon.
	\end{align*}
\end{proof}

We can now assume that MEMDPs contain no limit-sure winning distinguishing CEC,
and present a characterization for the remaining possibility, illustrated by the scenario of 
\figurename~\ref{fig:limit-sure2}, where playing the action $a$ (in $q_2$, forever) 
ensures, in some environment (namely, $e_1$), almost-sure satisfaction of 
the parity objective while remaining inside the limit-sure winning region in all other environments. 

\begin{restatable}{lemma}{lsparitycharac}
\label{lemma:ls-parity-charac}
Consider an MEMDP $M$ (in revealed form) over environments~$E$ with $\abs{E} \geq 2$,
that contains no limit-sure winning distinguishing common end-component, and a parity objective $\varphi$. 
Writing $T_e = \limitsure(M[\lnot e], \varphi)$, we have the following:
\(
   \limitsure(M, \varphi) = \almostsure\Big(M, \Reach\Big(\bigcup_{e \in E} \almostsure(M[e], \varphi \cap \Safe(T_e)) \Big)\Big).
\)

\end{restatable}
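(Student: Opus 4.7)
Write $U_e = \almostsure(M[e], \varphi \cap \Safe(T_e))$ and $U = \bigcup_{e \in E} U_e$, and prove the two inclusions separately.

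For the inclusion $\limitsure(M, \varphi) \subseteq \almostsure(M, \Reach(U))$, my plan is to use the memoryless randomized strategy $\sigma_{\limitsure}$ that plays uniformly all limit-sure safe actions at every state of $\limitsure(M, \varphi)$; this strategy is well-defined by Lemma~\ref{lemma:only11-inf} and keeps the play inside $\limitsure(M, \varphi)$. By Lemma~\ref{lem:basic-ec}, for every environment $e$, almost every run under $\sigma_{\limitsure}$ enters an end-component $D$ of $M[e]$ contained in $\limitsure(M, \varphi)$. I would then show $D \subseteq U$ by cases: if $D$ is a distinguishing CEC it is ruled out by the hypothesis; if $D$ is a non-distinguishing CEC then Lemma~\ref{lem:ndCEC-is-winning} gives $D \subseteq \almostsure(M, \varphi)$, so $D \subseteq T_{e'}$ for every $e'$ and the almost-sure strategy (which stays in $D$) witnesses $D \subseteq U_{e'}$; and if $D$ is an end-component of $M[e]$ that is not a CEC, then $D \subseteq \limitsure(M[e], \varphi) = \almostsure(M[e], \varphi)$, because limit-sure equals almost-sure for parity in a single MDP, so $D$ is $\varphi$-winning in $M[e]$, and combined with $D \subseteq \limitsure(M[\lnot e],\varphi) = T_e$ this gives $D \subseteq U_e$. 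Thus $\sigma_{\limitsure}$ almost-surely reaches $U$ in every environment, as desired.

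For the converse inclusion $\almostsure(M, \Reach(U)) \subseteq \limitsure(M, \varphi)$, I would proceed by induction on $\abs{E}$. The base case $\abs{E}=1$ reduces to the standard equivalence of limit-sure and almost-sure parity in a single MDP. For $\abs{E} \geq 2$ and $q \in \almostsure(M, \Reach(U))$, given $\epsilon > 0$, the plan is to construct a $(1-\epsilon)$-winning strategy in three phases: first, play an almost-sure reachability strategy for $U$ until reaching some $q' \in U_e$, which happens within finitely many steps with probability at least $1-\epsilon/3$ in every environment; second, switch to the almost-sure strategy $\tau_e$ for $\varphi \cap \Safe(T_e)$ in $M[e]$, and play it for $N_1$ steps with $N_1$ chosen so that in $M[e]$ the run is trapped in a $\varphi$-winning end-component supported in $T_e$ with probability at least $1-\epsilon/3$, after which $\tau_e$ is continued forever; third, if this ``commitment'' to $M[e]$ did not occur (which is overwhelmingly the case when the actual environment is not $e$), the play is in a state of $T_e$ and we invoke the $(1-\epsilon/3)$-winning strategy given by the induction hypothesis applied to $M[\lnot e]$, which is applicable because $T_e = \limitsure(M[\lnot e], \varphi)$ by definition.

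The hard part will be controlling the third phase: in an environment $f\neq e$, the strategy $\tau_e$ is only guaranteed to stay in $T_e$ in $M[e]$, and I must argue that under $\tau_e$ the play in $M[f]$ either already reaches $\winabsorb$ or remains in $T_e$ long enough to perform the switch, without putting significant mass on $\loseabsorb$. Here I would exploit the revealed-form assumption, which forces any successor in $\Supp(\delta_f(q,\tau_e(q))) \setminus \Supp(\delta_e(q,\tau_e(q)))$ to be a sink already redirected to $\winabsorb$ or $\loseabsorb$ according to limit-sure winning in the knowledge, together with the absence of limit-sure winning distinguishing CECs; and where needed I would combine $\tau_e$ with the sampling machinery of Lemma~\ref{lemma:strategy-for-dist-mecs} to hedge between $e$ and $\lnot e$ before the switch.
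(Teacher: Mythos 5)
Your overall plan matches the paper's proof — the first inclusion via the positive end-components of $\sigma_{\limitsure}$, the converse via a reach/commit/switch strategy — but your third case in the first inclusion is wrong. For a positive end-component $D$ of $M[e]$ that is not a CEC, you infer from $D \subseteq \almostsure(M[e],\varphi)$ that ``$D$ is $\varphi$-winning in $M[e]$'' and then that $D \subseteq U_e := \almostsure(M[e],\varphi\cap\Safe(T_e))$. Neither step holds: $D \subseteq \almostsure(M[e],\varphi)$ only says each state of $D$ is winning in the MDP $M[e]$, possibly by a strategy that leaves $D$ and leaves $T_e$; it does not make $D$ a winning end-component, and combining it with $D\subseteq T_e$ does not give almost-sure $\varphi\cap\Safe(T_e)$. \figurename~\ref{fig:limit-sure2} refutes your claim directly: $D=\{q_1,q_2\}$ is a positive end-component of $M[e_2]$ with minimal priority $1$, satisfies $D\subseteq \almostsure(M[e_2],\varphi)$ and $D\subseteq T_{e_2}$, yet $D\not\subseteq U_{e_2}$, since winning $\varphi$ in $M[e_2]$ forces a visit to $q_3\notin T_{e_2}$. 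The correct witness is the \emph{other} environment: the paper takes an $e$ in which $D$ is \emph{not} an end-component, shows all transitions of $D$ are still present in $M[e]$, and argues that $\sigma_{\limitsure}$ played in $M[e]$ almost surely exits $D$ through a revealing transition into $\winabsorb$ while never leaving $\limitsure(M,\varphi)\subseteq T_e$, which yields $D\subseteq U_e$ for that $e$.

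For the converse inclusion your three phases are the paper's, but the step you label ``the hard part'' is precisely the content of the proof, and you do not supply it. The paper's resolution is structural, not statistical: if after the commitment phase the play sits in a positive end-component $D_t$ of $\sigma_t$ in $M[e_t]$, then in any $f\neq e_t$ the play under $\sigma_t$ either never leaves $D_t$ or leaves it only through a revealing transition, which — because $\sigma_t$ surely satisfies $\Safe(T_{e_t})$ — leads to $\winabsorb$ and not $\loseabsorb$; if the play has not committed, its current state is surely in $T_{e_t}=\limitsure(M[\lnot e_t],\varphi)$, so the $(1-\epsilon)$-strategy for $M[\lnot e_t]$ obtained by induction applies. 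In particular, your suggestion to fall back on the sampling machinery of Lemma~\ref{lemma:strategy-for-dist-mecs} points in the wrong direction: that lemma concerns distinguishing CECs, which play no role inside this lemma (the limit-sure winning ones are excluded by hypothesis, having been replaced by $\winabsorb$ beforehand via Lemmas~\ref{lemma:strategy-for-mec-as} and~\ref{lemma:remove-lsd}), and the paper's proof of the present statement uses no sampling at all. As it stands, the proposal needs the non-CEC case repaired and the commit/switch analysis actually carried out before it is a proof.
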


\begin{proof}
First we show the inclusion 
\begin{center}
$\limitsure(M, \varphi) \subseteq \almostsure(M, \Reach(\bigcup_{e \in E} \almostsure(M[e], \varphi \cap \Safe(T_e)))).$
\end{center}
Consider the (memoryless) strategy $\sigma_{\limitsure}$ that plays all 
limit-sure safe actions uniformly at random from every state in $\limitsure(M, \varphi)$. 
The strategy $\sigma_{\limitsure}$ is well-defined by Lemma~\ref{lemma:only11-inf} and 
to establish the inclusion, we show that, from every state $q \in \limitsure(M, \varphi)$,
it is almost-sure winning (in all environments $e' \in E$) for the objective 
$\Reach(\bigcup_{e \in E} \almostsure(M[e], \varphi \cap \Safe(T_e)))$. 

Consider an arbitrary environment $e' \in E$ and an arbitrary 
end-component $D$ that is positive under $\sigma_{\limitsure}$ in $M[e']$. 
Since positive end-components are reached with probability~$1$ (Lemma~\ref{lem:basic-ec}), it is sufficient
to show that for all such $D$, there exists an environment $e \in E$ such that 
every state in $D$ is almost-sure winning for the objective $\varphi \cap \Safe(T_e)$
in $M[e]$. We consider two cases:

\begin{itemize}
\item if $D$ is a common end-component, then we show that $D$ is non-distinguishing.
Note that $D$ must be limit-sure winning, by definition of limit-sure safe actions (played by $\sigma_{\limitsure}$).
It follows by the assumption of the lemma that $D$ is non-distinguishing
and therefore almost-sure winning for $\varphi$ (in all environments) by Lemma~\ref{lem:ndCEC-is-winning}.
We take $e = e'$ and it is easy to see that there exists an almost-sure winning strategy 
for $\varphi$ from $D$ (that stays in $D$), which is also almost-sure winning 
for $\varphi \cap \Safe(T_e)$.

\item otherwise $D$ is not a common end-component, and there exists an environment $e$
where $D$ is not an end-component. 
We first show that all transitions of $D$ are present in $M[e]$, since otherwise $D$ 
would contain a revealing transition, thus leading to a state that is a sink in all environments (revealed form). 
Then $D$ being strongly connected would not contain another state, and thus in particular
all transitions in $D$ would be present in $M[e]$. 

It follows that playing $\sigma_{\limitsure}$ from $D$ in $M[e]$ ensures with
probability~$1$ that a (revealing) transition not present in $M[e']$ is executed,
which leads to $\winabsorb$ since $\sigma_{\limitsure}$ never leaves the 
limit-sure winning region (by definition of limit-sure safe actions).
Hence $\varphi$ is satisfied with probability~$1$ in $M[e]$ while playing
only limit-sure safe actions, thus remaining in the limit-sure winning region
$\limitsure(M, \varphi) \subseteq \limitsure(M[\lnot e], \varphi) = T_e$, 
thereby satisfying $\Safe(T_e)$ as well.
This shows that in $M[e]$, the states in $D$ are almost-sure winning 
for the objective $\varphi \cap \Safe(T_e)$.
\end{itemize}

For the converse inclusion, given a state $q$ and a pure\footnote{By Theorem~\ref{th:as-parity}, pure strategies are sufficient for almost-sure winning in MEMDPs.} strategy $\sigma$ 
that is almost-sure winning for objective 
$\Reach(\bigcup_{e \in E} \almostsure(M[e], \varphi \cap \Safe(T_e)))$
(in all environments), 
we show that for all $\epsilon > 0$ there is a pure strategy $\tau$
that ensures that $\varphi$ is satisfied with probability at least $1-\epsilon$
(from $q$ in all environments).

Given $\epsilon > 0$, let $\tau$ be the strategy that plays as follows:
\begin{enumerate}[label=(\theenumi)]
\item play like $\sigma$ until a state $t \in \bigcup_{e \in E} \almostsure(M[e], \varphi \cap \Safe(T_e))$
is reached, and let $e \in E$ be an environment
such that from $t$ there is a (pure memoryless) strategy $\sigma_t$
that is almost-sure winning in $M[e]$ for the objective $\varphi \cap \Safe(T_e)$; \label{tau-phase1}

\item play like $\sigma_t$ for $k \cdot \abs{Q}$ steps, where $k$ is such that
$(1-\nu^{\abs{Q}})^k \leq \epsilon$ (where $\nu$ is the smallest positive 
probability in $M$); \label{tau-phase2}

\item if the current state belongs to a positive end-component $D_t$ of $\sigma_t$ (in $M[e]$),
then keep playing like $\sigma_t$ (forever); otherwise switch to a strategy 
that ensures that $\varphi$ is satisfied with probability at least $1-\epsilon$
from the current state in all environments of $E \setminus \{e\}$ -- such a strategy 
exists because from $t$ the strategy $\sigma_t$ ensures the objective $\Safe(T_e)$
is satisfied almost-surely (and thus surely as well). \label{tau-phase3}
\end{enumerate}

Consider an arbitrary environment $e \in E$,
and show that $\pr_{q}^{\tau}(M[e], \varphi) \geq 1-\epsilon$,
which establishes that $q$ is limit-sure wining, $q \in \limitsure(M, \varphi)$.

First note that phase~\ref{tau-phase2} (and thus also phase~\ref{tau-phase3}) 
is reached with probability~$1$, and let $e_t$ be the environment corresponding to the state $t$ 
reached at the end of phase~\ref{tau-phase1}. We consider two cases:

\begin{itemize}
\item if $e_t = e$, then by standard analysis the probability that 
after phase~\ref{tau-phase2} a positive end-component of $\sigma_t$ is \emph{not yet} reached
is at most $(1-\nu^{\abs{Q}})^k \leq \epsilon$ since within $\abs{Q}$
steps a positive end-component is reached with probability at least 
$\nu^{\abs{Q}}$. Hence with probability at least $1-\epsilon$,
a positive (winning since $\sigma_t$ almost-sure winning in $M[e]$ for 
the objective $\varphi$) end-component of $\sigma_t$ is reached and
the strategy $\sigma_t$ is played forever in phase~\ref{tau-phase3},
thus winning with probability at least $1-\epsilon$.

\item otherwise $e_t \neq e$ and we consider the following cases in 
phase~\ref{tau-phase3}: 
\begin{itemize}
\item[$(a)$] if the strategy $\sigma_t$ is played forever, then either
the set $D_t$ (which is an end-component in $M[e_t]$) is never left, 
or it is left (via a revealing transition, as $D_t$
is not left in $M[e_t]$) and since $\sigma_t$ ensures $\Safe(T_{e_t})$
the sink $\winabsorb$ is reached in $M[e]$, thus in both cases
the objective $\varphi$ is satisfied (with probability~$1$); 
\item[$(b)$] otherwise, by construction the strategy $\tau$ switches to a strategy 
that ensures $\varphi$ is satisfied with probability at least $1-\epsilon$.
\end{itemize}

In all cases, the objective $\varphi$ holds with probability at least $1-\epsilon$,
showing that $\pr_{q}^{\tau}(M[e], \varphi) \geq 1-\epsilon$ as claimed.
\end{itemize}
\end{proof}

\begin{algorithm}[t]
\caption{${\sf LS\_Parity}(M,p)$ \label{alg:solve-ls-ws}}
{
  \AlgData{$M = \tuple{Q, (A_q)_{q \in Q}, (\delta_e)_{e \in E}}$ an MEMDP, $p: Q \to \nat$ a priority function.}
 \AlgResult{The winning region $\limitsure(M,\Parity(p))$ for limit-sure parity.}

  \Begin{
       \nl \lIf{$\abs{E}=1$}{\KwRet{$\almostsure(M, \Parity(p))$}} \;
       \hfill {\tt /* pre-processing */} \;
       \nl put $M$ in revealed form (defined in Section~\ref{section:as}) \;
       \nl $\mecs \gets$ maximal end-components of the MDP $\cup_{e \in E} M'[e]$ \;

       \nl \For{$D \in \mecs$}
       {
	 \nl \If{$D$ is distinguishing in $M$}
	 {
	   \nl Let $(K_1,K_2)$ be a distinguishing partition in $D$ \;
   	     \nl \If{$D \subseteq {\sf LS\_Parity}(M[K_1],p) \cap {\sf LS\_Parity}(M[K_2],p)$}
   	     {
   	        \nl replace $D$ by sink $\winabsorb$ in $M$ with $p(\winabsorb)=0$\;
   	     }
	 }
       }

       \hfill {\tt /* $M$ is in revealed form and Lemma~\ref{lemma:ls-parity-charac} applies */} \;

       \nl \For{$e \in E$}
       {
          \nl $T_e = {\sf LS\_Parity}(M[\lnot e],p)$ \;
       }
       
       \nl $Q \gets \almostsure\Big(M, \Reach\Big(\bigcup_{e \in E} \almostsure(M[e], \Parity(p) \cap \Safe(T_e)) \Big)\Big)$ \; 

       \nl \KwRet{$Q \setminus \{\winabsorb\}$} \;
  }
}
\end{algorithm}

\myparagraph{Algorithm Overview}
Given a MEMDP~$M=(Q,(A_q)_{q\in Q},(\delta_e)_{e\in E})$, the algorithm proceeds 
by recursion on the size of the environment set~$E$ (Algorithm~\ref{alg:solve-ls-ws}).
The base case is that of a  singleton set~$E$ where $\limitsure(M, \varphi)=\almostsure(M, \varphi)$
and this can be computed in polynomial time. 

Assume $\abs{E}\geq 2$.
We first convert $M$ into an MEMDP $M'$ 
in revealed form with state space $Q \uplus \{\winabsorb, \loseabsorb \}$ and 
each revealing transition $t=(q,a,q')$ in $M$ is redirected in $M'$ to $\winabsorb$
if $q' \in \limitsure(M[K_t], \varphi)$ is limit-sure winning when the 
set of environments is the knowledge $K_t$ after observing transition $t$,
and to $\loseabsorb$ otherwise. Notice that each query $q' \in \limitsure(M[K_t], \varphi)$
uses a set~$K_t$ that is strictly smaller than~$E$.

We now assume that $M$ is in revealed form and 
we compute the maximal end-components of the MDP $\cup_{e \in E} M[e]$; 
these are maximal common end-components of~$M$ by Lemma~\ref{lemma:cec}.
For each distinguishing maximal CEC~$D$,
we determine whether it is limit-sure winning 
using the condition of Lemma~\ref{lemma:strategy-for-mec-as},
namely that $D \subseteq \limitsure(M[K_i], \varphi)$ (for $i=1,2$)
where $(K_1,K_2)$ is a partition of $E$ induced by a 
distinguishing transition of $D$, which is computed
by a recursive calls to the algorithm.
We replace $D$ by $\winabsorb$ if it is limit-sure winning,
which yields an MEMDP without limit-sure winning distinguishing CECs,
and we can apply Lemma~\ref{lemma:ls-parity-charac}:
for each environment~$e \in E$, we compute $T_e = \limitsure(M[\lnot e], \varphi)$ which is done by
$\abs{E}$ separate recursive calls, and we compute the sets 
$\almostsure(M[e], \varphi \cap \Safe(T_e))$ using standard MDP algorithms
(we restrict the state space to $T_e$ and compute the almost-sure winning states for $\varphi$).
We then solve the almost-sure reachability problem in $M$ for the target set 
$\bigcup_{e\in E}\almostsure(M[e], \varphi \cap \Safe(T_e))$.

Thus each recursive step takes polynomial time (besides the recursive calls), 
and because each recursive call decreases the size of~$E$,
the depth of the recursion is bounded by $\abs{E}$.
It follows that the procedure runs in polynomial space.
The PSPACE lower bound follows from the same reduction as for 
almost-sure winning~\cite[Theorem~7]{SVJ24}, since the MEMDP constructed in the reduction
is acyclic, thus almost-sure and limit-sure winning coincide.

Note that Lemma~\ref{lemma:ls-parity-charac}
constructs a pure strategy that achieves the objective with probability at least $1-\epsilon$
from the limit-sure winning states, and that the strategies constructed in
Lemmas~\ref{lemma:strategy-for-mec-as} and \ref{lemma:remove-lsd} to witness
limit-sure winning are also pure (in Lemma~\ref{lemma:strategy-for-mec-as},
the construction assumes that pure strategies are sufficient for fewer environments, 
which allows a proof by induction since pure strategies are sufficient in MDPs, \textit{i.e.} in a single environment).

\begin{theorem}
	\label{thm:ls-parity}
	The membership problem for limit-sure parity objectives in MEMDPs is PSPACE-complete and 
	pure exponential-memory strategies are sufficient, i.e., if a state $q$ is limit-sure winning, then for all $\epsilon >0$
	there exists a pure exponential-memory strategy that ensures the objective is satisfied with probability at least $1-\epsilon$
	from $q$.
	When the number of environments is fixed, the problem is solvable in polynomial time.
\end{theorem}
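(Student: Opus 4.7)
The plan is to establish correctness and the PSPACE upper bound for Algorithm~\ref{alg:solve-ls-ws}, and then address the matching lower bound together with the strategy complexity and the fixed-environment case. The proof will proceed by induction on $\abs{E}$. The base case $\abs{E}=1$ reduces to a plain MDP where $\limitsure$ and $\almostsure$ coincide and are computable in polynomial time. For the inductive step, the pre-processing into revealed form uses recursive calls on the strictly smaller environment sets $K_t \subsetneq E$ associated with each revealing transition $t$. The algorithm then handles the two sources of limit-sure winning separately: first it identifies limit-sure winning distinguishing maximal CECs via Lemma~\ref{lemma:strategy-for-mec-as} (a recursive call on each block $K_1, K_2$ of a distinguishing partition, both strictly smaller than $E$), and collapses them to $\winabsorb$ as justified by Lemma~\ref{lemma:remove-lsd}. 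After this collapse, no limit-sure winning distinguishing CEC remains, so the hypothesis of Lemma~\ref{lemma:ls-parity-charac} applies, reducing the question to computing $T_e = \limitsure(M[\lnot e],\Parity(p))$ for each $e\in E$ (again a recursive call of size $\abs{E}-1$), followed by a standard MDP almost-sure parity computation for $\varphi \cap \Safe(T_e)$ and an almost-sure reachability computation in $M$.

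For the PSPACE upper bound, every recursive invocation is on a strict subset of $E$, so the recursion depth is at most $\abs{E}$. Each invocation performs only polynomially many operations beyond its recursive calls: computing maximal end-components of $\cup_{e \in E} M[e]$ (Lemma~\ref{lemma:cec}), detecting distinguishing transitions, and invoking the polynomial-time MDP routines for almost-sure parity, almost-sure reachability, and safety. Standard closure of PSPACE under polynomial-depth recursion with polynomial-space local work then yields the PSPACE bound. The PSPACE lower bound follows directly from~\cite[Theorem~7]{SVJ24}: the MEMDP constructed there is acyclic, and in acyclic MEMDPs limit-sure and almost-sure winning coincide, so the reduction for almost-sure winning transfers verbatim.

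For the strategy complexity, I will argue inductively that whenever a state is limit-sure winning, a pure strategy of exponential size achieves probability $1-\epsilon$. In a single environment, pure memoryless strategies suffice. In the inductive step, the strategy constructed in the proof of Lemma~\ref{lemma:ls-parity-charac} is a composition of (i) a pure almost-sure reachability strategy toward $\bigcup_e \almostsure(M[e], \Parity(p) \cap \Safe(T_e))$, (ii) a pure almost-sure parity strategy inside the target (both pure by Theorem~\ref{th:as-parity}), and (iii) after a bounded waiting phase, a switch to a pure $(1-\epsilon)$-winning strategy on the strictly smaller MEMDP $M[E\setminus\{e\}]$ given by the induction hypothesis. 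The construction of Lemmas~\ref{lemma:strategy-for-dist-mecs} and~\ref{lemma:strategy-for-mec-as} likewise preserves purity when its two input strategies are pure. Tracking the belief (the subset of environments compatible with the observed history) induces at each recursive level a multiplicative memory blowup bounded by $2^{\abs{E}}$, and since the recursion depth is $\abs{E}$, the overall memory remains exponential. Finally, when $\abs{E}$ is fixed the recursion tree has constant depth and each node does polynomial work, so the whole procedure runs in polynomial time.

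The main obstacle will be to convince the reader that the purity and exponential-memory bound are preserved across the three nested constructions invoked in a single recursive step (distinguishing-CEC handling, reachability to the almost-sure winning kernels, and tail strategy for $M[\lnot e]$); the key observation is that none of these combinators introduce randomization and each adds at most $2^{\abs{E}}$ memory states corresponding to the current belief, so that a straightforward induction on $\abs{E}$ goes through.
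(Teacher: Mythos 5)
Your proposal is correct and follows essentially the same route as the paper: recursion on $\abs{E}$ with the revealed-form preprocessing, collapsing limit-sure winning distinguishing CECs via Lemmas~\ref{lemma:strategy-for-mec-as} and~\ref{lemma:remove-lsd}, applying the characterization of Lemma~\ref{lemma:ls-parity-charac}, bounding the recursion depth by $\abs{E}$ for the PSPACE upper bound and the fixed-$\abs{E}$ polynomial bound, transferring the lower bound from the acyclic almost-sure reduction, and obtaining purity and the exponential memory bound by induction through the same three constructions. The only minor imprecision is attributing the per-level memory blowup solely to belief tracking, whereas the sampling phase of Lemma~\ref{lemma:strategy-for-dist-mecs} also contributes a term depending on $\epsilon$ and the minimal probability gap $\eta$; this does not affect the stated exponential bound.
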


The time complexity of Algorithm~\ref{alg:solve-ls-ws} is established as follows. 
Let us consider a single recursive call.
The maximal end-components of $\cup_{e \in E} M'[e]$ can be computed
in $O(\abs{Q}\cdot\abs{\delta})$ where $\abs{\delta}$ denotes the number of transitions. Then, determining whether each MEC is distinguishing, and replacing them with sink states can be done in time $O(\abs{\delta}\cdot \abs{E})$
since one needs to go over each transition and check whether their probability differs in two  environments. The last step requires solving almost-sure parity and safety for MDPs defined for each $e \in E$, which can be done in time $O(\abs{E}\cdot \abs{Q}\cdot \abs{\delta})$
(similarly as in the discussion following Theorem~\ref{th:as-parity}).
The most costly operation is almost-sure reachability for the MEMDP $M$, which 
by Theorem~\ref{th:as-parity} takes 
$O(\abs{Q}^4\cdot \abs{E}\cdot \abs{A}\cdot 2^{\abs{E}})$.
There are $2^{\abs{E}}$ recursive calls
(the algorithm can be run once for each subset of $E$ using memoization),
so overall 
we get $O(\abs{Q}^4\cdot \abs{E}\cdot \abs{A}\cdot 2^{2\abs{E}})$.

We do not know if a technique similar to that of Theorem~\ref{th:as-parity-exptime} can be used for the limit-sure case to obtain an exponent independent of $\abs{E}$.

	\section{The Gap Problem}\label{section:quantitative}
	The goal of this section is to give a procedure that solves the gap problem for parity objectives.
	For this, we show that an arbitrary strategy in $M$ can be imitated by a finite-memory one (with a computable bound on the memory size) 
	while achieving the same probability of winning up to $\epsilon$ in all environments.
	Once this is established, we show how to guess a finite-memory strategy of the appropriate size in order to solve the gap problem.

To establish the memory bound for such an $\epsilon$-approximation, we need a few intermediate lemmas. First, we define a transformation on MEMDPs 
consisting in collapsing non-distinguishing maximal CECs (MCECs) of the MEMDP~$M$; the resulting MEMDP is denoted $\purge{M}$.
We show that $M$ and $\purge{M}$ have the same probabilities of satisfaction of the considered parity objective under all environments.

Intuitively, removing non-distinguishing MCECs ensures that in $\purge{M}$, under all strategies, with high probability, within a fixed number of steps,
either a maximal CEC is reached (which is either distinguishing, or non-distinguishing but trivial  -- recall that a trivial CEC contains a single absorbing state.)
or enough samples are gathered to improve the knowledge about the current environment, as shown in Section~\ref{section:learning-while-playing}
This observation will help us constructing the finite-memory strategy inductively since in each case the knowledge can be improved correctly with high probability:
in trivial MCECs, the strategy is extended arbitrarily; inside distinguishing MCECs, the strategy can be extended so that it stays inside the MCEC while sampling distinguishing transitions with any desired precision as in Lemma~\ref{lemma:strategy-for-dist-mecs};
last, if no MCECs are reached but enough samples are gathered along the way, we prove that the knowledge can also be improved with high probability. 
The final strategy is obtained by combining finite-memory strategies constructed inductively for smaller sets of environments. 
This is done in Section~\ref{section:constructing-finite-memory-strategies}

\paragraph*{Maximal Common End-Components Revisited}
We extend 
the definition of common end-components (CEC) which, in Section~\ref{sec:cec}, were defined assuming MEMDPs are in revealed form. In this section, MEMDPs are \textbf{not} assumed to be in revealed form: in fact, upon observing a revealing transition, we cannot conclude recursively since we cannot determine which value vector must be achieved in the recursive call.
Here, we define a CEC for MEMDP $M=\tuple{Q,A,(\delta_e)_{e\in E}}$ as a pair $(Q',A')$ such that for all $e\in E$, 
$\tuple{Q',A',\delta_e}$ is an end-component of $M[e]$.
A maximal CEC (MCEC) is a CEC which does not contain a smaller CEC.

There are two types of MCECs:
\begin{itemize}
  \item MCEC $(Q',A')$ is non-distinguishing if for all $q \in Q'$, and $a \in A'(q)$, the distributions $\delta_e(q,a)$ and $\delta_{e'}(q,a)$ are identical
  for all $e,e' \in E$;
  \item MCEC $(Q',A')$ is distinguishing otherwise.
\end{itemize}
While non-distinguishing MCECs have state-action pairs with identical supports in all environments,
a distinguishing MCEC may contain revealing transitions, that is, state-action pairs $(q,a)$ with different supports in different environments.
This is the difference with Section~\ref{section:limitsure}.
The only result we need from Section~\ref{sec:cec} is Lemma~\ref{lemma:strategy-for-dist-mecs} which holds for the new definition of distinguishing MCECs:
in fact, we do require that $(Q',A')$ is an end-component (i.e., closed and strongly connected) in all environments, 
so revealing transitions are simply seen as distinguishing transitions, and thanks to the strong connectivity of $(Q',A')$ in all environments, one can 
define a strategy that samples a distinguishing transition a desired number of times.

As previously, a MCEC~$D$ is \emph{trivial} if it contains a single state.

In terms of computability, we cannot use Lemma~\ref{lemma:cec} to compute MCECs since this is only valid for MEMDPs in revealed form.
The $\epsilon$-gap procedure given in this section does not actually compute MCECs; these are only used in the proof of the existence of a finite-memory strategy (Lemma~\ref{lemma:quantitative-bounded-memory}).
Nevertheless, for completeness, let us describe how MCECs can be computed in polynomial time.
For $\abs{E}=1$, the MCECs are exactly the maximal end-components (MECs) of $M[e]$ where $E=\{e\}$. For $\abs{E}\geq 2$, we pick an environment $e \in E$, and compute the MECs of $M[e]$.
For each MEC $D$ of $M[e]$, we recursively compute the MCECs of $D$ in the MEMDP $M[E\setminus\{e\}]$. 
This is sound because a MCEC, being an end-component in all environments, is necessarily a subset of some MEC in each $M[e]$; so by restricting the search for MCECs to MECs of some $M[e]$, we do not discard any MCECs.
Furthermore, each recursive call splits the state space to disjoint sets, so we get an overall polynomial-time complexity.

Given an MEMDP $M$ over environments $E$, the notation $\pr_q^\sigma(M, \varphi)$ 
refers to the vector of probability values $(\pr_q^\sigma(M[e], \varphi))_{e \in E}$.

\subsection{Purge: Removing Non-Distinguishing MCECs}
\label{section:purge}
We first describe a transformation that collapses non-distinguishing MCECs, and keeps only trivial ones.
Since all trivial MCECs can be classified into winning and losing for the objective $\varphi$, we assume that
the only non-distinguishing MCECs in the resulting MEMDP are called $\winabsorb$ and $\loseabsorb$. 
The intuition is that non-distinguishing MCECs are not useful to refine information in order to distinguish environments,
so when a strategy visits such a MCEC, one can assume that it will either stay inside forever (either if the MCEC is $\varphi$-winning, or if there is no outgoing transition), or leave it as soon as possible (if the MCEC is $\varphi$-losing).

Observe that a distinguishing MCEC can contain a smaller non-distinguishing CEC. The transformation described here only collapses MCECs that are non-distinguishing, and not those smaller non-distinguishing CECs that are contained in MCECs.

Given an MEMDP~$M=\tuple{Q,A,(\delta_e)_{e\in E}}$, define
the MEMDP $\purge{M}=\tuple{Q',A',(\delta'_e)_{e\in E}}$ 
where $Q'$ contains all states of~$Q$ except those that belong to non-distinguishing MCECs;
and for each non-distinguishing MCEC $D$, we add a fresh state $s_D$ to $Q'$, and redirect all transitions that enter a state of $D$ in $M$ to $s_D$ in $M'$. 
We define the map $f : Q \rightarrow Q'$ by mapping all states of non-distinguishing MCECs $D$ to $s_D$, and as the identity for other states.

We add a fresh action $\actionstay$ which from $s_D$ goes to a winning absorbing state $\winabsorb$ if $D$ is $\varphi$-winning, and to a losing absorbing state $\loseabsorb$ otherwise.
For each pair $(q,a) \in D$ such that $\Supp(\delta(q,a))$ is not included in $D$, we add a fresh action $\freshaction{(q,a)}$ from $s_D$
with $\delta_e'(s_D, \freshaction{(q,a)})(q') = \sum_{q'' \in f^{-1}(q')} \delta_e(q, a, q'')$ for all $e \in E$.
(These state-action pairs can leave $D$ in some environments, so $F$ stands for the \emph{frontier} of~$D$.)

Given the set of MCECs, $\purge{M}$ can be computed in polynomial time.
However, the $\epsilon$-gap procedure we give does not actually compute $\purge{M}$;
this construction is only used for proving the existence of a finite-memory strategy of bounded memory size.

\begin{example}
  An example of this construction is given in Fig.~\ref{fig:removing-nondist-cec} for MEMDP $M$ with two environments $e_1,e_2$.
  Here $\{(q_2,b)\}$ is an end-component in $M[e_2]$ but not in $M[e_1]$ due to the edge to $q_3$ so this is not a CEC, and is not collapsed
  in $\purge{M}$.
  The MCEC $D$ defined by $\{(q_3,a), (q_4,a)\}$ has a single frontier action $F_{(q_4,b)}$.
  In $M[e_1]$, we have $\delta'_{e_1}(s_D, F_{(q_4,b)}, s_{D'})=2/3$ since 
  $\delta_{e_1}(q_4, b, q5)+\delta_{e_1}(q_4, b, q6)=2/3$ (since the probabilities are uniform),
  and $\delta'_{e_1}(s_D, F_{(q_4,b)}, s_{D})=1/3$.
  In $M[e_2]$, the latter edge is missing, so 
  $\delta'_{e_2}(s_D, F_{(q_4,b)}, s_{D'})=1$.
\end{example}

\begin{figure}[!t]
\hrule
  \centering
  \input{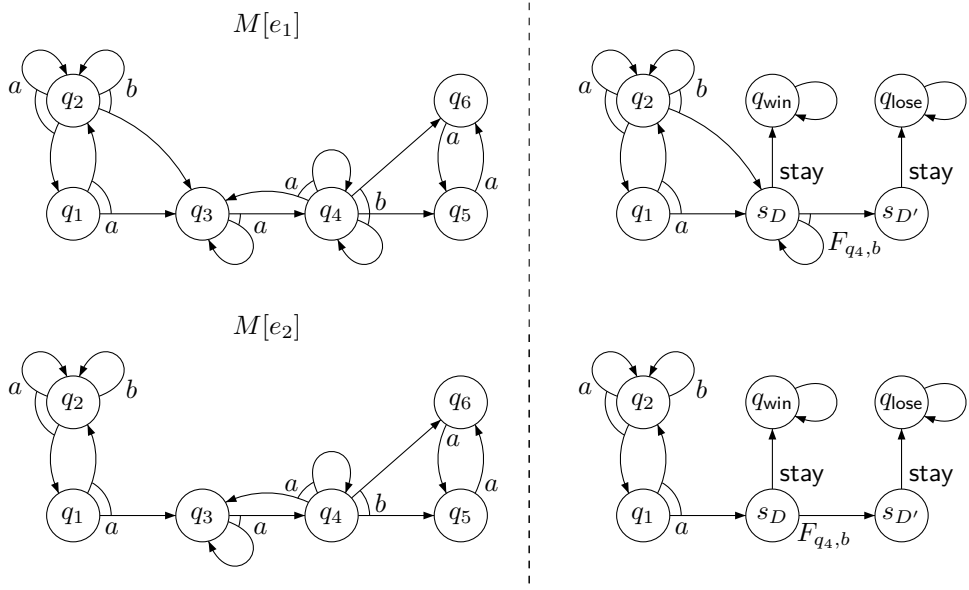}
\hrule
   \caption{
    An MEMDP~$M$ with two environments (left) and the construction $\purge{M}$ (right).
    Transition probabilities are uniform.
    Here $D$ is the MCEC defined by the pairs $\{(q_3,a), (q_4,a)\}$, and $D'$ is the MCEC defined by
    $\{(q_5,a), (q_6,a)\}$. The priority function is omitted, we assume that $D$ is winning (e.g., by assigning priority~$0$ to $q_3$ and $q_4$) and that $D'$ is losing (e.g., by assigning priority~$1$ to $q_5$ and $q_6$).  \label{fig:removing-nondist-cec}
  }
\end{figure}

\begin{lemma}
  \label{lemma:removing-nondist-cecs-1}
  For all MEMDPs $M$, the only non-distinguishing MCECs of $\purge{M}$ are the trivial $\winabsorb$ and $\loseabsorb$.
\end{lemma}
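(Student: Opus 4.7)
Plan. Assume for contradiction that $D'$ is a non-distinguishing MCEC of $\purge{M}$ with $D' \neq \{\winabsorb\}$ and $D' \neq \{\loseabsorb\}$. The strategy is to lift $D'$ back to a common end-component $\hat D$ of $M$ that violates the maximality of some MCEC of $M$.

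First I would dispose of boundary cases. Since $\winabsorb$ and $\loseabsorb$ are absorbing in $\purge{M}$, any MCEC touching them must coincide with $\{\winabsorb\}$ or $\{\loseabsorb\}$, hence by hypothesis $Q(D')$ avoids both. Moreover, no singleton $\{s_{D_j}\}$ can be an end-component in $\purge{M}$: at $s_{D_j}$ the only available actions are $\actionstay$ (which escapes to $\winabsorb$ or $\loseabsorb$) and $\freshaction{(q_0,a_0)}$ for frontier pairs $(q_0,a_0)$ of $D_j$, and the support of the latter is the $f$-image of $\Supp(\delta_e(q_0,a_0))$, which cannot reduce to $\{s_{D_j}\}$ precisely because $(q_0,a_0)$ is a frontier pair.

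Next I would define the lift $\hat D$ of $D'$ in $M$ by taking $Q(\hat D) = (Q(D') \cap Q) \cup \bigcup_{j \,:\, s_{D_j} \in Q(D')} D_j$, together with: all internal state-action pairs of each such $D_j$, all pairs $(q,a) \in D'$ with $q \in Q$, and the pre-images $(q_0,a_0)$ of every action $\freshaction{(q_0,a_0)}$ that $D'$ uses at some $s_{D_j}$. I would then check that $\hat D$ is a CEC of $M$ in two steps. Closure follows from the identity $\Supp(\delta'_e(q,a)) = f(\Supp(\delta_e(q,a)))$: since the left-hand side lies in $Q(D')$, the right-hand side lies in $f^{-1}(Q(D')) = Q(\hat D)$. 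Strong connectivity of $\hat D$ in each $M[e]$ is obtained by lifting a path between two vertices of $D'$ in $\purge{M}[e]$ to a path in $M[e]$, inserting an internal sub-path through $D_j$ whenever the original path enters and later exits a collapsed vertex $s_{D_j}$; such sub-paths exist thanks to the strong connectivity of $D_j$ in each $M[e]$.

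A final case split closes the argument. If $D'$ contains some $s_{D_j}$, then by the boundary analysis $D'$ must have at least one other vertex, so $\hat D$ properly contains $D_j$, contradicting maximality of the non-distinguishing MCEC $D_j$ in $M$. If instead $D'$ contains no $s_{D_j}$, then $\hat D = D'$ as state sets and the transition probabilities coincide in $M$ and in $\purge{M}$, so $\hat D$ is a non-distinguishing CEC of $M$ sitting inside some MCEC $\tilde D$ of $M$; if $\tilde D$ is non-distinguishing its states are collapsed out of $\purge{M}$, contradicting $Q(D') \subseteq Q(\purge{M})$, whereas if $\tilde D$ is distinguishing its state-action pairs persist unchanged in $\purge{M}$ as a CEC containing $D'$, which forces $D' = \tilde D$ by maximality of $D'$ and makes $D'$ distinguishing, a contradiction. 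The main technical obstacle I anticipate is the bookkeeping in the lifting step: in particular, justifying that paths in $\purge{M}$ which transit some $s_{D_j}$ can always be realised as positive-probability paths inside $D_j$ in every $M[e]$, and ruling out the degenerate case $D' = \{s_{D_j}\}$ via the frontier-pair definition.
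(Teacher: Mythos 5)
Your proof is correct and follows essentially the same route as the paper's: lift the purported non-distinguishing MCEC $D'$ of $\purge{M}$ back to $M$ via $f^{-1}$, verify that the lift is a common end-component, and derive a contradiction from the maximality of the collapsed non-distinguishing MCECs (resp.\ from the survival of the enclosing MCEC when no collapsed state occurs). The paper phrases the final step slightly differently---arguing that the lift cannot be distinguishing since then no subset of it would have been collapsed, hence it is non-distinguishing and itself collapsed to a trivial state---but this rests on the same maximality argument as your case~1.
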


\begin{proof}
  Let $D=(Q',A')$ be any non-distinguishing MCEC in $M'$. 
  $D$ must contain a state of the form $s_{D'}$ since otherwise this is also a MCEC of $M$, and the construction would have collapsed it.
  We consider the component in $M$ given by the inverse image of $D$ by $f$.
  Formally, let $Q''=f^{-1}(Q') \subseteq Q$, and for each $q'' \in Q''$, define
  $A''(q'') = \{ a \in A(q'') \mid \forall e \in E: \Supp(\delta_e(q,a)) \subseteq Q''\}$.
  
  Then for each state of the form $q_{D'}$ in $D$, $(Q'',A'')$ contains all state-action pairs of $D'$.
  But $D$ is strongly connected in each $M'[e]$, and all non-distinguishing MCECs $D'$ of $M$ that were collapsed are
  also strongly connected in each $M[e]$ by definition, $(Q'',A'')$ is also strongly connected in each $M[e]$, thus
  a MCEC in $M$.

  Now, $(Q'',A'')$ cannot be distinguishing, since the construction only collapses MCECs, so no subset of $(Q'',A'')$
  can be collapsed in $M'$; and $(Q'',A'')$ would remain untouched and be distinguishing in $M'$ as well.
  So $(Q'',A'')$ is non-distinguishing; but in this case, it is collapsed into a trivial MCEC in $M'$, so $D$ is trivial.  
\end{proof}

To relate the histories of $M$ to those of $\purge{M}$, we introduce the function $h\mapsto \purge{h}$ which, 
intuitively, maps the state of a non-distinguishing MCECs $D$ to the state $s_D$,
removes the state-actions pairs that stay in~$D$, and replaces the state-action pairs $(q,a)$ having a transition that leaves $D$ by a new action $\freshaction{(q,a)}$.
Formally, $\purge{h}$ is obtained from $h=q_1a_1\ldots q_n$ by applying the following transformation:
for each non-distinguishing MCEC~$D=(Q',A')$ of $M$,
\begin{enumerate}
  \item Replace the maximal suffix of $h$ of the form $q_ia_i\ldots q_n$
  such that for all $i\leq k \leq n$, $q_k\in Q'$ and $a_k\in A'(q_k)$, 
  if such a suffix exists, by $s_D$;
  \item Remove all maximal factors of $h$ of the form $q_ia_i\ldots q_ja_j$
  satisfying $q_k \in Q'$ and $a_k\in A'(q_k)$ for all $i \leq k \leq j$;
  \item Replace each pair $q_ia_i$ with $q_i \in Q'$ and $a_i \not \in A'(q_i)$ by $s_D \freshaction{(q_i,a_i)}$;    
\end{enumerate} 

\begin{example}
  In the MEMDP of Fig.~\ref{fig:removing-nondist-cec},
  with $D$ containing the pairs $(q_3,a)$ and $(q_4,a)$, 
  for $h=q_1a q_3aq_4aq_3 a q_4 b q_4 b q_5$, we get 
  $\purge{h}=q_1a s_D F_{q_4,b} s_D F_{q_4,b} q_5$.
  Here we first apply rule~2 above to the factor $q_3aq_4aq_3 a$, and get
  $q_1 a q_4 b q_4 b q_5$; then an application of rule~3 yields
  $\purge{h}=q_1a s_D F_{q_4,b} s_D F_{q_4,b} q_5$.
  For the history $h'=q_1a q_3aq_4aq_3 a q_4$, we would get by rule~1,
  $\purge{h'}=q_1 a s_D$.
\end{example}

We establish a relation between $M$ and $\purge{M}$ in Lemmas~\ref{lemma:removing-nondist-cecs-stratmap} and \ref{lemma:removing-nondist-cecs-stratmap-2}.
These will be used to give a memory bound for strategies for the quantitative case in Lemma~\ref{lemma:quantitative-bounded-memory}.

In the Lemma~\ref{lemma:removing-nondist-cecs-stratmap}, we only establish an inequality. This is because a given strategy $\sigma$ of $M$
may not be optimal within a non-distinguishing MCEC, while the construction $\purge{M}$ is based on the assumption that optimal strategies are used within each MCECs.

\begin{lemma}
  \label{lemma:removing-nondist-cecs-stratmap}
  Consider an MEMDP $M=\tuple{Q,A,(\delta_e)_{e\in E}}$, and objective $\varphi=\Parity(p)$,
  and the map $f:Q\rightarrow Q'$ relating states of $M$ and those of $\purge{M} = \tuple{Q',A',(\delta_e')_{e\in E}}$.
  For all $q \in Q$, and strategy $\sigma$ for~$M$, there exists $\sigma'$ with
  $\pr_{q}^\sigma(M,\varphi) \leq \pr_{f(q)}^{\sigma'}(\purge{M},\varphi)$.
\end{lemma}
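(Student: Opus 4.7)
The plan is to construct $\sigma'$ from $\sigma$ via two successive modifications. First, I build an intermediate strategy $\sigma^*$ on $M$ that agrees with $\sigma$ outside non-distinguishing MCECs and, upon first entering a $\varphi$-winning non-distinguishing MCEC $D$, switches to a pure memoryless almost-surely $\varphi$-winning strategy inside $D$. Such an in-$D$ strategy exists by the definition of $\varphi$-winning, and because the internal transitions of $D$ are environment-independent (by non-distinguishability), a single strategy serves all environments. Since the two strategies coincide outside MCECs, and inside each winning MCEC $\sigma^*$ wins with probability $1$ while $\sigma$ does so only with probability at most $1$, we have $\pr_q^{\sigma^*}(M[e], \varphi) \geq \pr_q^{\sigma}(M[e], \varphi)$ in every environment $e$; it therefore suffices to establish the claim with $\sigma^*$ in place of $\sigma$.

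Second, I define $\sigma'$ in $\purge{M}$ to emulate $\sigma^*$ via the purge map $h \mapsto \purge{h}$. Outside the collapsed states $s_D$, the purge map essentially acts as the identity, so $\sigma'$ replays the corresponding action of $\sigma^*$. Upon reaching a collapsed state $s_D$, $\sigma'$ plays $\actionstay$ if $D$ is $\varphi$-winning (thereby sending the play to $\winabsorb$ and contributing probability $1$ to $\varphi$, matching $\sigma^*$'s probability-$1$ winning inside $D$); if $D$ is $\varphi$-losing, $\sigma'$ plays a mixture of frontier actions $F_{(q,a)}$ that reproduces the exit behavior of $\sigma^*$ out of $D$. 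The structural identity $\delta'_e(s_D, F_{(q,a)})(q') = \sum_{q'' \in f^{-1}(q')} \delta_e(q,a,q'')$ ensures that playing $F_{(q,a)}$ in $\purge{M}[e]$ reproduces exactly, up to identification by $f$, the distribution of action $(q,a)$ in $M[e]$.

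To establish $\pr_q^{\sigma^*}(M[e], \varphi) \leq \pr_{f(q)}^{\sigma'}(\purge{M}[e], \varphi)$, I will couple the induced Markov chains $M[e] \otimes \sigma^*$ and $\purge{M}[e] \otimes \sigma'$ in every environment $e$. The chains are step-by-step identical outside MCECs; upon entering a winning MCEC, both chains produce a winning run with probability $1$; upon entering a losing MCEC, the environment-independence of in-$D$ transitions in $M$ together with the frontier-transition construction of $\purge{M}$ allow the exit distributions to be matched, and by induction on the subsequent play their post-exit winning probabilities coincide.

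The main obstacle will be the losing-MCEC case when the entry state into $D$ has environment-dependent probability: then $\sigma^*$'s distribution over exit pairs also depends on the environment while $\sigma'(h')$ must be a single distribution. I will address this by allowing $\sigma'$ to use unbounded memory, so that $\sigma'(h')$ is defined adaptively based on $h'$, which encodes enough information through past observed transitions to track the per-history entry-state posterior in each environment. The mixture of frontier actions chosen then reproduces the correct exit distribution in each environment simultaneously, because the conditional exit distribution of $\sigma^*$ given the entry state is environment-independent, and environment-specific weighting only enters through the entry distribution itself.
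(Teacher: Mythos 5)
Your architecture differs from the paper's in a reasonable way: you first dominate $\sigma$ by a strategy $\sigma^*$ that locks into $\varphi$-winning non-distinguishing MCECs, and then try to couple $\sigma^*$ step-by-step with a strategy of $\purge{M}$, whereas the paper defines $\sigma'$ directly from $\sigma$ by setting $\sigma'(h)(a)=\pr^\sigma_q[M[e],\purgeinv{ha}\mid\purgeinv{h}]$, proves by induction on $\abs{h}$ that cylinder probabilities of $h$ under $\sigma'$ and of $\purgeinv{h}$ under $\sigma$ coincide, and only recovers the inequality at the end from the fact that $\sigma$ may realize a losing sub-end-component of a winning MCEC. Your $\sigma^*$ step is a clean way to isolate that last source of slack, and your treatment of winning $s_D$ (always play $\actionstay$) is fine. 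Two small omissions: at a losing $s_D$ you must also say where the probability mass of ``staying in $D$ forever'' goes (harmless for the inequality, since staying in a losing MCEC loses surely, but it must be defined), and you should note that $\freshaction{(q,a)}$ can lead back to $s_D$, so the emulation at $s_D$ recurs.

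The genuine gap is at exactly the point you flag, and your proposed fix does not close it. At a history $h'$ of $\purge{M}$ ending in a losing $s_D$, the strategy $\sigma'$ must commit to a \emph{single} distribution over $\{\actionstay\}\cup\{\freshaction{(q,a)}\}$, which induces in environment $e$ the exit law $\sum_{(q,a)}\sigma'(h')(\freshaction{(q,a)})\cdot\delta_e(q,a)(\cdot)$. The strategy $\sigma^*$ in $M$, however, chooses its exit pair as a function of the entry state into $D$, and when the transition entering $D$ has environment-dependent probabilities, the induced mixture over exit pairs is itself environment-dependent. Tracking ``the per-history entry-state posterior in each environment'' does not help: $\sigma'$ can compute all these posteriors and still has to output one mixture, and in general no single mixture reproduces (or dominates) the environment-indexed family of exit laws simultaneously --- e.g.\ if $D$ is entered at $q_1$ with probability $0.9$ in $e_1$ and $0.1$ in $e_2$ and symmetrically for $q_2$, while exiting from $q_1$ is only useful in $e_1$ and exiting from $q_2$ only in $e_2$, then $\sigma^*$ attains $0.81+0.09$ in each environment but any fixed mixture at $s_D$ attains at most $1/2$ in one of them. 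Your closing remark that the environment-dependence ``only enters through the entry distribution'' restates the difficulty rather than resolving it. Note that the paper's own argument rests at this same point on the assertion that $\pr^\sigma_q[M[e],\rho\mid\purgeinv{h}]$ is independent of $e$ for $\rho\in\purgeinv{h}$; since the histories in $\purgeinv{h}$ may differ precisely in the (collapsed) entry state of $D$, that independence is exactly what your obstacle questions, so any complete write-up must either prove it or restrict how non-distinguishing MCECs can be entered before the coupling can go through.
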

\begin{proof}
  Let us write $M'=\purge{M}$.
  Consider $q \in Q$, and a strategy $\sigma$ for $M$.
  We define $\sigma'$ for $M'$ as follows. For all histories $h$ of $M'$, and action $a \in A'(\last(h))$, we define
  \[
    \sigma'(h)(a) = \pr^\sigma_q\left[M[e], \purgeinv{ha} \mid \purgeinv{h}\right]
  \]
  for some arbitrary $e \in E$ for which $\pr^\sigma_q\left[M[e], \purgeinv{h}\right]>0$,
  if such $e \in E$ exists; and otherwise define $\sigma'(h)$ arbitrarily.
  This quantity does not depend on $e$ since, assuming $\pr^\sigma_q\left[M[e], \purgeinv{h}\right]>0$,
  \begin{align*}
    &   \pr^\sigma_q\left[M[e], \purgeinv{ha}\mid \purgeinv{h}\right]\\
    & = \sum_{\rho \in \purgeinv{h}} \pr^\sigma_q\left[M[e], \rho a' \mid \rho\right] \pr^\sigma_q\left[M[e], \rho \mid \purgeinv{h}\right] \\
    & = \sum_{\rho \in \purgeinv{h}} \sigma(\rho)(a') \pr^\sigma_q\left[M[e], \rho \mid \purgeinv{h}\right],
  \end{align*}
  where $a' = b$ if $a$ has the form $F_{(\_,b)}$,
  and $a'=a$ otherwise (in which case we have $a \in A(\last(\rho))$).
  Moreover, $\pr_q^\sigma[M[e], \rho \mid \purgeinv{h}]$ does not depend on $e$ here since 
  $\purgeinv{h}$ determines the outcomes of all transitions whose probability distributions differ among environments
  because these were not erased by $\purge{\cdot}$,
  and these probability distributions are identical in the remaining transitions since they belong to non-distinguishing MCECs.

  For a history $h$ of $M'$ that ends in a state of the form $s_D$, 
  and with $\pr^\sigma_q\left[M[e], \purgeinv{h}\right]>0$,
  we let $\sigma'$ take the action $\actionstay$
  with probability $\pr_q^\sigma[M[e], \purgeinv{h} D^\omega \mid \purgeinv{h}]$,
  where $D^\omega$ denotes the set of all runs that stay inside $D$.
  This probability is similarly independent from the particular choice of $e$.

  We prove that for all histories $h$ of $M'$ that do not contain $\actionstay$, $a \in A'(\last(h))\setminus\{\actionstay\}$, and $e \in E$,
  \begin{align}
    &\pr^{\sigma'}_{f(q)}\left[M'[e], h \right] = \pr^\sigma_q\left[M[e], \purgeinv{h} \right],\label{eqn:nondist-cec-eqn1}\\
    &\pr^{\sigma'}_{f(q)}\left[M'[e], h \cdot a \right] = \pr^\sigma_q\left[M[e], \purgeinv{h\cdot a} \right]\label{eqn:nondist-cec-eqn2},\\
    &\pr^{\sigma'}_{f(q)}\left[M'[e], h \cdot \actionstay \right] = \pr^\sigma_q\left[M[e], \purgeinv{h}D^\omega \right].\label{eqn:nondist-cec-eqn3}
  \end{align}
  We proceed by induction on the length of $h$ to prove the above three properties.
  
  Initially, if $\abs{h}=1$, then $h=f(q)$ and $\purgeinv{h} = \{q\}$.
  Then \eqref{eqn:nondist-cec-eqn1} follows trivially since both sides are equal to 1.
  To see \eqref{eqn:nondist-cec-eqn2}, note that, by definition of $\sigma'$,
  \[\sigma'(h)(a) = \pr^\sigma_q\left[M[e], \purgeinv{h\cdot a} \mid \purgeinv{h}\right] = \pr^\sigma_q\left[M[e], \purgeinv{h\cdot a} \right] \]
  since $h=f(q)$ here.
  Furthermore, 
  \begin{align*}
    \pr^{\sigma'}_{f(q)}\left[M'[e], h \cdot a \right]
    &=\pr^{\sigma'}_{f(q)}\left[M'[e], a \mid h \right] \pr^{\sigma'}_{f(q)}\left[M'[e], h\right]\\
    &=\sigma'(h)(a)
  \end{align*}
  since $\pr^{\sigma'}_{f(q)}\left[M'[e], h\right]=1$; which yields \eqref{eqn:nondist-cec-eqn2}.

  Last, assume that $\actionstay \in A'(f(q))$, that is $f(q)$ has the form $s_D$ for some non-distinguishing MCEC $D$.
  \begin{align*}
    \pr^{\sigma'}_{f(q)}\left[M'[e], h \cdot \actionstay \right] &= \sigma'(h)(\actionstay) \pr^{\sigma'}_{f(q)}\left[M'[e], h \right]\\
    & = \pr_q^\sigma[M[e], \purgeinv{h} D^\omega \mid \purgeinv{h}]\\
    & = \pr_q^\sigma[M[e], \purgeinv{h} D^\omega ],
  \end{align*}
  which proves \eqref{eqn:nondist-cec-eqn3}.

  Assume now that $\abs{h}>1$, and let us write $h=h'ar$ for a history $h'$, $a\in A'(\last(h'))$.
  \begin{align*}
    \pr^{\sigma'}_{f(q)}\left[M'[e], h'a \right] &= \pr^{\sigma'}_{f(q)}\left[M'[e], h'a \mid h'\right]\pr^{\sigma'}_{f(q)}\left[M'[e], h' \right]\\
    &=\pr^{\sigma'}_{f(q)}\left[M'[e], h'a \mid h'\right]\pr^{\sigma}_{q}\left[M'[e], \purgeinv{h'} \right] \\
    &=\sigma'(h')(a) \pr^{\sigma}_{q}\left[M'[e], \purgeinv{h'} \right] \\
    &=\pr^\sigma_q\left[M[e], \purgeinv{h'a} \mid \purgeinv{h'}\right] \\
    & \qquad\qquad \cdot \pr^{\sigma}_{q}\left[M'[e], \purgeinv{h'} \right],\\
    &=\pr^\sigma_q\left[M[e], \purgeinv{h'a} \right],
  \end{align*}
  where we used the induction hypothesis to apply \eqref{eqn:nondist-cec-eqn1} on the second line.
  This proves \eqref{eqn:nondist-cec-eqn2}.

  Consider now $r \in Q'$.
  \begin{align*}
    \pr^{\sigma'}_{f(q)}\left[M'[e], h'ar \right] &= \pr^{\sigma'}_{f(q)}\left[M'[e], h'ar \mid h'a\right]\pr^{\sigma'}_{f(q)}\left[M'[e], h'a \right]\\
    &=\delta_e'(\last(h'),a)(r)\pr^\sigma_q\left[M[e], \purgeinv{h'a} \right].
  \end{align*}
  We distinguish two cases. If $\last(h)$ does not have the form of $s_D$, then it also belongs to $Q$, $a \in A(\last(q))$,
  with $\delta_e(\last(h'),a)(r) = \delta_e'(\last(h'),a)(r)$.
  In this case, the above is equal to $\pr^\sigma_q\left[M[e], \purgeinv{h'ar} \right]$.
  Assume now that $\last(h) = s_D$ for some non-distinguishing MCEC $D$, and that 
  $a=\freshaction{(r', a')}$ for some pair $(r',a')$.
  Then $\purgeinv{h'a}$ only contains histories that end at $r'$, followed by action $a'$.
  We have, moreover, $\delta_e'(\last(h'),a)(r) = \sum_{q\in f^{-1}(r)} \delta_e(r',a')(q)$,
  by the definition of $\purge{M}$, so
  
  \begin{align*}
    \pr^{\sigma'}_{f(q)}\left[M'[e], h'ar \right] &=    \pr^\sigma_q\left[M[e], \purgeinv{h'a}f^{-1}(r) \right] \\
    &= \pr^\sigma_q\left[M[e], \purgeinv{h'ar} \right].
  \end{align*}
  This proves \eqref{eqn:nondist-cec-eqn1}.

  Last, consider history $h$ ending at some state $s_D$, and write 
  \begin{align*}
    \pr^{\sigma'}_{f(q)}\left[M'[e], h \cdot \actionstay \right] &= \pr^{\sigma'}_{f(q)}\left[M[e], h\cdot \actionstay \mid h\right]
          \pr^{\sigma'}_{f(q)}\left[M[e], h\right]\\
          &= \pr^{\sigma'}_{f(q)}\left[M[e], h\cdot \actionstay \mid h\right]
          \pr^\sigma_q\left[M[e], \purgeinv{h}\right]\\
          &= \pr_q^\sigma[M[e], \purgeinv{h} D^\omega \mid \purgeinv{h}]
          \pr^\sigma_q\left[M[e], \purgeinv{h}\right]\\
          &= \pr_q^\sigma[M[e], \purgeinv{h} D^\omega],
  \end{align*}
  where we used the induction hypothesis on the second line, and the definition of $\sigma'$ on  the third line.
  This proves \eqref{eqn:nondist-cec-eqn3}.

  \medskip
  We now show that $\pr_{q}^\sigma(M,\varphi) \leq \pr_{f(q)}^{\sigma'}(M',\varphi)$ follows from these properties.
  In fact, for all $e \in E$, one can write
  \begin{equation}
    \label{eqn:m-mp-final}
    \begin{array}{ll}
    \pr_{f(q)}^{\sigma'}(M'[e],\varphi) = &
    \sum_{\substack{D \in \ecs(M[e]), \text{ $\varphi$-winning }\\\text{ non-distinguishing MCEC}}} \pr_{f(q)}^{\sigma'}(M'[e],(Q'A')^*\cdot s_D \cdot \actionstay \cdot \winabsorb)
    \\
    &+\sum_{\substack{D \in \ecs(M'[e]), \text{ $\varphi$-winning } \\ D \neq \{(\winabsorb,\_)\} }}\pr_{f(q)}^{\sigma'}(M'[e],\Inf=D),
    \end{array}
  \end{equation}
  by separating winning end-components of $M'$ into two: the winning absorbing state $\winabsorb$ reached via some $s_D$ for a non-distinguishing MCEC of $M$, 
  and any other end-component of $M'$.

  For the first term of \eqref{eqn:m-mp-final}, we have, from the above properties of $\sigma'$, 
  \begin{align*}
    & \sum_{\substack{D \in \ecs(M[e]), \text{ $\varphi$-winning }\\\text{ non-distinguishing MCEC}}} \pr_{f(q)}^{\sigma'}(M'[e],(Q'A')^*\cdot s_D \cdot \actionstay\cdot \winabsorb)\\
    &= \sum_{\substack{D \in \ecs(M[e]), \text{ $\varphi$-winning }\\\text{ non-distinguishing MCEC}}}\sum_{\substack{D' \in \ecs(M[e])\\D' \subseteq D}} \pr_q^\sigma[M[e], \Inf=D']\\
    &\geq \sum_{\substack{D \in \ecs(M[e]), \text{ $\varphi$-winning }\\\text{ non-distinguishing MCEC}}}\sum_{\substack{D' \in \ecs(M[e]), \text{ $\varphi$-winning} \\D' \subseteq D}} \pr_q^\sigma[M[e], \Inf=D']
    \\&=\sum_{\substack{D \in \ecs(M[e]), \text{ $\varphi$-winning }\\\text{ non-distinguishing MCEC}}} \pr_q^\sigma[M[e], \Inf=D].
  \end{align*}

  For the second term of \eqref{eqn:m-mp-final}, let $D\in \ecs(M')\setminus \{(\winabsorb,\_)\}$, and observe that $\Inf=D$ does not contain the action $\actionstay$.
  Notice how we only have an inequality because $\sigma$ might actually have a nonzero probability of realizing $\Inf=D'$ for some non-winning $D'$ included in a winning $D$.

  We established above that for all histories $h$ of $M'$ without the action $\actionstay$,
  $\pr_{f(q)}^{\sigma'}[M'[e], h] = \pr_{q}^{\sigma}[M[e], \purgeinv{h}]$,
  that is, cylinders generated by $h$ and $\purgeinv{h}$ have the same probabilities in $M'$ under
  $\sigma'$, and, respectively, in $M$ under $\sigma$. 
  It follows that
  \begin{align*}
    &~\sum_{\substack{D \in \ecs(M'[e]), \text{ $\varphi$-winning } \\ D \neq \{(\winabsorb,\_)\} }}\pr_{f(q)}^{\sigma'}[M'[e], \Inf=D]\\
    &=\sum_{\substack{D \in \ecs(M'[e]), \text{ $\varphi$-winning } \\ D \neq \{(\winabsorb,\_)\} }}\pr_{q}^{\sigma}[M[e], \purgeinv{\Inf = D}]\\
    &= \sum_{\substack{D \in \ecs(M'[e]), \text{ $\varphi$-winning } \\ D \neq \{(\winabsorb,\_)\} }}\sum_{D' \in \ecs(M), f(D') = D} \pr_{q}^{\sigma}[M[e], \Inf = D']\\
    &\geq \sum_{\substack{D \in \ecs(M'[e]), \text{ $\varphi$-winning } \\ D \neq \{(\winabsorb,\_)\} }}\sum_{\substack{D' \in \ecs(M), f(D') = D\\ D' \text{ is $\varphi$-winning}}} \pr_{q}^{\sigma}[M[e], \Inf = D'],\\
    &=\sum_{\substack{D \in \ecs(M[e]), \text{ $\varphi$-winning} \\ \text{not a non-distinguishing MCEC}}} \pr_{q}^{\sigma}[M[e], \Inf = D].
  \end{align*}
  where we extend the definition of $f$ to state-action pairs so that $f(D')$ denotes an end-component of $M'$.

  Combining these bounds on both terms of \eqref{eqn:m-mp-final}, we conclude
  \begin{align*}
    \pr_{f(q)}^{\sigma'}(M'[e],\varphi) &\geq
    \sum_{\substack{D \in \ecs(M[e]), \text{ $\varphi$-winning }\\\text{ non-distinguishing MCEC}}} \pr_q^\sigma[M[e], \Inf=D] 
    \\ & ~~+  \sum_{\substack{D \in \ecs(M[e]), \text{ $\varphi$-winning} \\ \text{not a non-distinguishing MCEC}}} \pr_{q}^{\sigma}[M[e], \Inf = D]\\
    &\geq \pr_{q}^{\sigma}(M[e],\varphi).
  \end{align*}
\end{proof}

The following lemma is the dual, and shows that any strategy for $\purge{M}$ can be replicated in $M$,
albeit with a bit more memory. The additional memory is required to implement behaviors inside non-distinguishing MCECs.

\begin{lemma}
  \label{lemma:removing-nondist-cecs-stratmap-2}
    Consider an MEMDP $M=\tuple{Q,A,(\delta_e)_{e\in E}}$, and objective $\varphi=\Parity(p)$,
    and the map $f:Q\rightarrow Q'$ relating states of $M$ and that of $\purge{M} = \tuple{Q',A',(\delta_e')_{e\in E}}$.
    For all states $q'\in Q'$ and strategies $\sigma'$ for~$\purge{M}$, and all $q \in f^{-1}(q')$, 
    there exists a strategy $\sigma$ with
    $\pr_{q}^\sigma(M,\varphi) = \pr_{q'}^{\sigma'}(\purge{M},\varphi)$.
    Furthermore, if $\sigma'$ is a $m$-memory strategy, $\sigma$ can be chosen to be a
    $(m+\abs{Q}\abs{\Act})$-memory strategy.
\end{lemma}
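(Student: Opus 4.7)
The plan is to construct $\sigma$ by mirroring $\sigma'$ outside non-distinguishing MCECs and by simulating the fresh actions $\actionstay$ and $\freshaction{(q',a')}$ through explicit navigation inside such MCECs. Concretely, at a history $h$ of $M$ whose last state $q$ is \emph{not} of the form $f^{-1}(s_D)$ for some collapsed MCEC $D$, let $\sigma$ play the action prescribed by $\sigma'$ on the history $\purge{h}$ of $\purge{M}$. When $\sigma$ enters some $q \in f^{-1}(s_D)$ for a non-distinguishing MCEC $D$ of $M$, consult $\sigma'(\purge{h})$: if $\sigma'$ would choose $\actionstay$, $\sigma$ switches to a pure memoryless strategy that keeps the run inside $D$ forever (such a strategy exists since $D$ is an end-component in every environment and, being non-distinguishing, has identical transition supports and probabilities across environments); if $\sigma'$ would choose some $\freshaction{(q',a')}$, $\sigma$ uses a pure memoryless strategy that reaches $(q',a')$ inside $D$ with probability $1$ (by strong connectivity of $D$), plays $a'$ once, and then resumes the simulation of $\sigma'$ from the new state $r$ with the $\sigma'$-memory updated as though $\sigma'$ had played $\freshaction{(q',a')}$ from $s_D$ and landed in $f(r)$.

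For the memory bound, I would observe that at any time $\sigma$ is in exactly one of two modes: either a \emph{simulation} mode (using one of the $m$ memory states of $\sigma'$) or a \emph{navigation} mode indexed by a target state-action pair $(q',a')$ inside the current non-distinguishing MCEC (the $\actionstay$ case can be implemented by picking an arbitrary such pair in $D$ and navigating to it forever). Since there are at most $\abs{Q}\abs{\Act}$ such targets across all collapsed MCECs, the total number of memory states of $\sigma$ is at most $m + \abs{Q}\abs{\Act}$.

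For correctness, I would prove by induction on the length of $h'$ that for every history $h'$ of $\purge{M}$ not containing $\actionstay$ and every $e \in E$, $\pr_{f(q)}^{\sigma'}(\purge{M}[e], h') = \pr_{q}^{\sigma}(M[e], \purgeinv{h'})$, and that $\pr_{f(q)}^{\sigma'}(\purge{M}[e], h'\cdot \actionstay \cdot \winabsorb^\omega)$ equals the probability under $\sigma$ in $M$ of the runs whose prefix projects to $h'$ and which remain in the corresponding non-distinguishing MCEC forever. The key case is $\freshaction{(q',a')}$: the navigation strategy reaches $(q',a')$ with probability $1$ inside $D$ in every environment (since $D$ is an end-component with identical transitions across environments), and the successor distribution $\delta_e(q',a')$ coincides, after applying $f$, with $\delta_e'(s_D, \freshaction{(q',a')})$ by construction of $\purge{M}$. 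Combining this with the observation that non-distinguishing MCECs of $M$ are winning (resp.\ losing) for $\varphi$ in all environments simultaneously—matching the redirection of $\actionstay$ to $\winabsorb$ or $\loseabsorb$ in $\purge{M}$—yields $\pr_{q}^{\sigma}(M[e], \varphi) = \pr_{f(q)}^{\sigma'}(\purge{M}[e], \varphi)$ for every environment $e$.

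The main obstacle is matching the probability contributions of the $\freshaction{(q',a')}$ transitions: one must verify that the probability of exiting $D$ by taking $(q',a')$ in $M$ under $\sigma$ agrees with the single-step probability $\delta_e'(s_D, \freshaction{(q',a')})(r)$ in $\purge{M}$ under $\sigma'$, summed over all $q'' \in f^{-1}(r)$. This follows from the independence of the navigation phase (which occurs with probability $1$) from the subsequent transition $a'$, together with the defining equation $\delta_e'(s_D,\freshaction{(q',a')})(r) = \sum_{q'' \in f^{-1}(r)} \delta_e(q',a')(q'')$; the rest is bookkeeping, analogous to the argument already carried out for the reverse direction in Lemma~\ref{lemma:removing-nondist-cecs-stratmap}.
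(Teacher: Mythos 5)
Your construction is essentially the one the paper uses (mirror $\sigma'$ via $\purge{\cdot}$ outside collapsed components, simulate $\freshaction{(q',a')}$ by navigating to $(q',a')$ inside the MCEC and then playing $a'$, account for memory as simulation states plus at most $\abs{Q}\abs{\Act}$ navigation targets), and the induction on history probabilities is the right correctness argument. However, there is one genuine flaw, in the $\actionstay$ case. You substitute ``a pure memoryless strategy that keeps the run inside $D$ forever,'' and in the memory paragraph you make this concrete as ``picking an arbitrary such pair in $D$ and navigating to it forever.'' That is not enough. In $\purge{M}$, playing $\actionstay$ from $s_D$ for a $\varphi$-winning $D$ moves to $\winabsorb$ and contributes its full probability mass to $\varphi$; to match this in $M$ you need the substituted strategy to actually \emph{win} $\varphi$ with probability~$1$ inside $D$, not merely to stay there. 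A non-distinguishing MCEC $D$ that is $\varphi$-winning can still contain sub-end-components whose minimal priority is odd; a strategy that navigates forever to an arbitrarily chosen state-action pair of $D$ may end up realizing exactly such a losing sub-end-component, in which case $\pr_{q}^{\sigma}(M[e],\varphi)$ falls strictly below $\pr_{f(q)}^{\sigma'}(\purge{M}[e],\varphi)$ and the claimed equality fails. The paper avoids this by switching to a pure memoryless strategy that \emph{maximizes} the probability of $\varphi$ inside $D$ (equivalently, a witness of $D$ being $\varphi$-winning, which is environment-independent because $D$ is non-distinguishing); the fix to your argument is just to use that strategy instead, and it costs no extra memory beyond the $\abs{Q}\abs{\Act}$ budget you already allocated. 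The rest of your proof --- in particular the matching of the one-step probability $\delta'_e(s_D,\freshaction{(q',a')})(r)=\sum_{q''\in f^{-1}(r)}\delta_e(q',a')(q'')$ after an almost-sure navigation phase, and the decomposition of $\varphi$ over end-components versus runs trapped in non-distinguishing MCECs --- is sound and coincides with the paper's.
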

\begin{proof}
  Consider $q'\in Q'$, an $m$-memory strategy $\sigma'$ for~$M'$, and $q \in f^{-1}(q')$,
  where $m$ can be finite or infinite.
  We show that there exists an $(m+\abs{Q}\abs{\Act})$-memory strategy $\sigma$ with $\pr_{q}^\sigma(M,\varphi) = \pr_{q'}^{\sigma'}(M',\varphi)$.
  We define $\sigma$ as follows. Consider a history $h$ of $M$.
  \begin{itemize}
    \item If $f(\last(h)) \in Q$, we let $\sigma(h) = \sigma'(\purge{h})$.
    \item Assume $f(\last(h)) = s_D$ for some non-distinguishing MCEC~$D$. 
    With probability $\sigma'(\purge{h})(\actionstay)$, we let $\sigma$ switch to a pure memoryless strategy that maximizes the probability 
    of $\varphi$ inside $D$ (this strategy is independent from the environment).    
    For each $\freshaction{(q,a)}$,
    with probability $\sigma'(\purge{h})(\freshaction{(q,a)})$ we let $\sigma$ run a pure memoryless strategy until state $q$ is reached (which happens probability 1), 
    and from $q$ take~$a$.
  \end{itemize}
  The memory bound for $\sigma$ is $m+\abs{Q}\abs{\Act}$ where $m$ is the memory size of $\sigma'$, because inside each collapsed MCEC,
  and for each pair $\freshaction{(q,a)}$, a pure memoryless strategy is executed until reaching $q$ and taking action~$a$.

  By construction, for all histories~$h$ that start at~$q$ in~$M$ and end outside of non-distinguishing end-components, 
  we have:
  \begin{equation}
    \pr_{q}^{\sigma}(M[e], h) = \pr_{f(q)}^{\sigma'}(M'[e], \purge{h}) \text{ for all environments } e \in E.
  \end{equation}
  So if $R$ denotes a measurable set of infinite runs of $M$ such that for all $\rho \in R$, $\purge{\rho}$ is infinite 
  (in other terms, $\rho$ does not stay inside a non-distinguishing MCEC), then 
  \begin{equation}
    \label{eqn:same-prob-R-M-M'}
    \pr_q^\sigma(M[e], R) = \pr_{f(q)}^{\sigma'}(M'[e], \purge{R}),
  \end{equation}
  writing $\purge{R} = \{\purge{\rho} \mid \rho \in R\}$.

  Furthermore, for those histories $h=h'as$ where $q \in D$ is a non-distinguishing MCEC and $\last(h') \not \in D$, we have:
  \[
    \pr_{q}^{\sigma}(M[e], h) = \pr_{f(q)}^{\sigma'}(M'[e], \purge{h}) \text{ for all environments } e \in E.
  \]
  Then, by definition of $\sigma$, for a non-distinguishing MCEC $D$,
  \begin{equation}
    \label{eqn:same-prob-ndmcec}
    \sum_{D' \in \ecs(M), D'\subseteq D}\pr_q^\sigma(M[e], \Inf=D') = \pr_{f(q)}^{\sigma'}(M'[e], (Q'A')^*\cdot s_D \cdot \actionstay).
  \end{equation}

  Observe that any end-component $D$ of $M$ that is not a non-distinguishing MCEC maps to an end-component of $M'$.
  We have for all $e\in E$,
  \begin{align*}
    \pr_q^\sigma(M[e], \varphi) &= \sum_{\substack{D \in \ecs(M), \text{$\varphi$-winning} \\\text{not a non-distinguishing MCEC}}}
    \pr_{q}^{\sigma}(M[e], \Inf=D)
    \\& + \sum_{\substack{D \in \ecs(M), \text{$\varphi$-winning}\\\text{non-distinguishing MCEC}}}
                                    \pr_{q}^{\sigma}(M[e], \Inf=D)
    \\&=\sum_{\substack{D \in \ecs(M'), \text{$\varphi$-winning}\\\text{not a non-distinguishing MCEC}}}
    \pr_{f(q)}^{\sigma'}(M'[e], \Inf=D)
    \\&+\sum_{\substack{D \in \ecs(M'), \text{$\varphi$-winning}\\\text{non-distinguishing MCEC}}}
    \pr_{f(q)}^{\sigma'}(M'[e], (Q'A')^*\cdot s_D \cdot \actionstay )\\
    &=\pr_{f(q)}^{\sigma'}(M'[e], \varphi),
  \end{align*}
  using \eqref{eqn:same-prob-R-M-M'} and \eqref{eqn:same-prob-ndmcec}.
\end{proof}

\subsection{Learning While Playing}
\label{section:learning-while-playing}
In this section, we show that after collapsing non-distinguishing MCECs, over $n$ steps (for $n$ large enough),
with high probability,
we either reach a 
MCEC (which is either distinguishing or trivial) 
or collect a large number of samples of distinguishing transitions whose empirical average is close to their mean.
Intuitively, this means that either the knowledge can be improved after $n$ steps using the collected samples while bounding the probability of error, or a 
MCEC is reached. 

If the MCEC is distinguishing, the strategy can improve the knowledge as in Lemma~\ref{lemma:strategy-for-dist-mecs},
and if not, then the MCEC is trivial and there is a unique way to play.
These results will be used in the next section to build a finite-memory strategy with approximately the same probability of winning,
given any arbitrary strategy.

For a history $h$, let $\abs{h}_{q,a}$ denote the number of occurrences of the state-action pair $(q,a)$,
and $\abs{h}_{q,a,q'}$ the number of times these are followed by $q'$, where $q' \in\Supp(\delta(q,a))$.
For a distinguishing transition $t=(q,a,q')$, we say that a history $h$ is a
\emph{bad $(t,\eta)$-classification} in MDP $M[e]$ if 
$\left\lvert\frac{\abs{h}_{q,a,q'}}{\abs{h}_{q,a}} - \delta_{e}(q,a)(q')\right\rvert \geq \eta/2$, that is the measured and theoretical frequency of $t$ are too far apart.
It is a \emph{good $(t,\eta)$-classification} otherwise.
Intuitively, over long histories, good classifications have high probability.

We first prove the following technical lemma, bounding the difference between the empirical average and the mean when sampling among a finite number of
transitions, when the transitions to sample are chosen at each step by an \emph{adversary}. This adversary corresponds to strategies in an MDP, is arbitrary,
and can depend on the history and use randomization.

We state the following lemma for (single-environment) MDPs, and apply it to each environment in an MEMDP.
\begin{lemma}
  \label{lemma:adversarial-sampling}
  Consider MDP~$M$, state~$q_0$, and $T=\{t_i = (q_i,a_i,q_i')\}_{1\leq i \leq k}$ a subset of transitions
  such that $(q_i,a_i)=(q_j,a_j)$ implies $q_i'=q_j'$ for all $i,j$.
  For all $\eta,\epsilon>0$, all $n_0>\frac{k^3}{\epsilon\eta^2}$, and any strategy $\sigma$ with
  $\pr_{q_0}^\sigma\left[\{h : \sum_{(q,a,q') \in T} \abs{h_{q,a}} \geq n_0\}\right]=1$, the set of histories~$h$ that satisfy the following conditions
  has probability at most $\epsilon$:
  \begin{itemize}
    \item $\sum_{(q,a,q') \in T} \abs{h_{q,a}} \geq n_0$
    \item there exists $1{\leq} i {\leq} k$ such that $\abs{h}_{q_i,a_i} {=} \max_{i'} \abs{h}_{q_{i'},a_{i'}}$
    and $h$ is a bad $(t_i,\eta)$-classification.
  \end{itemize}
\end{lemma}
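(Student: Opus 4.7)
The plan is to combine a pigeonhole step with a second-moment bound on a natural martingale attached to each of the $k$ transitions, followed by a union bound over $i\in\{1,\ldots,k\}$. I interpret the ``set of histories'' in the statement via the stopping time $\tau:=\min\{n : \sum_{i'}\abs{h_n}_{q_{i'},a_{i'}}\geq n_0\}$, which is a.s.\ finite by hypothesis and satisfies $\sum_i\abs{h_\tau}_{q_i,a_i}=n_0$.

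First I would use the pigeonhole observation that any $h$ with $\sum_i \abs{h}_{q_i,a_i}\geq n_0$ has an argmax index $i^*$ with $\abs{h}_{q_{i^*},a_{i^*}}\geq n_0/k$. Hence the bad set from the statement is contained in $\bigcup_{i=1}^k B_i$, where $B_i$ collects histories with $\abs{h}_{q_i,a_i}\geq n_0/k$ and bad $(t_i,\eta)$-classification. It then suffices to show $\pr^\sigma_{q_0}[B_i]\leq \epsilon/k$ for each fixed $i$.

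Fixing $i$ and writing $p_i=\delta(q_i,a_i)(q_i')$, I would exploit the Markov property to observe that
$W^{(i)}_n := \abs{h_n}_{q_i,a_i,q_i'} - p_i\cdot\abs{h_n}_{q_i,a_i}$
is a martingale with respect to the natural filtration of the run: its $\ell$-th increment is $\mathbf{1}_{(S_\ell,A_\ell)=(q_i,a_i)}\cdot(\mathbf{1}_{S_{\ell+1}=q_i'}-p_i)$, with conditional mean $0$ and conditional variance at most $p_i(1-p_i)\leq 1/4$ on a visit and $0$ otherwise. A standard localisation (stop at $\tau\wedge T$, apply the bounded-stopping-time quadratic-variation identity, then let $T\to\infty$) gives
\[
\expect\bigl[(W^{(i)}_\tau)^2\bigr] \;=\; p_i(1-p_i)\cdot\expect\bigl[\abs{h_\tau}_{q_i,a_i}\bigr] \;\leq\; n_0/4,
\]
since $\sum_i \abs{h_\tau}_{q_i,a_i}=n_0$. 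On $B_i$ I have $|W^{(i)}_\tau|=\abs{h_\tau}_{q_i,a_i}\cdot|\text{emp.\ freq.}-p_i|\geq (n_0/k)\cdot(\eta/2)$, so Markov's inequality applied to $(W^{(i)}_\tau)^2$ gives
\[
\pr^\sigma_{q_0}[B_i] \;\leq\; \frac{\expect[(W^{(i)}_\tau)^2]}{(n_0\eta/(2k))^2} \;\leq\; \frac{k^2}{n_0\,\eta^2} \;<\; \frac{\epsilon}{k},
\]
using the hypothesis $n_0>k^3/(\epsilon\eta^2)$. Summing over $i$ yields $\pr^\sigma_{q_0}[\text{bad}]\leq \epsilon$.

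The hard part will be the passage from the bounded-stopping-time variance identity to the identity at the a.s.\ finite $\tau$: because $\expect[\tau]$ need not be finite, one cannot invoke optional stopping directly. Fortunately the predictable quadratic variation is a.s.\ bounded by $\abs{h_\tau}_{q_i,a_i}\cdot p_i(1-p_i)\leq n_0/4$, so the localisation plus a dominated/monotone convergence step makes the identity routine; the remaining ingredients (pigeonhole, martingale structure, Markov, union bound) are otherwise standard.
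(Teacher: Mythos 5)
Your proof is correct and follows essentially the same route as the paper's: your centred count $W^{(i)}$ is the paper's $\wgt^{(i)}_{\leq n}$, your martingale/quadratic-variation identity is exactly the paper's hand computation that all covariance terms between distinct steps vanish despite the adaptivity of $\sigma$, and both arguments conclude with Chebyshev/Markov on the second moment, the pigeonhole bound $\abs{h}_{q_{i^*},a_{i^*}}\geq n_0/k$ for the argmax index, and a union bound over $i$. The only cosmetic differences are that you localise at the stopping time where the total count first reaches $n_0$ whereas the paper works at a fixed horizon, and your bookkeeping of the factor $p_i(1-p_i)\leq 1/4$ recovers the constant in the statement slightly more directly.
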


Here the assumption on $T$ simplifies the proofs since it means that for each state-action pair, 
we will be observing the frequency of a unique successor state.
The lemma also requires that at least $n_0$ occurrences of $T$ is visited with probability 1. This hypothesis ensures that we have enough samples
to obtain a good approximation (that is, a good $(t_i,\eta)$-classification) with high probability (at least $1-\epsilon$).
In fact, if a strategy $\sigma$ avoids visiting transitions from $T$, say, with probability $1/2$, then it cannot ensure a good approximation with high probability because half the cases, there are just not enough samples of $T$.

The lemma is easy for $k=1$. In fact, all trials are identical and independent, so one can use e.g. Hoeffding's inequality to derive a bound.
When $k>1$, trials are no longer independent since $\sigma$ might react to the success or failure of a given transition to make its decisions in the future.
In fact, the lemma is not trivial to prove due to the possible dependency between the trials.

Here is such a situation of dependency. 
Consider a state $q$ from which action $a$ leads to either to $q_1$ or $q_2$, each with probability 0.5, from which a deterministic transition comes back to $q$.
Another action $b$ from $q$ deterministically loops back at $q$.
Consider $\sigma$ that picks $(q,a)$ first. As long as we reach $q_1$, $\sigma$ continues to pick $(q,a)$. Whenever $q_2$ is reached, $\sigma$ switches definitively to $(q,b)$.
Now the probability of observing $(q,a,q_1)$ at step $n>1$ depends on the result of the first $n-1$ trials. 
For example, conditioned on 
observing $(q,a,q_1)$ on the first $n-1$ trials, the probability of observing $(q,a,q_1)$ again is $0.5$.
But conditioned on not observing $(q,a,q_1)$ on the $(n-1)$-th trial, this probability is $0$. This shows that given such $\sigma$, the successive trials are not independent,
and theorems such as Hoeffding's inequality cannot be applied.

In turns out that although the trials can be dependent, their covariance is 0. We exploit this observation to derive a good bound using
Chebyshev's inequality:
\begin{theorem}[Chebyshev's Inequality]
  \label{thm:chebyshev}
  Let $X$ be a random variable with mean $\mu$, and standard deviation $q$. Then, for all $a>0$, we have
  \(
    \pr[ \abs{X-\mu} \geq s a ] \leq \frac{1}{a^2}.
  \)
\end{theorem}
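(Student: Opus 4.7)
The plan is to derive Chebyshev's inequality from Markov's inequality applied to the nonnegative random variable $(X-\mu)^2$, whose expectation is by definition the variance $q^2$. This is the classical textbook argument, and the only ingredient with any substance is having Markov's inequality available; beyond that, the proof is a one-line algebraic manipulation.

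First, I would record Markov's inequality: for any nonnegative random variable $Y$ and any threshold $c > 0$, $\pr[Y \geq c] \leq \expect[Y]/c$. Its one-line proof is that on the event $\{Y \geq c\}$ we have $Y \geq c$ pointwise, while $Y \geq 0$ elsewhere, so taking expectations gives $\expect[Y] \geq c \cdot \pr[Y \geq c]$, and dividing by $c$ yields the claim.

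Next, I would apply Markov's inequality to $Y = (X-\mu)^2 \geq 0$, which has $\expect[Y] = q^2$ by the definition of the standard deviation. Since squaring is monotone on $[0,\infty)$, the event $\{\abs{X-\mu} \geq q a\}$ coincides with $\{(X-\mu)^2 \geq q^2 a^2\}$, so for $a > 0$ and $q > 0$,
\[
\pr\bigl[\abs{X-\mu} \geq q a\bigr] \;=\; \pr\bigl[(X-\mu)^2 \geq q^2 a^2\bigr] \;\leq\; \frac{\expect[(X-\mu)^2]}{q^2 a^2} \;=\; \frac{q^2}{q^2 a^2} \;=\; \frac{1}{a^2},
\]
which is the desired bound (matching the statement once the typo ``$s a$'' is read as ``$q a$''). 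In the degenerate case $q = 0$, one has $X = \mu$ almost surely and the inequality is interpreted with the usual convention that the bound is vacuous (it is only nontrivial for $a > 1$ anyway). There is no real obstacle in the argument: the entire proof rests on the pointwise inequality $Y \geq c \cdot \mathbf{1}_{\{Y \geq c\}}$ together with basic algebra, and the only reason to include it in the paper at all is that it is invoked in the proof of Lemma~\ref{lemma:adversarial-sampling} above.
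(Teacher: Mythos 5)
Your proof is correct: the paper states Chebyshev's inequality as a classical result without proof, and your derivation via Markov's inequality applied to $(X-\mu)^2$ is the standard argument, including the right reading of the typo ``$sa$'' as ``$qa$'' and the correct handling of the degenerate case $q=0$.
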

This inequality clearly also applies if $q$ is an upper bound on the standard deviation of $X$.

\begin{proof}[Proof of Lemma~\ref{lemma:adversarial-sampling}]
  We consider a slightly more abstract setting where there are $k$ independent arms,
  each with a probability of success of $p_i$.
  In MDPs, each arm corresponds to a state-action pair $(q_i,a_i)$ and it  succeeds when reaching $q_{i}'$, with probability $p_i=\delta(q_i,a_i)(q_i')$.

  Consider a strategy $\sigma$ that chooses, at each step, $i \in \{1,\ldots,k\}$, an arm to pull based on the full history and randomization.
  Consider $\epsilon,\eta>0$.

  We model the problem as follows.
  For each $i \in \{1,\ldots,k\}$, define a sequence $X_1^{(i)}, X_2^{(i)}, \ldots$ of identical and independent Bernoulli variables with probability $p_i$.
  Let $\Choice_j$ denote the arm selected by $\sigma$ at step $j$.
  At each step $j$, $\Choice_j$ selects an arm, and
  all types of arms are pulled.
  While $\Choice_j$ can depend on the history, $X_j^{(i)}$ does not depend on the history, and in particular on $\Choice_j$.

  Define the \emph{weight} of arm $i$ at step $j$ as the following random variable.
  \[
  \wgt_j^{(i)} = \left\{\begin{array}{ll}
    X_j^{(i)} - p_i & \text{if } \Choice_j=i,\\
    0 & \text{otherwise.}
  \end{array}\right.
  \]
  Define $\wgt_{\leq n}^{(i)} = \sum_{j=1}^n \wgt_j^{(i)}$, for any $n\geq 1$.
  Let us also define $\occ_j^{(i)} =1 $ iff $\Choice_j=i$, and $\occ_{\leq n}^{(i)}= \sum_{j=1}^n \occ_j^{(i)}$.
  Observe that 
  \[\wgt_{\leq n}^{(i)} = \sum_{1\leq j \leq n, \Choice_j=i} X_j^{(i)} - \occ_{\leq n}^{(i)} p_i,\]
  that is, this is the difference between the empirical sum and the mean of the sum of the subsequence of $X_j^{(i)}$ where $\Choice_j=i$.

  Then $\frac{\wgt_{\leq n}^{(i)}}{\occ_{\leq n}^{(i)}}$ is the difference between the empirical average of the $X^{(i)}_j$ and $p_i$,
  assuming that $\occ_{\leq n}^{(i)}>0$.

  We have, by the definition of variance,
  \[
    \expect^\sigma[\wgt_j^{(i)}] = \pr^\sigma[\Choice_j=i](p_i (1 - p_i) + (1-p_i)(-p_i)) = 0,
  \]
  so $\expect^\sigma[\wgt_{\leq n}^{(i)}] = 0$ as well.

  We are going to apply Theorem~\ref{thm:chebyshev} on the variable $\wgt_{\leq n}^{(i)}$; so we need a bound on the variance of $\wgt_{\leq n}^{(i)}$.
  We show that $\variance^\sigma[\wgt_{\leq n}^{(i)}]\leq np_i(1-p_i)$.
  We have 
  \[
    \variance^\sigma[\wgt_{\leq n}^{(i)}] = \sum_{j=1}^n \variance^\sigma[\wgt_j^{(i)}] + 2\sum_{1\leq j<j'\leq n} \cov(\wgt_j^{(i)},\wgt_{j'}^{(i)})
  \]
  For each $j$, because $\expect^\sigma[\wgt_j^{(i)}] = 0$, we have $\variance^\sigma[\wgt_j^{(i)}]=\expect^\sigma[(\wgt_j^{(i)})^2]$,
  which can be calculated as 
  \begin{align*}
    &\pr^\sigma[\Choice_j=i](p_i (1-p_i)^2 + (1-p_i)(-p_i)^2)\\
    &=\pr^\sigma[\Choice_j=i]p_i(1-p_i)((1-p_i) + p_i)\\
    &\leq p_i(1-p_i),
  \end{align*}
  so that the first term of the variance is at most $np_i(1-p_i)$.

  Now, as noted above, $\wgt_j^{(i)}$ and $\wgt_{j'}^{(i)}$ are not independent variables since $\sigma$ can choose the arm at step $j'$ depending on the result of 
  $\wgt_j^{(i)}$; we nevertheless show that the covariance is equal to $0$.
  We have $\cov(\wgt_j^{(i)},\wgt_{j'}^{(i)}) = \expect^\sigma[\wgt_j^{(i)} \cdot \wgt_{j'}^{(i)}] - \expect^\sigma[\wgt_j^{(i)}][\wgt_{j'}^{(i)}]$ by definition of covariance;
  so this is equal to $\expect^\sigma[\wgt_j^{(i)} \cdot \wgt_{j'}^{(i)}]$ which can be calculated as follows.
  \begin{align*}
    & \pr^\sigma[ \Choice_j=i \land \Choice_{j'}=i \land X_j^{(i)}=1 \land X_{j'}^{(i)}=1](1-p_i)^2\\
    & +\pr^\sigma[ \Choice_j=i \land \Choice_{j'}=i \land X_j^{(i)}=1 \land X_{j'}^{(i)}=0](1-p_i)(-p_i)\\
    & +\pr^\sigma[ \Choice_j=i \land \Choice_{j'}=i \land X_j^{(i)}=0 \land X_{j'}^{(i)}=1](-p_i)(1-p_i)\\    
    & +\pr^\sigma[ \Choice_j=i \land \Choice_{j'}=i \land X_j^{(i)}=0 \land X_{j'}^{(i)}=0](-p_i)^2.
  \end{align*}

  Now $X_{j'}^{(i)}$ and the variables $\Choice_j, \Choice_{j'}, X_{j}^{(i)}$ are independent; in fact, the values of $\Choice_j, \Choice_{j'}$  cannot depend on
  $X_{j'}^{(i)}$ since the latter is revealed after $\Choice_j, \Choice_{j'}$.
  In contrast,
  $X_{j}^{(i)}$ and $\Choice_{j'}$ can be dependent since the latter can depend on the value of $X_{j}^{(i)}$.
  
  We can rewrite $\pr^\sigma[ \Choice_j=i \land \Choice_{j'}=i \land X_j^{(i)}=1 \land X_{j'}^{(i)}=1]$ as follows.
   \begin{align*}
     & \pr^\sigma[ X_{j'}^{(i)} = 1 \mid \Choice_j=i \land \Choice_{j'}=i \land X_j^{(i)}=1  ] \pr^\sigma[\Choice_j=i \land \Choice_{j'}=i \land X_j^{(i)}=1]\\
     &=\pr^\sigma[ X_{j'}^{(i)} = 1 ] \pr^\sigma[\Choice_j=i \land \Choice_{j'}=i \land X_j^{(i)}=1]\\
     &= p_i \pr^\sigma[\Choice_j=i \land \Choice_{j'}=i \land X_j^{(i)}=1]
  \end{align*}
  by independence.
  
  Applying this to all four terms, $\expect^\sigma[\wgt_j^{(i)} \cdot \wgt_{j'}^{(i)}]$ can be written as
  \begin{align*}
    &\pr^\sigma[\Choice_j=i \land \Choice_{j'}=i \land X_j^{(i)}=1]p_i (1-p_i)^2\\
    &+\pr^\sigma[\Choice_j=i \land \Choice_{j'}=i \land X_j^{(i)}=1](1-p_i) (1-p_i)(-p_i)\\
    &+\pr^\sigma[\Choice_j=i \land \Choice_{j'}=i \land X_j^{(i)}=0]p_i (-p_i)(1-p_i)\\
    &+\pr^\sigma[\Choice_j=i \land \Choice_{j'}=i \land X_j^{(i)}=0](1-p_i) (-p_i)^2,\\
    =&\pr^\sigma[\Choice_j=i \land \Choice_{j'}=i \land X_j^{(i)}=1](p_i (1-p_i)^2 + (1-p_i) (1-p_i)(-p_i))\\
    &+\pr^\sigma[\Choice_j=i \land \Choice_{j'}=i \land X_j^{(i)}=0](p_i (-p_i)(1-p_i) + (1-p_i) (-p_i)^2)\\
    =&0.
  \end{align*}

  So all covariance terms are $0$, and we have $\variance^\sigma[\wgt_{\leq n}^{(i)}]\leq np_i(1-p_i)$.

  We now apply Theorem~\ref{thm:chebyshev}: For all $1\leq i \leq k$, and for all $a>0$,
  \[
    \pr^\sigma\left[ \left\lvert\wgt^{(i)}_{\leq n}\right\rvert \geq a\sqrt{np_i(1-p_i)} \right] \leq \frac{1}{a^2}.
  \]

  Using $\pr[X \cup Y] = \pr[X] + \pr[Y] - \pr[X \cdot Y]$, it follows that   
  \[
    \pr^\sigma\left[ \exists i, \left\lvert{\wgt^{(i)}_{\leq n}}\right\rvert \geq {a\sqrt{np_i(1-p_i)}} \right] \leq \frac{k}{a^2}.
  \]
  
  We have,
  \[
    \pr^\sigma\left[ 
        \exists i, 
        \occ_{\leq n}^{(i)} = \max_{i'}\occ_{\leq n}^{(i')}
        \land \left\lvert\frac{\wgt^{(i)}_{\leq n}}{\occ^{(i)}_{\leq n}}\right\rvert 
              \geq \frac{a\sqrt{np_i(1-p_i)}}{{\occ^{(i)}_{\leq n}}} \right] 
        \leq \frac{k}{a^2}.
  \]
  
  Here we divided the inequality by $\occ^{(i)}_{\leq n}$ (since for $n>0$, $\max_{i'}\occ_{\leq n}^{(i')}>0$);
  moreover, the probability bound holds since each event has become smaller.
  
  Notice that for all $n>0$, $\sum_{i=1}^k \occ_{\leq n}^{(i)}=n$ since $\sigma$ picks one of the arms at each step;
  so $\max_{i'}\occ_{\leq n}^{(i')} \geq n/k$ with probability 1. 
  We get
  \[
    \pr^\sigma\left[ 
      \exists i, 
      \occ_{\leq n}^{(i)} = \max_{i'}\occ_{\leq n}^{(i')}
      \land \left\lvert\frac{\wgt^{(i)}_{\leq n}}{\occ^{(i)}_{\leq n}}\right\rvert 
              \geq \frac{ak\sqrt{p_i(1-p_i)}}{\sqrt{n}} \right] 
        \leq \frac{k}{a^2}.
  \]

  Now, given $\epsilon,\eta>0$, we pick $a=\sqrt{k/\epsilon}$ so that $k/a^2\leq \epsilon$; and then $n$ large enough so that 
$\frac{ak\sqrt{p_i(1-p_i)}}{\sqrt{n}}\leq \eta/2$; this means it suffices to pick $n$ such that
\(
  \max_{1\leq i \leq k}\left(\frac{ak \sqrt{p_i(1-p_i)}}{\eta/2}\right)^2 \leq n
\), so $(\frac{ak}{\eta/2})^2 = \frac{4k^3}{\epsilon\eta^2} \leq n$ suffices. 
\end{proof}

We use Lemma~\ref{lemma:adversarial-sampling} to prove that in MEMDPs without non-trivial and non-distinguishing MCECs (for example, obtained by $\purge{\cdot}$),
after $n$ steps, we either reach a MCEC or collect a large number of samples of distinguishing transitions whose empirical average is close to their mean.
Given MEMDP $M$, let $T_M$ denote a set obtained by selecting one distinguishing transition $(q,a,q')$ for each state-action pair $(q,a)$ whose probability 
distribution differs in a pair of different environments.
We select one representative distinguishing transition $(q,a,q')$ for each pair $(q,a)$ because this simplifies the calculations.
We let $\abs{h}_{T_M} = \sum_{(q,a,q') \in T_M}\abs{h}_{q,a}$.

Let us fix $\eta$ as follows
\[
  \eta<\frac{1}{2}\min\left(\{\abs{\delta_e(q,a)(q') - \delta_{f}(q,a)(q')} \mid e,f \in E, q,q'\in Q, a \in \Act\}\setminus\{0\}\right).
\]

\def\good{\textrm{\sf Good}}
Let us define the set of \emph{good histories with $n_0$ samples}, denoted $\Good_{n_0}$, as the set of histories $h$ satisfying
\begin{itemize}
  \item $\abs{h}_{T_M}\geq n_0$,
  \item for all $t=(q,a,\_)\in T_M$ satisfying $\abs{h}_{q,a} {=} \max_{(q',a',\cdot) \in T_M} \abs{h}_{q',a'}$, 
  $h$ is a good $(t,\eta)$-classification.
\end{itemize}

  \begin{lemma}
    \label{lemma:avoiding-cecs}
    Consider an MEMDP~$M$ whose only non-distinguishing MCECs are trivial, and fix $\epsilon>0$, 
    Let $n_0 = \lceil\frac{2(\abs{Q}\abs{\Act})^3}{\epsilon\eta^2} \rceil$, and 
    $n \geq 2 p^{-2\abs{Q}} \max(\log(\frac{4}{\epsilon}),  n_0)$
    where $p$ is the smallest nonzero probability that appears in $M$.
    Then, from any starting state, and under any strategy, with probability at least $1-\epsilon$,
    within $n$ steps, the history either visits a MCEC
    (distinguishing or trivial), or belongs to $\Good_{n_0}$.
  \end{lemma}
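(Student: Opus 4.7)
My plan is to fix an environment $e$, a starting state $q_0$, and a strategy $\sigma$, and to bound the probability of the complementary event $\mathsf{Bad}$: ``within $n$ steps no MCEC is visited and $h \notin \Good_{n_0}$''. I would split this event into two pieces and bound each by $\epsilon/2$. Let $\mathsf{Bad}_1$ be the event ``no MCEC is visited in $n$ steps and $\abs{h}_{T_M} < n_0$'' and let $\mathsf{Bad}_2$ be the event ``$\abs{h}_{T_M} \geq n_0$ but for some dominant $t \in T_M$, $h$ is a bad $(t,\eta)$-classification in $M[e]$''. Clearly $\mathsf{Bad} \subseteq \mathsf{Bad}_1 \cup \mathsf{Bad}_2$.

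For $\mathsf{Bad}_2$, I would apply Lemma~\ref{lemma:adversarial-sampling} to the single-environment MDP $M[e]$ with the family $T_M$, with precision $\epsilon/2$. Since $n_0 = \lceil 2(\abs{Q}\abs{\Act})^3/(\epsilon\eta^2)\rceil > (\abs{Q}\abs{\Act})^3/((\epsilon/2)\eta^2)$, the hypothesis on $n_0$ is satisfied, so the lemma gives $\pr_{q_0}^{\sigma}[M[e], \mathsf{Bad}_2] \leq \epsilon/2$. (One subtlety: the lemma assumes that $\abs{h}_{T_M} \geq n_0$ eventually with probability~1; if this is not the case under $\sigma$, I would replace $\sigma$ after step $n$ by a strategy that keeps sampling $T_M$ forever, which does not change the restriction of the event $\mathsf{Bad}_2$ to the first $n$ steps.)

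The key step is bounding $\mathsf{Bad}_1$, where I use the assumption that all non-distinguishing MCECs of $M$ are trivial. I would define a sub-MDP $\tilde{M}[e]$ of $M[e]$ obtained by removing all states that belong to some MCEC and all state-action pairs $(q,a)$ such that $(q,a,\_) \in T_M$. I would show that $\tilde{M}[e]$ has no end-component: any such end-component $D$ would (i) avoid $T_M$, so that each $(q,a)\in D$ has the same distribution $\delta_{e'}(q,a)$ in every environment $e' \in E$, and (ii) lie entirely outside every MCEC. From (i) and strong connectedness $D$ would be a CEC, but then by maximality $D$ would be included in some MCEC, contradicting (ii). A standard quantitative argument in an MDP without end-components yields that from any state in $\tilde{M}[e]$, under any strategy, within $\abs{Q}$ steps we exit (reach a MCEC state or take a $T_M$ transition) with probability at least $p^{\abs{Q}}$.

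Partitioning the $n$ steps into $\lfloor n/\abs{Q}\rfloor$ blocks of length $\abs{Q}$, the number of blocks in which some exit occurs stochastically dominates $\mathrm{Bin}(\lfloor n/\abs{Q}\rfloor, p^{\abs{Q}})$. Conditioned on no MCEC ever being entered, every exit is a $T_M$-transition, so $\abs{h}_{T_M}$ dominates this binomial. With the choice $n \geq 2 p^{-2\abs{Q}} \max(\log(4/\epsilon), n_0)$, the expectation is $\geq 2 n_0$, and Hoeffding's inequality then gives $\pr[\abs{h}_{T_M} < n_0 \text{ and no MCEC in } n \text{ steps}] \leq \epsilon/2$. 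Adding the two contributions yields the claimed bound $\epsilon$. I expect the main obstacle to lie in making the ``quantitative escape'' step in $\tilde M[e]$ rigorous with constants matching the statement of the lemma; specifically, deriving the $p^{\abs{Q}}$ per-block escape probability from the absence of end-components, and then combining the block-level Chernoff/Hoeffding estimate with the (possibly non-Markovian) conditioning on no MCEC having been seen, so that the bound $n \geq 2 p^{-2\abs{Q}}\max(\log(4/\epsilon), n_0)$ indeed suffices.
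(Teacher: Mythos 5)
Your proof follows essentially the same route as the paper's: the same two-event decomposition with $\epsilon/2$ each, the same structural fact that outside MCECs and $T_M$-pairs there are no end-components (the paper proves this via an execution-tree argument rather than a pruned sub-MDP, but it is the same claim), the same per-block escape probability $p^{\abs{Q}}$ with a Hoeffding/coupling concentration step, and the same application of Lemma~\ref{lemma:adversarial-sampling} at precision $\epsilon/2$. The one subtlety you flag is resolved in the paper by attaching a sampling gadget to a tree-shaped unfolding $M_n'$ (modifying the MDP rather than the strategy, since a ``keep sampling $T_M$ forever'' continuation need not exist inside $M[e]$ itself), which is the standard fix and does not change the argument.
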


\begin{proof}
  We show that in all $M[e]$, under any strategy $\sigma$, from every state~$q_0$, there is a path of size at most $\abs{Q}$ compatible with the strategy that 
  reaches a MCEC or a distinguishing transition. 
  Consider first the case of a pure strategy $\sigma$.
  To prove this, towards a contraction, assume that MCECs and distinguishing transitions are not visited within $\abs{Q}$ steps under $\sigma$. Consider the execution tree that starts at~$q_0$ in $M$ under $\sigma$:
  this is a tree labeled by $Q$, in which the children of a given node at history $h$ are labeled by all possible successors $\Supp(\delta_e(\last(h),\sigma(h)))$ for some $e\in E$.
  Since all transitions are non-distinguishing, the choice of $e$ is irrelevant here.
  We build this tree and cut each branch whenever a MCEC or a distinguishing transition is seen, or a state is repeated.
  Since we assumed that MCECs and distinguishing transitions are not reachable under $\sigma$, all branches of this tree are cut only when a state is repeated.
  It follows that the set of states in this tree, together with the actions prescribed by $\sigma$ from these histories
  form a closed set of states. But then a strongly-connected subset must exist, which is a non-distinguishing CEC. This is thus included in a MCEC,  
  contradicting our assumption.
  
  If $\sigma$ is pure, then in all $M[e]$, from every history, there is a probability of at least $p^{\abs{Q}}$ of either taking a distinguishing transition,
  or visiting a MCEC within $\abs{Q}$ steps, independently of the current state. 
  If $\sigma$ is randomized, then the probability of such a single run can be smaller since $\sigma$ might assign small probabilities to its actions.
  In this case, since we are only interested in the behaviors in the first $n$ steps, we can see $\sigma$ as a \emph{mixed} strategy which consists in randomly choosing
  among a set of pure strategies that stop after $n$ steps.
  Since the above argument can be applied to each pure strategy in the support of $\sigma$ (when $\sigma$ is seen as a mixed strategy), it follows that under $\sigma$,
  there is a probability of at least $p^{\abs{Q}}$ of taking a distinguishing transition
  or visiting a MCEC within the next $\abs{Q}$ steps, as well. 

  Viewing runs as the concatenation of finite segments of size $\abs{Q}$, we call each such segment a trial.
  Consider the random Bernoulli variables $X_1,X_2,\ldots$
  such that the value of $X_i$ is 1 iff a MCEC or a distinguishing transition is visited at the $i$-th trial.

  So by Hoeffding's inequality, for all states $q_0$, $n>0$ and $t>0$,
  \footnote{Note that Hoeffding's inequality requires an independent sequence of random variables
  which is not the case of the $X_i$'s. We can nevertheless still apply this inequality 
  here using a coupling argument:
  Define $U_i$ as a sequence of independent and continuous variables uniformly distributed over $[0,1]$.
  Define the Bernoulli variable $Y_i=1$ iff $U_i \leq p^{\abs{Q}}$. Furthermore, 
  define the sequence of Bernoulli variables $\tilde{X}_1,\tilde{X}_2,\ldots$ inductively, by
  $\tilde{X}_i=1$ iff $U_i \leq \pr(X_i=1 \mid X_1=\tilde{X}_1,\ldots, X_{i-1}=\tilde{X}_{i-1})$.
  Because $\pr(X_i=1 \mid h)\geq p^{\abs{Q}}$
  regardless of the history $h$, we have $Y_i \leq \tilde{X}_i$.
  Furthermore, $\pr(X_1=x_1, \ldots,X_n=x_n) = \pr(\tilde{X}_1=x_1, \ldots, \tilde{X}_n=x_n)$
  for all $x_1,\ldots,x_n \in \{0,1\}$. It follows that Hoeffding's inequality can be applied on
  the i.i.d. sequence $Y_i$, and we get for all $A>0$, $\pr[\sum_i X_i \leq A] \leq \pr[\sum_i Y_i \leq A]$.
  }
  \[
    \pr_{q_0}^\sigma\left[\sum_{i=1}^n X_i \leq \sum_{i=1}^n \expect^\sigma[X_i]  - t \right]\leq 2e^{-2\frac{t^2}{n}}.
  \]
  Given $n>0$, we  choose here $t = np^{\abs{Q}}/2$. 
  This yields,
  \[
    \pr_{q_0}^\sigma\left[\sum_{i=1}^n X_i \leq \sum_{i=1}^n \expect^\sigma[X_i]  - np^{\abs{Q}}/2 \right]\leq 2e^{-2\frac{n^2p^{2\abs{Q}}}{4n}} \leq \epsilon/2,
  \]
  which is the case since, by taking the $\log$ of both sides,
  \begin{align*}
    -\frac{np^{2\abs{Q}}}{2} &\leq \log(\epsilon/4)\\
    \Leftrightarrow&n \geq 2\log(4/\epsilon)p^{-2\abs{Q}}.
  \end{align*}

  Because $\expect^\sigma(X_i) \geq p^{\abs{Q}}$,
  $\sum_{i=1}^n \expect^\sigma[X_i]\geq np^{\abs{Q}}$. 
  This means that with probability at least $1-\epsilon/2$, 
  $\sum_{i=1}^n X_i\geq np^{\abs{Q}}/2$.
  As $n\geq 2  \lceil \frac{2(\abs{Q}\abs{\Act})^3}{\epsilon\eta^2} \rceil p^{-2\abs{Q}}$,
  we have $\sum_{i=1}^n X_i\geq\lceil\frac{2(\abs{Q}\abs{\Act})^3}{\epsilon\eta^2}\rceil$,
  that is, 
  with probability at least $1-\epsilon/2$, 
  either a MCEC or $\lceil\frac{2(\abs{Q}\abs{\Act})^3}{\epsilon\eta^2} \rceil$ occurrences of distinguishing transitions are seen
  (which can be good or bad classifications).

  Let us write $n_0=\lceil \frac{2(\abs{Q}\abs{\Act})^3}{\epsilon\eta^2} \rceil$.
  It remains to bound the probability of visiting either a MCEC or $\Good_{n_0}$.
  Let us define a tree-shaped MDP $M_n$ from $M$ as follows. First, we unfold $M$ by stopping each branch either when a MCEC is reached, or after $n$ steps.
  Then, each leaf that belongs to a MCEC is extended with 
  fresh states and transitions so that the branch contains $n_0$ instances of distinguishing transitions.
  More precisely, we pick some distinguishing transition $(q,a,q')$ of $M$, and extend a given leaf $l_0$ of $M_n$ as follows.
  The only enabled action at $l_0$ is $a$, and it goes to $l_0'$ with probability $\delta_e(q,a,q')$ to $l_0'$ in $M_n[e]$,
  and to $l_0''$ with probability $1-\delta_e(q,a,q')$; and both $l_0',l_0''$ deterministically go to $l_1$. We repeat this until $n_0$ 
  occurrences of distinguishing transitions are obtained.
  Last, all leafs are made into absorbing states.
  
  Let $\diamond \textrm{CEC}$ denote the set of histories that reach a MCEC.
  For all $e\in E$,
  \[
    \pr_{q_0}^\sigma[M[e], \diamond \textrm{CEC} \lor \Good_{n_0}] \geq
    \pr_{q_0'}^\sigma[M_n[e], \Good_{n_0}]
  \]
  where $q_0'$ is the root of $M_n[e]$,
  since MCECs are replaced with a gadget that might not satisfy $\Good_{n_0}$ with probability 1.

  As an additional step, we obtain $M_n'$ by modifying $M_n$ as follows: we extend each leaf whose branch does not contain $n_0$ occurrences of distinguishing transitions (nor visit a MCEC),
  by adding fresh states and transitions as described above so that a total of $n_0$ distinguishing transitions is obtained at each branch.
  We get for all $ e\in E$.
  \[
    \pr_{q_0'}^\sigma[M_n[e], \Good_{n_0}] \geq \pr_{q_0'}^\sigma[M_n'[e], \Good_{n_0}] - \epsilon/2
  \]
  since the probability of the modified branches was shown to be at most $\epsilon$ above.
  
  Now, by construction, for all strategies $\sigma$ and $e\in E$,
  $n_0$ occurrences of distinguishing transitions are seen in $M_n'[e]$ with probability 1.
  By Lemma~\ref{lemma:adversarial-sampling} with $k=\abs{Q}\abs{\Act}$, applied for $\epsilon/2$, we get
  \[
    \pr_{q_0'}^\sigma[M_n'[e], \Good_{n_0}] \geq 1 - \epsilon/2.
  \]
  It follows that
  $\pr_q^\sigma[M[e], \diamond \textrm{CEC} \lor \Good_{n_0}] \geq
  \pr_{q_0'}^\sigma[M_n'[e], \Good_{n_0}] \geq 1 - \epsilon$ for all $e \in E$, as required.
\end{proof}

\subsection{Constructing Approximate Finite-Memory Strategies}
\label{section:constructing-finite-memory-strategies}

We are now ready to construct a finite-memory strategy that approximates an arbitrary strategy~$\sigma$.
We construct a finite-memory strategy for $\purge{M}$ and then transfer it to $M$ 
using Lemmas~\ref{lemma:removing-nondist-cecs-stratmap}-\ref{lemma:removing-nondist-cecs-stratmap-2}.
The finite-memory strategy we construct consists in imitating the strategy~$\sigma$ for~$n$ steps, where $n$ is defined in Lemma~\ref{lemma:avoiding-cecs}.
Because all nontrivial MCECs of $\purge{M}$ are distinguishing, when we play for $n$ steps, with high probability,
we either visit a trivial MCEC (which is either winning for all environments or losing for all environments), 
or reach a distinguishing MCEC, or observe enough samples of distinguishing transitions.
The strategy is extended arbitrarily in trivial MCECs. Inside distinguishing MCECs, it gathers samples of distinguishing transitions
as in Lemma~\ref{lemma:strategy-for-dist-mecs}, which improves the knowledge (with an arbitrarily small probability of error). The knowledge is also correctly improved with high probability if enough samples are gathered outside of MCECs. 
In both cases, the strategy switches to a finite-memory strategy for the improved knowledge constructed recursively for smaller sets of environments.

Lemma~\ref{lemma:quantitative-bounded-memory} formalizes this reasoning and gives a bound $N$ on the memory of the resulting strategy. In the memory bound, 
the term
$\lceil 8\frac{\log(8/\epsilon)}{\eta^2}^2\rceil$ comes from the application of Lemma~\ref{lemma:strategy-for-dist-mecs} for distinguishing MCECs for each subset of $E$; and the term $(2\abs{Q})^{n(\abs{E}+1)}$ corresponds to the recursive analysis, since the strategy is defined inductively for each subset of $E$.

  \begin{lemma}
    \label{lemma:quantitative-bounded-memory}
    Consider an MEMDP $M=\tuple{Q,A,(\delta_e)_{e\in E}}$, state $q \in Q$, parity objective $\varphi$.
    For all strategies $\sigma$, and $\epsilon>0$,
    there exists a strategy $\sigma'$ 
    using at most 
    $N = (2\abs{Q})^{n(\abs{E}+1)}\abs{\Act}\lceil 8\frac{\log(8/\epsilon)}{\eta^2}^2\rceil$
    memory 
    where $n = \left\lceil 2p^{-2\abs{Q}}\max(\frac{8(\abs{Q}\abs{\Act})^3}{\epsilon\eta^2}, \log(16/\epsilon))\right\rceil$,
    with $p$ the smallest nonzero probability in $M$,
    and that satisfies $\pr_q^{\sigma'}(M, \varphi) \geq \pr_q^{\sigma}(M, \varphi)-\epsilon$.  
 \end{lemma}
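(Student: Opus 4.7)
The plan is to proceed by induction on $\abs{E}$. The base case $\abs{E}=1$ is immediate: $M$ is an ordinary MDP, for which pure memoryless optimal strategies exist for parity objectives, so one can take $\sigma'$ memoryless; the bound $N\geq 1$ is trivial.

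For the inductive step, fix $\sigma$ and $\epsilon>0$ and split the budget into a constant number of pieces (say $\epsilon/4$ each). First I would apply Lemma~\ref{lemma:removing-nondist-cecs-stratmap} to obtain a strategy $\sigma^*$ on $\purge{M}$ that does at least as well as $\sigma$ in every environment. By Lemma~\ref{lemma:removing-nondist-cecs-1}, the only non-distinguishing MCECs of $\purge{M}$ are the trivial sinks $\winabsorb$ and $\loseabsorb$, so Lemma~\ref{lemma:avoiding-cecs} applies: choosing $n$ as in the lemma, with probability at least $1-\epsilon/4$ (in every environment) a history of length $n$ under any strategy either reaches some MCEC of $\purge{M}$ or lies in $\Good_{n_0}$. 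I would then define $\sigma'$ on $\purge{M}$ to imitate $\sigma^*$ during the first $n$ steps (using $(\abs{Q}\abs{\Act})^n$ Moore states to record the current history) and, afterwards, to branch on the type of the resulting history $h$: if $\last(h)\in\{\winabsorb,\loseabsorb\}$, continue trivially; if $h$ lies in a distinguishing MCEC $D$ of $\purge{M}$, invoke Lemma~\ref{lemma:strategy-for-dist-mecs} with a distinguishing transition of $D$ and partition $(K_1,K_2)$, using strategies obtained by applying the induction hypothesis to $M[K_1]$ and $M[K_2]$ with error $\epsilon/4$; and if $h\in\Good_{n_0}$ but no MCEC has been entered, pick the most-visited distinguishing state-action pair $(q,a)$ from $T_M$ and use its empirical frequency of the designated successor $q'$ to identify a strict subset $K\subsetneq E$ consistent with the observations, then continue with the finite-memory strategy obtained inductively for $M[K]$.

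For the probability analysis, I would argue componentwise in each $e\in E$. Outside the $\epsilon/4$ failure event of Lemma~\ref{lemma:avoiding-cecs}, one of the three cases above occurs. In the trivial-MCEC case the outcome matches $\sigma^*$. In the distinguishing-MCEC case Lemma~\ref{lemma:strategy-for-dist-mecs} costs at most a $(1-\epsilon/4)$ factor times the inductive guarantee, which itself loses at most $\epsilon/4$. In the $\Good_{n_0}$ case the choice of $\eta$ as strictly less than half the minimum nonzero gap between transition probabilities across environments ensures that a good $(t,\eta)$-classification uniquely selects a subset $K\subsetneq E$ containing the true environment; conditioning on this event (which holds with probability at least $1-\epsilon/4$ by the definition of $\Good_{n_0}$) the inductive hypothesis applied to $M[K]$ loses at most an additional $\epsilon/4$. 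A union bound on these events yields $\pr_q^{\sigma'}(M[e],\varphi)\geq \pr_q^{\sigma^*}(\purge{M}[e],\varphi)-\epsilon$, which combined with Lemma~\ref{lemma:removing-nondist-cecs-stratmap} gives the desired inequality once the strategy is transferred back to $M$ via Lemma~\ref{lemma:removing-nondist-cecs-stratmap-2}, costing an extra additive $\abs{Q}\abs{\Act}$ in memory.

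The memory accounting then goes as follows. Imitation of $\sigma^*$ for $n$ steps costs a factor at most $(2\abs{Q})^n$ at each recursion level (absorbing the action component and the extra sinks of $\purge{M}$ into the $2\abs{Q}$ base). Lemma~\ref{lemma:strategy-for-dist-mecs} adds a term $\lceil 8\log(8/\epsilon)/\eta^2\rceil^2$ at the leaves where distinguishing MCECs are reached. The recursion has depth at most $\abs{E}$ because the set of candidate environments strictly shrinks at each call, so writing $N_k$ for the bound on $k$ environments, the recursion $N_k\leq (2\abs{Q})^n N_{k-1}$ unrolls to $(2\abs{Q})^{n\abs{E}}$, with one additional $(2\abs{Q})^n$ factor absorbing the final transfer via Lemma~\ref{lemma:removing-nondist-cecs-stratmap-2} and the Hoeffding and action factors, giving the stated $N=(2\abs{Q})^{n(\abs{E}+1)}\abs{\Act}\lceil 8\log(8/\epsilon)/\eta^2\rceil^2$. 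The main obstacle is the probability bookkeeping: ensuring that across the three branches the sum of conditional losses is at most $\epsilon$ requires aligning the three independent $\epsilon/4$ budgets (failure of Lemma~\ref{lemma:avoiding-cecs}, imperfect classification in $\Good_{n_0}$, and imperfect sampling inside distinguishing MCECs via Lemma~\ref{lemma:strategy-for-dist-mecs}) with the recursive $\epsilon/4$ slack, and checking that in the $\Good_{n_0}$ case the identified subset $K$ is correct for every true environment consistent with $h$, not just a nominal representative.
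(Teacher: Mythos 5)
Your proposal follows essentially the same route as the paper's proof: induction on $\abs{E}$, transfer to $\purge{M}$ via Lemma~\ref{lemma:removing-nondist-cecs-stratmap}, imitation of the given strategy for $n$ steps with the same case split (trivial MCEC, distinguishing MCEC via Lemma~\ref{lemma:strategy-for-dist-mecs}, good classification in $\Good_{n_0}$, and a low-probability residual case bounded by Lemma~\ref{lemma:avoiding-cecs}), followed by transfer back via Lemma~\ref{lemma:removing-nondist-cecs-stratmap-2} and the same recursive memory accounting. The only differences are cosmetic bookkeeping choices (e.g., the paper uses an $\epsilon/8$ sub-budget inside the distinguishing-MCEC case so that the inductive loss composed with the $(1-\epsilon/4)$ factor of Lemma~\ref{lemma:strategy-for-dist-mecs} stays within $\epsilon/4$), which your plan already flags as the remaining detail to align.
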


\begin{proof}
  Given $\epsilon>0$, let    
  \begin{align*}
    &n_0=\left\lceil \frac{8(\abs{Q}\abs{\Act})^3}{\epsilon\eta^2}\right\rceil,\\
    &n = \left\lceil 2p^{-2\abs{Q}}\max(n_0, \log(16/\epsilon))\right\rceil,
  \end{align*}
  Notice that the bounds on $n$ and $n_0$ come from Lemma~\ref{lemma:avoiding-cecs} applied for $\epsilon/4$.
  Define the sequence $(g_i)_{i\geq 1}$ by $g_1 = 1$, and
  $g_i = \alpha(g_{i-1} + \beta) + \gamma$
  where $\alpha=2\abs{Q}^n$, $\beta=\lceil 8\frac{\log(8/\epsilon)}{\eta^2}^2\rceil$, and $\gamma = \abs{Q}\abs{A}$.
  Note that we have, for $i>1$, 
  $g_i = \alpha^{i-1} + (\gamma + \alpha\beta)(\frac{\alpha^{i-1}-1}{\alpha-1})$.  
  Observe that $g_i \leq \alpha^{i-1}(1+\gamma+\alpha\beta)$.

  We prove, by induction on $\abs{E}$, that for all states~$q$, strategies $\sigma$, 
  there exists a $g_{\abs{E}}$-memory strategy $\sigma'$ such that $\pr_q^{\sigma'}(M, \varphi) \geq \pr_q^{\sigma}(M, \varphi) - \epsilon$.
  
  We have $g_{\abs{E}} \leq \alpha^{\abs{E}-1}(1+\gamma+\alpha\beta) \leq \alpha^{\abs{E}}(1 + \abs{Q}\abs{A} + 2\abs{Q}^n\beta)\leq
  \alpha^{\abs{E}}(3\abs{Q}^n\abs{\Act}\beta) \leq \alpha^{\abs{E}}(2\alpha\abs{\Act}\beta) $, which is at most
  $(2\abs{Q})^{n(\abs{E}+1)}\abs{\Act}\lceil 8\frac{\log(8/\epsilon)}{\eta^2}^2\rceil$, and proves the lemma.
  
  The base case $\abs{E}=1$ is obvious since there exists optimal memoryless strategies for parity objectives
  in MDPs. Assume $\abs{E}\geq 2$.

  Let $M'=\purge{M}$ and $\sigma'$ be given by Lemma~\ref{lemma:removing-nondist-cecs-stratmap} such that 
  $\pr_{q}^\sigma(M,\varphi) \leq \pr_{q'}^{\sigma'}(M',\varphi)$ where $q' = f(q)$.
  We prove the property for $M'$ and transfer the result back to $M$ using Lemma~\ref{lemma:removing-nondist-cecs-stratmap-2}.
  More precisely, we show below that there exists a $(g_{\abs{E}}-\gamma)$-memory strategy $\sigma''$ with 
  $\pr_{q'}^{\sigma''}(M',\varphi) \geq \pr_{q'}^{\sigma'}(M',\varphi) - \epsilon$.
  It follows, by Lemma~\ref{lemma:removing-nondist-cecs-stratmap}, that there exists a $g_{\abs{E}}$-memory strategy $\sigma'''$
  for $M$  such that 
  \[
    \pr_{q}^{\sigma'''}(M,\varphi) = \pr_{q'}^{\sigma''}(M',\varphi) \geq \pr_{q'}^{\sigma'}(M',\varphi)-\epsilon \geq \pr_{q}^\sigma(M,\varphi)-\epsilon 
  \]
  which proves the result.

  We construct $\sigma''$ by imitating $\sigma'$ for $n$ steps, and stopping if a MCEC is reached (thus, either trivial or distinguishing, by Lemma~\ref{lemma:removing-nondist-cecs-1}).
  More precisely, consider history $h$ in $M'$ that starts at $q'$. We define $\sigma''(h) = \sigma'(h)$, except in the following cases
  where $\sigma''$ switches to a strategy as described below:
  \begin{enumerate}
    \item If $\last(h)$ belongs to a trivial MCEC, then $\sigma''$ is memoryless from that history (as there is only one possible action to choose).
    Notice that $\pr_{q'}^{\sigma''}(M',\varphi \mid h ) = \pr_{q'}^{\sigma'}(M',\varphi\mid h)$ since this MCEC is either winning or losing with probability 1,
    in each $e\in E$.
    
    \item Assume $\last(h)$ belongs to a distinguishing MCEC~$D$ with partition $(K_1,K_2)$. 
    Let $\vec{\beta} = \pr^{\sigma'}(M',\varphi \mid h)$, the probability values achieved from history $h$ under strategy $\sigma'$ starting with history $h$.
    One can define a strategy $\sigma_h'$ such that 
    $\vec{\beta} = \pr_{\last(h)}^{\sigma_h'}(M',\varphi)$, by $\sigma_h' : h' \mapsto \sigma'(h \cdot h')$.
    By induction applied to $M'$, $\last(h)$, $\sigma_h'$, environment set $K_i$, 
    and $\epsilon/8$,
    there exist $g_{\abs{K_i}}$-memory strategies $\sigma_i$, 
    with $\pr_{\last(h)}^{\sigma_i}(M'[K_i],\varphi)\geq \vec{\beta}{\vert}_{K_i} - \epsilon/8$.
    We apply Lemma~\ref{lemma:strategy-for-dist-mecs} to build strategy $\sigma''_h$ satisfying the following:
    \begin{equation}
      \label{eqn:quant-switch-dist-cec}
      \pr_{\last(h)}^{\sigma''_h}(M'[e], \varphi) \geq \pr_{\last(h)}^{\sigma_i}(M'[e],\varphi) - \epsilon/4 \text{ for all environments } e \in K_i.
    \end{equation}
    
    At $h$, we let $\sigma''$ switch to $\sigma''_h$.
    It follows that $\pr_{q'}^{\sigma''}(M',\varphi \mid h ) \geq \pr_{q'}^{\sigma'}(M',\varphi \mid h) - \epsilon/4$.

    \item Assume that $h$ contains $n_0$ occurrences of distinguishing state-action pairs, that is, $\abs{h}_{T_{M'}} = n_0$.
    Let $(q,a,q') \in T_M$ be a distinguishing transition with the largest number of occurrences in $h$; and let $(K_1,K_2)$ be the partition of $E$ induced by this transition.
    For each $i=1,2$, let $\sigma_i$ be the $g_{\abs{K_i}}$-memory strategy given by induction hypothesis applied to $M'$, state $\last(h)$, environment set $K_i$,
    bound $\epsilon/4$, and strategy $\sigma_h' : h' \mapsto \sigma'(h \cdot h')$ that achieves 
    $\pr_{\last(h)}^{\sigma_i}(M'[K_i],\varphi)\geq \pr_{\last(h)}^{\sigma_h'}(M'[K_i],\varphi) - \epsilon/4$ for each $i=1,2$.
    We let $\sigma''$ switch to:
    \begin{itemize}
      \item $\sigma_1$ if $\left\lvert\frac{\abs{h}_{q,a,q'}}{\abs{h}_{q,a}} - \delta_{e}(q,a)(q')\right\rvert < \eta/2$ for some $e \in K_1$,
      \item $\sigma_2$ otherwise.
    \end{itemize}
    
    The above shows that if $h$ is a good classification in $e$, then 
    $\pr_{q'}^{\sigma''}(M'[e],\varphi \mid h ) \geq \pr_{q'}^{\sigma'}(M'[e],\varphi \mid h) - \epsilon/4$.

    \item If $\abs{h} = n$ and none of the above applies, then $\sigma''$ switches to an arbitrary memoryless strategy.
    These histories that satisfy case 4 has probability at most $\epsilon/4$ by Lemma~\ref{lemma:avoiding-cecs}.
  \end{enumerate}
  
  \medskip
  Let us show that $\pr_{q'}^{\sigma''}(M',\varphi) \geq \pr_{q'}^{\sigma'}(M',\varphi) - \epsilon$.  
  To prove this, we distinguish histories $h$ according to the cases above, 
  and relate $\pr_{q'}^{\sigma''}(M',\varphi \mid h )$ and $\pr_{q'}^{\sigma'}(M',\varphi\mid h)$,
  and bound the probability of some histories $h$.
  
  Let us write
  \begin{align*}
    \pr_{q'}^{\sigma''}(M'[e], \varphi ) &= \sum_{h : \text{ case 1}} \pr_{q'}^{\sigma''}(M'[e],\varphi, h) + \sum_{h \text{ case 2}} \pr_{q'}^{\sigma''}(M'[e],h) \pr_{q'}^{\sigma''}(M'[e],\varphi \mid h)\\
                                      &+ \sum_{\substack{h : \text{ case 3}\\\text{bad classification}}} \pr_{q'}^{\sigma''}(M'[e],h) \pr_{q'}^{\sigma''}(M'[e],\varphi \mid h)\\
                                      &+ \sum_{\substack{h : \text{ case 3}\\\text{good classification}}} \pr_{q'}^{\sigma''}(M'[e],h) \pr_{q'}^{\sigma''}(M'[e],\varphi \mid h)\\
                                      &+ \sum_{h : \text{ case 4}} \pr_{q'}^{\sigma''}(M'[e],h) \pr_{q'}^{\sigma''}(M'[e],\varphi \mid h).\\
  \end{align*}
  Since $\pr_{q'}^{\sigma''}(M',h ) = \pr_{q'}^{\sigma'}(M', h)$ for histories satisfying any of the cases (because $\sigma''$ imitates $\sigma'$ until such a case occurs),
  and because the terms $\pr_{q'}^{\sigma''}(M',h )$ at the second and forth lines are each at most $\epsilon/4$, using the cases above, we get
  \begin{align*}
    \pr_{q'}^{\sigma''}(M'[e], \varphi ) &{\geq} \sum_{h : \text{case 1}} \pr_{q'}^{\sigma'}(M'[e],\varphi, h) {+} \sum_{h:\text{ case 2}} \pr_{q'}^{\sigma'}(M'[e],h) (\pr_{q'}^{\sigma'}(M'[e],\varphi{\mid} h){-}\epsilon{/}4)\\
                                      &+ \left(\sum_{\substack{h : \text{ case 3}\\\text{bad classification}}} \pr_{q'}^{\sigma'}(M'[e],h) \pr_{q'}^{\sigma'}(M'[e],\varphi \mid h)\right)-\epsilon/4\\
                                      &+ \sum_{\substack{h : \text{ case 3}\\\text{good classification}}} \pr_{q'}^{\sigma'}(M'[e],h) (\pr_{q'}^{\sigma'}(M'[e],\varphi \mid h)-\epsilon/4)\\
                                      &+ \left(\sum_{h : \text{ case 4}} \pr_{q'}^{\sigma''}(M'[e],h) \pr_{q'}^{\sigma''}(M'[e],\varphi \mid h)\right)-\epsilon/4.\\
                                      &\geq \pr_{q'}^{\sigma'}(M'[e], \varphi ) -\epsilon.
  \end{align*}
  Last, we argue that $\sigma''$ uses memory of size $g_{\abs{E}}$. 
  Strategy $\sigma''$ must store the histories until one of the four cases occur: this happens in at most $n$ steps, which means $\abs{Q}^{n}$ memory is required for this phase.
  In addition, for each history of case 2, $g_{\abs{E}-1} + g_{\abs{E}-1} + 8\frac{\log(8/\epsilon)}{\eta^2}^2$ memory states are needed by Lemma~\ref{lemma:strategy-for-dist-mecs};
  where the terms $g_{\abs{E}-1}$ are upper bounds on the memory requirement of the strategies to which
  we switch, given by induction.
  Case 3 does not require additional memory since the decision is made depending on the current history, which is already in the memory. In total, we thus need
  $\abs{Q}^n(2g_{\abs{E}-1} + \beta)$ memory states, which is at most $\alpha(g_{\abs{E}-1} + \beta) =
  g_{\abs{E}}-\gamma$.
\end{proof}

\subsection{Approximation Algorithm}\label{sec:approx_alg}

We now provide a procedure solving the gap problem with threshold $\alpha$ for parity objectives in MEMDPs.
Informally, given bound $N$, the procedure \emph{guesses} an $N$-memory strategy by solving 
a set of polynomial constraints over the reals,
and checks that the strategy ensures winning with probability at least $\alpha - \epsilon$
in all environments.
We first give the construction for reachability, then explain the extension to parity conditions.

\smallskip
\myparagraph{Reachability in MDPs}
\def\Qno{Q^{\textsf{no}}}
\def\Qquestion{Q^{?}}
Let us start by recalling the linear constraints that characterize 
reachability probabilities in single-environment MDPs under memoryless strategies.
Consider an MDP $M=\tuple{Q,A,\delta}$ and objective $\Reach(T)$.
Let $\Qno \subseteq Q$, and $\Qquestion = Q\setminus (\Qno \cup T)$.
$\Qno$ will be the set of states from which the reachability probability is $0$; it is necessary to make sure all such states are in $\Qno$ 
so that the equation given below has a unique solution.
Define the unknown $x_q$ representing the probability of reaching $T$ from $q$ under the strategy that is being guessed,
and $p_{q}(a)$ the probability of the strategy to pick action $a$ from~$q$, for $a \in A_q$.
Consider the following constraints:
\begin{equation}
  \label{eqn:mdp-reach}
  \!\!\!\!\begin{array}{lll}
    &x_q = 0 & \text{ for all } q \in \Qno, \\
    &x_q = 1 & \text{ for all } q\in T, \\   
    &x_q = \sum_{a \in A_q} p_{q}(a) \cdot \sum_{q' \in Q} \delta(q,a,q') \cdot x_{q'} & \text{ for all } q \in \Qquestion,\\
    &0 \leq x_q \leq 1 \text{ and } 0 \leq p_{q}(a) \leq 1 & \text{ for all } q \in Q, a \in A_q,\\
    &\sum_{a \in A_q} p_{q}(a) =1 & \text{ for all } q \in Q.
  \end{array}
\end{equation}

Any solution $(\vec{x}, \vec{p})$ of \eqref{eqn:mdp-reach} yields a strategy $\sigma^{\vec{p}}$, 
which is defined as picking action $a$ from state $q$ with probability $p_{q}(a)$. 
The following theorem shows that $\vec{x}$ does capture the reachability probabilities
of $\sigma^{\vec{p}}$, provided that $\Qno$ is the set of states from which 
the reachability probability is $0$. 

\begin{theorem}[Theorem 10.19, \cite{BK08}]
  \label{thm:mdp-reach-charact}
  Consider any subset $\Qno \subseteq Q$, and a solution $(\vec{x}, \vec{p})$ of \eqref{eqn:mdp-reach}.
  If for all states $q \in \Qno$, $\pr^{\sigma^{\vec{p}}}_q[M, \Reach(T)] = 0$,
  then for all $q\in Q$, $x_q = \pr^{\sigma^{\vec{p}}}_q[M, \Reach(T)]$.
  Conversely, for any memoryless strategy $\tau$, if $\Qno$ denotes the set 
  of states $q$ with $\pr^{\tau}_q[M, \Reach(T)] = 0$, then 
  \eqref{eqn:mdp-reach} has a unique solution $(\vec{x}, \vec{p})$ where $\tau = \sigma^{\vec{p}}$,
  and $x_q = \pr^{\tau}_q[M, \Reach(T)]$ for all $q \in Q$.
\end{theorem}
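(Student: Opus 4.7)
The plan is to reduce the statement to the standard Bellman-equation characterization of reachability probabilities in the finite Markov chain $M^{\sigma^{\vec{p}}}$ induced by $\sigma^{\vec{p}}$, combined with a uniqueness argument for the resulting linear system.

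First I would verify that the vector of true probabilities $\vec{y}$ defined by $y_q := \pr^{\sigma^{\vec{p}}}_q[M, \Reach(T)]$ is itself a solution of \eqref{eqn:mdp-reach} paired with the same $\vec{p}$. The boundary conditions $y_q = 1$ on $T$ hold by definition and $y_q = 0$ on $\Qno$ hold by the theorem's hypothesis. On $\Qquestion$ the recursion
\[
	y_q = \sum_{a \in A_q} p_q(a) \sum_{q' \in Q} \delta(q,a)(q')\, y_{q'}
\]
is the standard first-step decomposition obtained by conditioning on the action sampled by $\sigma^{\vec{p}}$ at $q$ and the successor visited under $\delta$. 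Hence $(\vec{y}, \vec{p})$ is a solution of \eqref{eqn:mdp-reach}.

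To prove $\vec{x} = \vec{y}$, I would study the difference $\vec{z} := \vec{x} - \vec{y}$. By subtraction, $z_q = 0$ on $T \cup \Qno$, and on $\Qquestion$ we obtain the homogeneous relation $\vec{z}|_{\Qquestion} = P\, \vec{z}|_{\Qquestion}$, where $P$ is the substochastic transition matrix of $M^{\sigma^{\vec{p}}}$ restricted to $\Qquestion \times \Qquestion$. Iterating yields $\vec{z}|_{\Qquestion} = P^n\, \vec{z}|_{\Qquestion}$ for every $n \geq 0$, and since $\vec{z}$ is bounded it suffices to show $P^n \to 0$ entrywise, equivalently that from every $q \in \Qquestion$ the chain $M^{\sigma^{\vec{p}}}$ eventually exits $\Qquestion$ with probability~$1$.

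The crux, and main obstacle, is to exclude any bottom strongly connected component of $M^{\sigma^{\vec{p}}}$ entirely contained in $\Qquestion$. Such a BSCC would be absorbing and disjoint from $T$, so each of its states would have reachability probability $0$ under $\sigma^{\vec{p}}$ and should therefore lie in $\Qno$: a contradiction provided $\Qno$ is saturated, i.e., contains all zero-probability states of $\sigma^{\vec{p}}$. In the forward direction this saturation is the implicit strength of the hypothesis (the standard statement in \cite{BK08} takes $\Qno$ to be precisely this set, which is needed since otherwise equations of the form $x_q = x_q$ can produce spurious solutions). Granting it, finite-Markov-chain theory gives $P^n \to 0$, hence $\vec{z} \equiv 0$, so $x_q = y_q = \pr^{\sigma^{\vec{p}}}_q[M, \Reach(T)]$ for every $q \in Q$. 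For the converse, given memoryless $\tau$ I would instantiate $p_q(a) := \tau(q)(a)$ and define $\Qno$ to be exactly the set of states where $\pr^{\tau}_q[M, \Reach(T)] = 0$ (computable from the induced Markov chain); then $(\vec{y}, \vec{p})$ solves \eqref{eqn:mdp-reach} by the first step, and with this saturated $\Qno$ the previous argument forces uniqueness, yielding $\sigma^{\vec{p}} = \tau$ and the claimed identity on $\vec{x}$.
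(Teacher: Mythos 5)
The paper does not actually prove this statement: it is quoted as Theorem~10.19 of \cite{BK08}, so there is no in-paper argument to compare against. Your proof is the standard one underlying that citation --- first-step analysis to show that the true probability vector $\vec{y}$ solves the system, then uniqueness via the homogeneous relation $\vec{z}|_{\Qquestion} = P\,\vec{z}|_{\Qquestion}$ and $P^n \to 0$ for the substochastic matrix of the induced chain restricted to $\Qquestion$ --- and the reasoning is sound, including the reduction of $P^n \to 0$ to the absence of a bottom strongly connected component of $M^{\sigma^{\vec{p}}}$ contained in $\Qquestion$.

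The point that deserves emphasis is the one you flag yourself at the ``crux'': the forward direction, as literally stated, assumes only $\Qno \subseteq \{q : \pr^{\sigma^{\vec{p}}}_q[M,\Reach(T)]=0\}$, and under that hypothesis alone the conclusion is false. Take $\Qno=\emptyset$ and an absorbing state $q \notin T$: the only binding constraint at $q$ is $x_q = x_q$ together with $0 \le x_q \le 1$, so $x_q = 1$ is a valid spurious solution while the true probability is $0$, and the hypothesis on $\Qno$ is vacuously satisfied. Your argument goes through only under the saturation hypothesis that $\Qno$ contains \emph{every} zero-probability state of $\sigma^{\vec{p}}$ (equivalently, that no BSCC of the induced chain lies inside $\Qquestion$), which is how \cite{BK08} actually states the result and what your converse direction instantiates. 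So you have proved the corrected statement rather than the literal one; the ``granting it'' step is not cosmetic. This matters beyond pedantry: the procedure in Section~\ref{sec:approx_alg} checks only the inclusion $\Qno_e \subseteq \{q : \pr^{\sigma^{\vec{p}}}_q[M[e],\Reach(T_e)] = 0\}$ before returning \textsf{Yes}, and by the counterexample above that check does not by itself guarantee that $x^e_{q_0}$ equals the true value achieved by $\sigma^{\vec{p}}$; one should additionally verify that $\Qno_e$ contains all zero states of $\sigma^{\vec{p}}$ in $M[e]$, which is computable from the support of the guessed strategy.
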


\myparagraph{Finite-Memory Reachability in MEMDPs}
We now show how to solve the gap problem for an instance of the quantitative reachability problem for MEMDPs.
Consider MEMDP $M=\tuple{Q,A,(\delta_e)_{e\in E}}$, objective $\Reach(T)$, a memory bound $N$,
an initial state $q_0$, and bounds $\epsilon,\alpha>0$.
We want to check whether there exists a strategy $\sigma$ such that, for all environments $e \in E$, 
we have $\pr_{q_0}^\sigma[M[e],\Reach(T)]\geq \alpha$,
or whether for all $\sigma$, there exists an environment $e \in E$ with 
$\pr_{q_0}^\sigma[M[e],\Reach(T)]< \alpha-\epsilon$.

We guess a memoryless randomized strategy on combined states $(q,i)$ for $q \in Q$ and $0\leq i< N$,
which correspond to $N$-memory strategies on $M$.
In the sequel, we write $[N] = \{0,1,\dots,N-1\}$.
We define the unknown variable $x_{q,i}^e$ for each $e \in E$, and combined state $(q,i)$
representing the probability of reaching $T$ from state $q$ and memory value~$i$ in $M[e]$, under the strategy that is being guessed.
Furthermore, define $p_{q,i}(a,i')$ for each action $a \in A_q$ and $i' \in [N]$, as the unknown representing 
the probability of the strategy picking action $a$ from $(q,i)$ and updating the memory value to $i'$.

Consider subsets $\Qno_e \subseteq Q$ for each $e \in E$,
and let $\Qquestion_e = Q \setminus (\Qno_e \cup T_e)$.
We write the following constraints in a slightly more general setting, 
where a possibly different target set $T_e$ is considered for each environment $e$
(this will be useful when generalizing to parity conditions below):
\begin{equation}
  \label{eqn:memdp-reach}
  \!\!\!\!\begin{array}{lll}
    &x_{q,i}^e = 0  & \text{ for all } e \in E, q \in \Qno_e, i \in [N], \\
    &x_{q,i}^e = 1  & \text{ for all } e \in E, q \in T_e,    i \in [N], \\
    &x_{q,i}^e = \sum_{a \in A_q} p_{q,i}(a,i') \cdot \sum_{q' \in Q} \delta_e(q,a,q') \cdot x_{q',i'}^e  & \text{ for all } e \in E, q \in \Qquestion_e, i \in [N], \\
    &0 \leq x_{q,i}^e \leq 1  & \text{ for all } e \in E, q \in Q, i \in [N], \\
    &0 \leq p_{q,i}(a,i') \leq 1  & \text{ for all } q \in Q, a \in A_q, i,i' \in [N], \\
    &\sum_{a \in A_q}\sum_{i' \in [N]} p_{q,i}(a,i') = 1  & \text{ for all } q \in Q, i \in [N],\\
    &x_{q_0}^e \geq \alpha-\epsilon  & \text{ for all } e \in E.
  \end{array}
\end{equation}

Notice how the choice of the action and memory updates $p_{q,i}$ does not depend on the environment.
The constraints \eqref{eqn:memdp-reach} simply combine $\abs{E}$ copies of \eqref{eqn:mdp-reach}
over a state space augmented with $N$ memory values. In addition 
we added the constraints $x_{q_0}^e \geq \alpha-\epsilon$ for all $e \in E$, 
which restrict the solution sets to those strategies that ensure the threshold $\alpha-\epsilon$.

\smallskip
\myparagraph{The Gap Problem for Reachability}
The full procedure is as follows. We let $T_e = T$ for all $e \in E$.
Let $N$ be the memory bound given in Lemma~\ref{lemma:quantitative-bounded-memory}.

We enumerate all possibles choices for the sets $\Qno_e$.
For each choice $(\Qno_e)_{e\in E}$, we solve the corresponding constraints~\eqref{eqn:memdp-reach}.
If there is no solution, we continue with the next choice.
Otherwise let $\sigma^{\vec{p}}$ be the $N$-memory strategy given by the solution to this equation.
If $\pr_q^{\sigma^{\vec{p}}}[M[e], \Reach(T_e)] = 0$ for each $e \in E$ and $q \in \Qno_e$, 
then we return \textsf{Yes};
otherwise we continue with the next choice $(\Qno_e)_{e\in E}$. 
We return \textsf{No} at the end of if no solution was found.

Let us show that this procedure solves the gap problem.
Assume that there exists a strategy $\tau$ such that for all environments $e \in E$, 
we have $\pr_{q_0}^{\tau}[M[e],\Reach(T)]\geq \alpha$.
By Lemma~\ref{lemma:quantitative-bounded-memory}, there exists a $N$-memory strategy $\tau'$ 
such that in all environments $e \in E$, we have $\pr_{q_0}^{\tau'}[M[e],\Reach(T)]\geq \alpha-\epsilon$.
Hence \eqref{eqn:memdp-reach} must have a solution corresponding to this strategy for some choice of the sets $(\Qno_e)_{e\in E}$, and
the procedure returns \textsf{Yes}.
Assume now that no strategy achieves the threshold $\alpha-\epsilon$. In particular, no $N$-memory strategy
achieves this threshold, and the procedure returns \textsf{No}.

We now analyze the complexity of the procedure.
The value $N$ is double exponential in the size of the input, which means that the size of \eqref{eqn:memdp-reach} is also double exponential.
Polynomial equations can be solved in polynomial space in the size of the equations~\cite{Canny-stoc1988},
so here we can solve \eqref{eqn:memdp-reach} in double exponential space.

\smallskip
\myparagraph{The Gap Problem for Parity}
We now extend the previous procedure to solve the quantitative parity gap problem based on the following observations.
In $M[e]$, any finite-memory strategy $\sigma$ induces a Markov chain. Then the probability $\pr_{q_0}^\sigma[M[e], \varphi]$
of satisfying a parity condition $\varphi$ is equal to the probability of reaching bottom strongly connected components (BSCC) 
that are winning\footnote{Recall that a BSCC is winning for a parity condition if the smallest priority of its states is even.} 
for~$\varphi$ in the induced Markov chain~\cite{BK08}. 
But the set of BSCCs only depends on the support of $\sigma$, that is, the set of state-action pairs that have positive probability.
When considering an MDP under $N$-memory strategies, the support is the set of tuples $(q,i,a,i')$ such that
from combined state $(q,i)$ the strategy has a nonzero probability of picking action $a$ and updating memory to $i'$.

We proceed as follows. Given MEMDP~$M$, initial state $q_0$, parity condition $\varphi$, and bound $N$, we enumerate 
all supports $S \subseteq Q \times [N] \times A \times [N]$. 
For each support $S$, let $T^S_e$ be the set of $\varphi$-winning BSCCs in $M[e]$ under a strategy with support $S$.
We apply the reachability procedure described above based on \eqref{eqn:memdp-reach} for the target sets $(T^S_e)_e$ 
augmented with the following constraints:
for all $(q,i,a,i') \in S$, we add the constraint $p_{q,i}(a,i')>0$, and for all others $p_{q,i}(a,i')=0$.
If the answer is \textsf{Yes} for some support $S$, then we return \textsf{Yes}; otherwise we return \textsf{No}.

This solves the gap problem:
if there is $\tau$ such that for all environments $e \in E$ we have $\pr_{q_0}^{\tau}[M[e],\varphi]\geq \alpha$,
then by Lemma~\ref{lemma:quantitative-bounded-memory} there exists a $N$-memory strategy $\tau'$ 
such that for all environments $e \in E$ we have $\pr_{q_0}^{\tau'}[M[e],\varphi]\geq \alpha-\epsilon$.
Let $S$ denote the support of the strategy $\tau'$, and let $T_e^S$ be the set of winning BSCCs in $M[e]$ under $\tau$.
So \eqref{eqn:memdp-reach}, instantiated for $S$
has a solution, for some choice of the sets $(\Qno_e)_{e\in E}$, and
the procedure returns \textsf{Yes}.
Assume now that no strategy achieves the threshold $\alpha-\epsilon$. In particular, there is no $N$-memory strategy
with any support $S$ that 
achieves this threshold, and the procedure returns \textsf{No}.

There are an exponential number of possibilities for the choice of support. Moreover, given a support~$S$,
each set $T^S_e$ can be determined in polynomial time. Overall, the procedure remains in double exponential space.

  \begin{theorem}
    The gap problem can be solved in double exponential space for MEMDPs with parity objectives.
  \end{theorem}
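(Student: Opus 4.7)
The plan is to combine the finite-memory approximation result of Lemma~\ref{lemma:quantitative-bounded-memory} with an encoding of ``there exists an $N$-memory strategy achieving threshold $\alpha-\epsilon$ in all environments'' as a sentence in the existential theory of the reals, then invoke Canny's PSPACE decision procedure~\cite{Canny-stoc1988} for such sentences. Since the memory bound $N$ is double exponential in the size of $M$, the resulting system of polynomial (in)equalities has double exponential size, and a polynomial-space algorithm applied to a double-exponential-size input runs in double exponential space overall.

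The first step is to reduce the parity gap problem to a family of reachability gap problems. For any finite-memory strategy, the probability of the parity condition equals the probability of reaching a winning BSCC of the induced Markov chain, and the BSCCs depend only on the \emph{support} of the strategy. So I would enumerate supports $S \subseteq Q\times [N] \times A \times [N]$, compute for each environment $e$ the target set $T_e^S$ of winning BSCCs (in polynomial time once $S$ is fixed), and reduce to reachability with targets $(T_e^S)_{e\in E}$. There are singly exponentially many supports in the size of $Q\times[N]\times A\times[N]$, i.e. doubly exponentially many in $|M|$, but each support can be handled with an independent call.

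For fixed $S$ and targets, I would use the polynomial constraint system~\eqref{eqn:memdp-reach} with the extra constraints $p_{q,i}(a,i') > 0$ for $(q,i,a,i')\in S$ and $p_{q,i}(a,i') = 0$ otherwise, together with the threshold constraints $x_{q_0}^e \geq \alpha - \epsilon$ for all $e \in E$. To deal with the well-known non-uniqueness of the reachability fixpoint, I would also guess the sets $\Qno_e$ of zero-probability states (again singly exponentially many choices per environment), solve the resulting system, and keep only solutions for which the guessed $\Qno_e$ is consistent with the induced strategy, as required by Theorem~\ref{thm:mdp-reach-charact}. Soundness is immediate: any accepted solution yields an $N$-memory strategy with value at least $\alpha-\epsilon$ in every environment. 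Completeness follows from Lemma~\ref{lemma:quantitative-bounded-memory}: if a strategy witnesses the \textsf{Yes}-case with threshold $\alpha$, then there is an $N$-memory $\epsilon$-approximation whose support $S$ and zero-probability sets $(\Qno_e)_{e\in E}$ are among those enumerated, and whose $(\vec p, \vec x)$ satisfies the system.

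The main obstacle, and where care is required, is bookkeeping the complexity: the number of variables and constraints in~\eqref{eqn:memdp-reach} is $O(|Q|\cdot N\cdot(|E|+|A|\cdot N))$, which is double exponential in $|M|$; each guess of $S$ and $(\Qno_e)_e$ has the same order of bits; and Canny's procedure uses space polynomial in the formula size, hence double exponential in $|M|$. The outer enumerations of $S$ and $(\Qno_e)_e$ can be performed in the same space by reusing it across guesses, so the overall algorithm remains in double exponential space, establishing the theorem.
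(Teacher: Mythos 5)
Your proposal is correct and follows essentially the same route as the paper: reduce parity to reachability by enumerating strategy supports and winning BSCCs, guess the zero-probability sets $\Qno_e$, solve the polynomial constraint system~\eqref{eqn:memdp-reach} for an $N$-memory strategy with $N$ from Lemma~\ref{lemma:quantitative-bounded-memory}, and invoke Canny's PSPACE procedure on the double-exponential-size system. The only quibble is your count of supports (subsets of a double-exponential-size set are triple-exponentially many, not double), but since the enumeration reuses space and each candidate has double-exponential description size, the space bound is unaffected.
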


\bibliographystyle{plain}
\bibliography{biblio}

\end{document}